\crefname{appsec}{Appendix}{Appendices}
\crefname{properties}{Parameter Properties}{Parameter Properties}
\crefname{property}{Property}{Properties}
\crefname{probabilities}{Parameter Setting}{Parameter Settings}
\crefname{event}{Event}{Events}
\theoremstyle{plain}
\newtheorem{theorem}{Theorem}[section]
\newtheorem{lemma}[theorem]{Lemma}
\newtheorem{corollary}[theorem]{Corollary}
\theoremstyle{definition}
\newtheorem{definition}[theorem]{Definition}
\newtheorem*{assumption*}{Assumption}
\theoremstyle{remark}
\newtheorem{remark}[theorem]{Remark}
\crefname{lemma}{Lemma}{Lemmas}
\crefname{theorem}{Theorem}{Theorems}
\crefname{definition}{Definition}{Definitions}
\crefname{fact}{Fact}{Facts}
\crefname{claim}{Claim}{Claims}
\crefname{proposition}{Proposition}{Propositions}
\newcommand{\nmet}{\widetilde{H}}
\newcommand{\poly}{\mathrm{poly}}
\newcommand{\fpras}{\mathsf{FPRAS}}
\newcommand{\fptas}{\mathsf{FPTAS}}
\newcommand{\eps}{\varepsilon}
\newcommand{\indicator}{\mathbf{1}}
\newcommand{\EE}{\mathcal{E}}
\newcommand{\Tmix}{T_{\mathrm{mix}}}
\newcommand{\dtv}{d_{\mathsf{TV}}}
\def\Prob#1{{\mathbf{Pr}\left({#1}\right)}}
\def\Exp#1{{\mathbf{E}\left[{#1}\right]}}
\def\ProbCond#1#2{{\mathbf{Pr}\left({#1} \mid {#2} \right)}}
\def\unblocked#1#2#3{F^0(#1,#2,#3)}
\def\unblockedShort#1#2{F^0_{#1}{(#2)}}
\def\singleBlockedShort#1#2{F^1_{#1}(#2)}
\def\multiBlockedShort#1#2{F^{\geq2}_{#1}(#2)}
\newcommand{\degree}[3]{d^{#1}_{#2}(#3)}
\def\lambbound{1.8089}
\def\lambboundsimple{1.809}
\def\cset{\Lambda_{t,c}}
\def\colorList#1{{L}(#1)}
\def\masterList{\mathcal{L}}
\title{Flip Dynamics for Sampling Colorings: Improving $(11/6-\varepsilon)$ Using A Simple Metric}
\author{Author(s)}
\author{Charlie Carlson\thanks{Department of Computer Science, University of California, Santa Barbara. Email: \{charlieannecarlson,vigoda\}@ucsb.edu.  Research supported in part by NSF grant CCF-2147094.}  \and Eric Vigoda$^{*}$}
\date{}
\begin{document}
\maketitle

\begin{abstract} \small\baselineskip=9pt 
    We present improved bounds for randomly sampling $k$-colorings of graphs with maximum degree $\Delta$; our results hold without any further assumptions on the graph.  The Glauber dynamics is a simple single-site update Markov chain.  Jerrum (1995) proved an optimal $O(n\log{n})$ mixing time bound for Glauber dynamics whenever $k>2\Delta$ where $\Delta$ is the maximum degree of the input graph.  This bound was improved by Vigoda (1999) to $k>(11/6)\Delta$ using a ``flip'' dynamics which recolors (small) maximal 2-colored components in each step.   Vigoda's result was the best known for general graphs for 20 years until Chen et al. (2019) established optimal mixing of the flip dynamics for $k>(11/6-\eps)\Delta$ where $\eps\approx 10^{-5}$.  We present the first substantial improvement over these results.  We prove an optimal mixing time bound of $O(n\log{n})$ for the flip dynamics when $k\geq\lambboundsimple\Delta$.  
   This yields, through recent spectral independence results, an optimal $O(n\log{n})$ mixing time for the Glauber dynamics for the same range of $k/\Delta$ when $\Delta=O(1)$.  
    Our proof utilizes path coupling with a simple weighted Hamming distance for ``unblocked'' neighbors.
\end{abstract}

\section{Introduction}
 A problem of great interest and considerable study at the intersection of theoretical computer science, discrete mathematics, and statistical physics is the random sampling of $k$-colorings of a given input graph $G$.  
Given a graph $G=(V,E)$ of maximum degree $\Delta$ and an integer $k\geq 2$, let~$\Omega$ denote the collection of proper vertex $k$-colorings of $G$, that is~$\Omega$ is the collection of assignments $X_t:V\rightarrow \{1,\dots,k\}$ where for all $(v,w)\in E$, $X_t(v)\neq X_t(w)$.  Let $\pi$ denote the uniform distribution over~$\Omega$.  The colorings problem is a natural example of a non-binary graphical model~\cite{Murphy, KFbook} and, in statistical physics, it is the zero-temperature limit of the antiferromagnetic Potts model~\cite{SalasSokal}.

We study algorithms for the approximate counting problem of estimating $\lvert \Omega \rvert$, the number of $k$-colorings, and the approximate sampling problem of generating random $k$-colorings from a nearly uniform distribution.
In particular, given a graph $G=(V, E)$ and a $\delta>0$, for the sampling problem, our goal is to sample from a distribution $\mu$ which is within total variation distance $\leq\epsilon$ of the uniform distribution $\pi$ in time polynomial in $n=|V|$ and $\log(1/\eps)$.  In the approximate counting problem, a graph $G=(V, E)$, an $\eps,$ and a $\delta>0$ are given, and the goal is to obtain a $\fpras$ to estimate $|\Omega|$, which is an algorithm to estimate $\vert\Omega\rvert$ within a $(1\pm\eps)$ multiplicative factor with probability $\geq 1-\delta$ in time $\poly(n,1/\eps,\log(1/\delta))$.  These approximate sampling/counting
 problems are polynomial-time inter-reducible to each other.  Relevant for our work, an $O(n\log(n/\eps))$ sampling algorithm yields an $O^*(n^2)$ time approximate counting algorithm~\cite{SVV:annealing,Huber,Kolmogorov}.

 The Markov chain Monte Carlo (MCMC) method is a natural algorithmic approach to approximate sampling. The Glauber dynamics (also known as the Gibbs sampler) is the simplest example of the MCMC method.  The Glauber dynamics is a Markov chain on the collection of $k$-colorings and the transitions update the coloring at a randomly chosen vertex in each step as follows.  From a coloring $X_t\in\Omega$, choose a vertex $v\in V$ and a color $c\in\{1,\dots,k\}$ uniformly at random.  If no neighbor of~$v$ has color $c$ in the current coloring $X_t$ then we recolor $v$ to color $c$ in $X_{t+1}$ and all other vertices maintain the same color $X_{t+1}(w)=X_t(w)$ for all $w\neq v$; and if color $c$ is not available for $v$ then we set $X_{t+1}=X_t$.  The Glauber dynamics is ergodic whenever $k\geq\Delta+2$; hence, the unique stationary distribution is the uniform distribution~$\pi$.

 The mixing time $\Tmix$ is the number of steps from the worst initial state $X_0$ to guarantee that the total variation distance from the stationary distribution $\pi$ is $\leq 1/4$.  A mixing time of $O(n\log{n})$ is referred to as an optimal mixing time as this matches the lower bound established by Hayes and Sinclair~\cite{HayesSinclair} for any graph of constant maximum degree~$\Delta$.

 Jerrum~\cite{Jerrum} (see also Salas and Sokal~\cite{SalasSokal}) established an optimal mixing time of $O(n\log{n})$ for the Glauber dynamics whenever $k>2\Delta$.  This was a seminal result in the development of coupling techniques, including the path coupling method of Bubley and Dyer~\cite{BubleyDyer}.  Vigoda~\cite{Vigoda} improved Jerrum's result to $k>(11/6)\Delta$ by proving $O(n\log{n})$ mixing time of the following {\em flip dynamics}, which implied $O(n^2)$ mixing time of the Glauber dynamics.

 The flip dynamics is a generalization of the Glauber dynamics, which recolors maximal two-colored components in each step.  For a coloring $X_t$, vertex $v\in V$, and color $c\in\{1,\dots,k\}$, let 
 $S_{X_t}(v,c)$ denote the set of vertices $w$ which have an alternating path on colors $(X_t(v),c)$ between~$v$ and~$w$; we refer to the set $S_{X_t}(v,c)$ as a cluster.  The flip dynamics is defined by a set of parameters $(P_i)_{i\geq 1}$ for the flip probabilities.  The dynamics operates by choosing a random vertex $v$ and color $c$, and then flipping the cluster $S_{X_t}(v,c)$ by interchanging the colors $X_t(v)$ and $c$ on the chosen cluster with probability $P_\ell/\ell$
 where $\ell=|S_{X_t}(v,c)|$.  
 
More formally, for $X_t\in\Omega$, the transitions $X_t\rightarrow X_{t+1}$ for the flip dynamics, with flip probabilities $(P_i)_{i\geq 1}$, are defined as follows:
\begin{itemize}
    \item Choose $v\in V$ uniformly at random.
    \item Choose $c \in [k]=\{1,\dots,k\}$ uniformly at random. 
    \item Let $S=S_{X_t}(v,c)$ be the cluster in $X_t$ defined by colors $\{X_t(v),c\}$ which contains $v$.  Let $\ell=|S|$ denote the size of the cluster.
    \item With probability $P_\ell/\ell$, 
    let $X_{t+1}$ denote the coloring obtained by interchanging colors $X_t(v)$ and~$c$ in $S$, and for all $w\notin S$ set $X_{t+1}(w)=X_t(w)$.  With the remaining probability, set $X_{t+1}=X_t$.
\end{itemize}
Observe that if $P_1=1$ and $P_i=0$ for all $i>1$ then the flip dynamics only recolors the selected vertex if no neighbor has the chosen color,  and is equivalent to the Glauber dynamics (see Jerrum~\cite{Jerrum}). 
In Vigoda's original work, the parameters satisfy the basic properties: $P_1=1$, $P_i\geq P_{i+1}$ for all $i\geq 1$, and $P_{j}=0$ for $j\geq 7$, see~\cref{sec:flip} for more details.   All subsequent works (including this paper) follow these broad settings but differ in the detailed setting.  

Since Vigoda's result, there was a myriad of improved results for various restricted classes of graphs, including optimal mixing on triangle-free graphs when $k>\alpha^*\Delta$ for $\Delta=O(1)$ where $\alpha^*\approx 1.763$~\cite{CGSV21,FGYZ21,CLV21} (see also~\cite{JPV,LSS,DFHV} for related results), optimal mixing on large girth graphs when $k\geq\Delta+3$~\cite{CLMM}, and further improvements for trees~\cite{MSW04}, planar graphs~\cite{HVV}, and sparse random graphs~\cite{EHSV}.

The first improvement on Vigoda's result for general graphs was 20 years later by Chen, Delcourt, Moitra, Perarnau, and Postle~\cite{CDMPP19} who proved the mixing of $O(n\log{n})$ of the flip dynamics when $k>(11/6-\eps_0)\Delta$ where $\eps_0\approx 10^{-5}$ is a fixed positive constant.  
Their result (as well as Vigoda's result~\cite{Vigoda}) was obtained for a specific setting of the flip parameters.  We present the first substantial improvement over Vigoda's result in obtaining optimal mixing of the flip dynamics when $k\geq \lambboundsimple\Delta$ for any $\Delta$.

\begin{theorem}
\label{thrm:ColoringBound}
    For all $\Delta\geq 125$, for all $k \geq \lambboundsimple \Delta$, there exists a setting of the parameters for the flip dynamics with $P_j=0$ for all $j\geq 7$, so that for any graph $G$ on $n$ vertices with maximum degree $\Delta$, the flip dynamics has mixing time $O(n\log(n))$.
\end{theorem}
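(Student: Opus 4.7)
The plan is to apply the path coupling framework of Bubley and Dyer to the flip dynamics, using the simple weighted Hamming distance $\wHamming$ alluded to in the abstract, which assigns a larger weight to a disagreement if a neighbor is ``unblocked'' (has no other neighbor sharing the relevant color) and the standard unit weight otherwise. It suffices to exhibit a coupling of one step of the flip dynamics on two colorings $X,Y\in\Omega$ differing only at a single vertex $v^{*}$, together with a setting of the parameters $(P_{i})_{i=1}^{6}$, such that $\E[\wHamming(X_1,Y_1)\mid X_0=X,Y_0=Y]\le (1-\Omega(1/n))\,\wHamming(X,Y)$. With a standard contraction argument this then gives $O(n\log n)$ mixing.

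First I will set up the coupling: for each choice of vertex $u\in V$ and color $c\in\{1,\dots,k\}$ the coupling tries to perform the ``same'' flip in both chains; the two clusters $S_{X}(u,c)$ and $S_{Y}(u,c)$ can differ only near $v^{*}$, and the coupling is designed (exactly as in Vigoda's original analysis) to maximize the probability that both chains perform symmetric moves. The moves are partitioned according to their effect on $\wHamming$: those that remove the disagreement at $v^{*}$ (``good moves,'' contributing a negative term), those that neither create nor destroy disagreements (no effect), and those that introduce new disagreements at neighbors of $v^{*}$ (``bad moves,'' contributing a positive term). The novelty is that the size of each bad contribution depends on whether the newly-disagreeing neighbor $w$ is unblocked relative to the offending color, so bad moves on unblocked neighbors cost more but are compensated by the good moves which operate precisely on the unblocked configurations.

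Next I will bound, for each neighbor $w$ of $v^{*}$, the expected net change in $\wHamming$ arising from the coupling at $w$. This requires a careful case analysis depending on $P_{i}$ for $i\in\{1,\dots,6\}$, on the degree of $w$, and on the three ``block types'' (unblocked, singly-blocked, multi-blocked) assigned to $w$. Summing the contributions over the at-most-$\Delta$ neighbors and over the good moves at $v^{*}$ yields a constraint of the form $A(\bar P)\,\Delta+B(\bar P)\,k\le 0$, which must hold for every admissible local configuration of $v^{*}$. Choosing the weight $\eta=\etaval$ for unblocked disagreements (as suggested by the macro \verb|\etaval| in the paper) and solving the resulting linear program over $(P_{1},\dots,P_{6})$ shows that the contraction inequality can be satisfied whenever $k/\Delta\ge\lambboundsimple$, provided $\Delta\ge 125$ to absorb lower-order error terms.

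The main obstacle is the case analysis in the second step: introducing differentiated weights for unblocked versus blocked neighbors creates many more scenarios than in Vigoda's uniform analysis, because the block status of a neighbor of $v^{*}$ can \emph{change} as a result of a flip elsewhere, so one must carefully track how flips on vertices at distance $\le 2$ from $v^{*}$ affect the weight assigned to each existing and potential disagreement. I expect that isolating the contribution of unblocked neighbors and showing that the good-move gain dominates this contribution is precisely where the $(11/6-\eps)$ barrier breaks, and that a clean accounting — together with an optimization of the $P_{i}$'s under the constraints $P_{1}=1$, $P_{i}\ge P_{i+1}\ge 0$, $P_{j}=0$ for $j\ge7$ — delivers the threshold $k\ge \lambboundsimple\,\Delta$.
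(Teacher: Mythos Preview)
Your high-level framework (path coupling, Vigoda's greedy coupling, a weighted Hamming distance, per-neighbor accounting, then optimization of the $P_i$) matches the paper, but there is a genuine conceptual gap in how you define and use the metric.

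You say the metric ``assigns a larger weight to a disagreement if a neighbor is unblocked.'' The paper does the opposite: for $(X_t,Y_t)\in\Lambda_{v^*}$ the weight is
\[
\nmet(X_t,Y_t)=1-\frac{\eta}{\Delta}\,\degree{0}{t}{v^*},
\]
so each unblocked neighbor \emph{lowers} the weight. This sign is not cosmetic; it is what drives the improvement. The worst configurations in Vigoda's analysis are the \emph{singly blocked} ones (Hamming-cost $2-P_3$ per neighbor). With the paper's sign, a singly blocked neighbor has a good chance of becoming unblocked in one step (its unique blocking vertex is recolored with probability at least $(k-\Delta-2)/(nk)$), and this \emph{decreases} $\nmet$ by $\eta/\Delta$; that gain, the $\alpha$-term in \Cref{lem:expected_from_blocked}, is precisely what pushes the singly blocked bound below $11/6$. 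With your sign, the same transition would \emph{increase} the metric, and the extremal case gets worse rather than better.

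The second piece you are missing is the specific trade-off that controls the opposite transition (unblocked $\to$ blocked). You correctly anticipate that flips at distance $2$ can change a neighbor's block status, but you do not say how this cost is paid for. In the paper this is \Cref{lem:expected_to_unblocked}: for each $w\in N(u)\setminus\{v^*\}$ one bounds
\[
\Prob{X_{t+1}(w)\in\{X_t(v^*),Y_t(v^*)\}\text{ and }v^*\text{ not recolored}}-\Prob{w\in\unblockedShort{t+1}{u}}\le \frac{1+P_2}{nk},
\]
i.e.\ the probability that $w$ blocks $u$ is offset, vertex by vertex, by the probability that $w$ is an unblocked neighbor of $u$ once $u$ itself becomes a disagreement. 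This pairing is the crux of the argument and is what allows $\eta$ to be taken as large as $\etaval$; without it the $\beta$-term (the cost of losing unblocked neighbors) would swamp the $\alpha$-gain. Your proposal does not identify this compensation, and a straight per-neighbor bound on ``block-status changes'' without it will not close.

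In short: flip the sign of the metric, and replace the vague ``good moves compensate bad moves'' with the explicit two-term bound of \Cref{lem:expected_to_unblocked}; then the per-color case analysis (\Cref{lem:dc3,lem:extremal_cases,lem:all_unblocked,lem:all_singly_blocked,lem:all_multiblocked}) and the final optimization go through essentially as you sketched.
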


The flip probabilities are presented in \cref{sec:flip}.  
Note we use a universal setting of the flip probabilities for all $\lambboundsimple\leq k/\Delta$ that differ from those used in \cite{CDMPP19,Vigoda}.

As in \cite{Vigoda,CDMPP19} this implies polynomial mixing of the Glauber dynamics for the same range of $k/\Delta$.  In particular, by comparison of the spectral gaps of the transition matrices, it implies~$O(n^2)$ mixing time of the Glauber dynamics where the hidden constant is polynomial in~$\Delta$ and~$k$.  Moreover, recent work of~\cite{BCCPSV22,Liu21} utilizing spectral independence~\cite{CLV21,ALO}, implies $O(n\log{n})$ mixing time of the Glauber dynamics when~$\Delta$ is constant; the same result held for the previous work of~\cite{CDMPP19} for the corresponding range of parameters $k$ and $\Delta$.

\begin{corollary}
\label{cor:main-Glauber}
    For all $\Delta\geq 125$, all $k \geq \lambboundsimple \Delta$, there exists a constant $C=C(\Delta,k)$ such that for any  and for any graph $G$ on $n$ vertices with maximum degree $\Delta$, the mixing time of the Glauber dynamics is $\leq Cn\log{n}$.
\end{corollary}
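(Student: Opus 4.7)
The plan is to follow the standard two-step reduction established in prior work, which is exactly what \cite{CDMPP19} did for their slightly smaller range of $k/\Delta$. Step one is to extract spectral independence of the uniform distribution $\pi$ on proper $k$-colorings from the flip-dynamics analysis of \cref{thrm:ColoringBound}. Step two is a black-box application of the spectral-independence-to-Glauber-mixing machinery of \cite{CLV21,BCCPSV22,Liu21}, which, in the bounded-degree regime $\Delta=O(1)$, upgrades spectral independence at every pinning into an $\Omega(1/n)$ modified log-Sobolev constant for the Glauber dynamics, and hence into $O(n\log n)$ mixing with a constant $C = C(\Delta)$.

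For step one, the cleanest route is to read spectral independence directly off the path coupling that will be used to prove \cref{thrm:ColoringBound}. A one-step contractive coupling in the weighted Hamming metric $\wHamming$ for a single disagreement between two colorings $X,Y$ bounds, in expectation, the propagation of a pinned disagreement from one vertex to the rest of the graph. This is precisely the total-influence bound that yields spectral independence with an absolute constant, independent of $n$. An important subtlety is that spectral independence must hold on every pinned sub-instance: fix an arbitrary $S\subseteq V$ together with a partial coloring $\sigma\colon S\to[k]$, and consider the conditional measure on the remaining vertices. Pinning is equivalent to deleting $S$ and removing the pinned colors from the lists of its neighbors; the residual instance still has maximum degree at most $\Delta$, and every vertex has at least $k-\Delta \geq (\lambboundsimple - 1)\Delta > 0$ available colors. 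Since the path coupling of \cref{thrm:ColoringBound} is designed uniformly for all graphs of maximum degree at most $\Delta$ with $k\geq\lambboundsimple\Delta$, the contraction — and therefore spectral independence — persists on every pinning.

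Step two is then a direct invocation of \cite{BCCPSV22,Liu21}: whenever the Gibbs distribution of a spin system on a graph of maximum degree $\Delta=O(1)$ is spectrally independent (with an $O(1)$ bound) on all pinnings, the Glauber dynamics has modified log-Sobolev constant $\Omega(1/n)$, giving a mixing time of $Cn\log n$ with $C$ depending only on $\Delta$ and the spectral-independence constant. The main obstacle, and the only quantitative work, is to verify that the path coupling in the proof of \cref{thrm:ColoringBound} is genuinely one-step contractive in the sense required by the spectral-independence framework, with a contraction rate that survives under arbitrary pinnings; once this is checked, the $O(n\log n)$ Glauber mixing time follows verbatim as in \cite{CDMPP19}. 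The weaker $O(n^2)$ bound for arbitrary (not necessarily constant) $\Delta$ can be recovered separately by the classical comparison of Dirichlet forms between the flip dynamics and Glauber, as referenced in \cref{sub:corollary}.
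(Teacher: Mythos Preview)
Your proposal is correct and follows essentially the same two-step route as the paper: derive spectral independence for all pinnings from the contractive path coupling underlying \cref{thrm:ColoringBound}, then invoke \cite{CLV21,BCCPSV22,Liu21} to obtain $O(n\log n)$ Glauber mixing for bounded $\Delta$ (with the $O(n^2)$ bound via Dirichlet-form comparison as a side remark). The one point the paper makes more explicit is that persistence under pinning is \emph{not} an automatic consequence of the $k$-coloring coupling working ``uniformly for all graphs of maximum degree at most $\Delta$ with $k\ge 1.809\Delta$'': a pinned instance is a \emph{list}-coloring instance, so one must verify that the entire coupling analysis (the metric $\nmet$, the per-color bounds, etc.) extends to list colorings with $|L(v)|\ge 1.809\,\deg(v)$, which the paper asserts following \cite[Section~6]{CDMPP19}.
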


A recent algorithm of Chen, Feng, Guo, Zhang, and Zou~\cite{CFGZZ} uses  our coupling proof to obtain a deterministic approximate counting algorithm ($\fptas$) when~$k$ and~$\Delta$ are constant, see~\cref{rem:fptas} for more details.

\subsection{Proof Overview}

Our proof of \cref{thrm:ColoringBound} utilizes a novel distance metric described here at a high level.   

Jerrum's bound of $k>2\Delta$ can be proved using path coupling in which we consider a pair of configurations $X_t,Y_t$ that differ at exactly one vertex, say $v^*$.  We then analyze the expected Hamming distance of $X_{t+1},Y_{t+1}$ after one step of a coupled transition $(X_t,Y_t)\rightarrow (X_{t+1},Y_{t+1})$.  The coupling in this setting is fairly simple as it is the identity coupling (i.e., both chains attempt the same (vertex, color) pair $(v,c)$) for the Glauber update except when the updated vertex $v$ is a neighbor~$w$ of $v^*$; in which case we couple trying to recolor $w$ to color $X_t(v^*)$ in one chain with color $Y_t(v^*)$ in the other chain.  Under this coupling, there is at most one coupled transition per neighbor which can increase the Hamming distance by at most one, and this yields the $k>2\Delta$ bound.

Vigoda's result for $k>(11/6)\Delta$ uses the same path coupling framework to analyze the expected Hamming distance for a pair $X_t, Y_t$ that differ at a single vertex $v^*$.  In flip dynamics, the coupling is more complicated than in Jerrum's analysis because of additional moves.

Recall that the transitions of the flip dynamics correspond to flipping maximal 2-colored components, where flipping refers to interchanging the $2$ colors.  An alternative and equivalent view of the transitions of the flip dynamics is as follows.  For $X_t\in\Omega$, consider the collection of all clusters (where a cluster is a maximal 2-colored component); note that there are at most $nk$ clusters.  Choose a cluster $S$ with probability $P_{|S|}/(nk)$ and then flip $S$ to obtain $X_{t+1}$ (with the remaining probability set $X_{t+1}=X_t$).

We give a brief high-level overview of Vigoda's coupling $(X_t,Y_t)\rightarrow (X_{t+1},Y_{t+1})$ which is the one-step coupling that for a Hamming distance one pair of cooringss minimizes the expected Hamming distance at time $t+1$ and hence is called the greedy coupling in~\cite{CDMPP19}.
Consider a pair $X_t,Y_t$ that differ at a single vertex $v^*$, thus $X_t(w)=Y_t(w)$ for all $w\neq v^*$.
The coupling uses the identity coupling for all clusters $S$ that are the same in chains $X_t$ and $Y_t$. This means that the cluster $S$ is flipped in both chains or in neither chain.  The only nontrivial couplings are for those clusters that appear in only one chain (and not in the other chain).  

What are these clusters that appear in exactly one chain? They are clusters that include a neighbor $w$ of $v^*$ and a color $c'$ which is $X_t(v^*)$ or $Y_t(v^*)$.  In other words, if we try to recolor some $w\in N(v^*)$ to a color $c'\in\{X_t(v^*),Y_t(v^*)\}$ then this yields a different cluster in the two chains, as the cluster includes $v^*$ in one chain but not in the other chain.  These are the clusters that use a nontrivial coupling.  Moreover, this coupling depends on the current color $c=X_t(w)=Y_t(w)$ of the neighbor $w$ of $v^*$; we partition the clusters involving color $c$ into a set $\mathcal{D}_{t,c}$ and clusters within~$\mathcal{D}_{t,c}$ are coupled with other clusters in the same set $\mathcal{D}_{t,c}$.  

Vigoda demonstrated a choice of the flip probabilities and a coupling so that the expected increase in the Hamming distance is $\leq 5/6$ in an amortized cost per neighbor $w$ of $v^*$, which yields the bound $k>(11/6)\Delta$.  Chen et al.~\cite{CDMPP19} identified~$6$ extremal configurations in Vigoda's analysis (these are the configurations that maximize the expected increase in the Hamming distance) and presented a slightly different setting of the flip probabilities with only two extremal configurations.
They considered a weighted Hamming distance where for every neighbor $w\in N(v^*)$, if the local configuration around $w$ is different from the two extremal configurations, then the definition of the distance (between $X_t$ and $Y_t$) is decreased by $\eta/\Delta$ for a fixed constant $\eta$ where $1/2>\eta>0$. 
Using Vigoda's greedy coupling with this new metric, they established that the expected distance decreases when $k>(11/6-\eps_0)\Delta$ where $\eps_0\approx 10^{-5}$. 

We take a complementary approach.  Whereas Chen et al.~\cite{CDMPP19} reweight the worst configurations (or equivalently all non-worst case configurations), we instead consider a particularly ``good'' configuration. Namely, we consider the local configuration where the neighbor $w$ is {\em unblocked}, which means that the colors $X_t(v^*)$ and $Y_t(v^*)$ do not appear in $N(w)\setminus\{v^*\}$, see~\cref{fig:examples}.  
Unblocked neighborhoods are the best local configurations for the greedy coupling (with respect to the Hamming distance).  For unblocked neighbors $w$, the potentially problematic recolorings of $w\in N(v^*)$ with colors $X_t(v^*)$ or $Y_t(v^*)$ can be coupled with flips of clusters of size 2 (containing~$v^*$ and $w$).  Consequently, the expected increase in the Hamming distance for $w$ is $1-P_2$, significantly less than $5/6$ for any setting of the flip probabilities considered.

We choose the flip probabilities to optimize for this new metric, which results in a setting for the flip probabilities that are suboptimal (with respect to Hamming distance) for the extremal configurations considered by Chen et al.~\cite{CDMPP19}.
However, for our choice of distance metric, these extremal configurations improve as they have a reasonable probability of moving to the unblocked configurations, yielding a decrease in the distance.

An essential aspect of our proof is that if a neighbor $w$ is unblocked, then it may have a considerable probability of becoming blocked (which increases the distance with respect to our new metric); however, in this case, if $w$ itself becomes a disagreement, then its neighbors will be unblocked (and hence this new disagreement at $w$ has a smaller weight in our new metric).  This is the crucial trade-off in our argument: For an unblocked neighbor $w$, either $w$ is unlikely to become blocked, or if it becomes a disagreement, it has many unblocked neighbors.
Our overall proof is of a similar technical level of difficulty as in~\cite{CDMPP19} but yields a substantially improved bound of $k \geq \lambboundsimple \Delta$.

We present some basic definitions, including a more formal definition of the flip dynamics, and the path coupling lemma in~\cref{sec:prelim}.  In~\cref{sec:new-metric} we define our new metric.  We then present Vigoda's greedy coupling, analyze our new metric, and prove~\cref{thrm:ColoringBound}  in~\cref{sec:coloring_bound_proof}.

\section{Preliminaries}
\label{sec:prelim}

In this section we detail the basic definitions and concepts that are required background for our proofs.

For a graph $G = (V,E)$ with vertex set $V$ and edge set $E$, 
for $v\in V$, let $N(v)$ denote the neighbors of~$v$, and let $d(v)=|N(v)|$ denote the degree of $v$ in $G$.  
Let $\Delta = \max_{v\in V} d(v)$ be the maximum degree.

\subsection{List Colorings}

We prove our results in the more general context of list colorings.
Fix a graph $G = (V,E)$ and 
a list $\colorList{v}$ of colors for each $v \in V$.
A {\em list labeling} of $G$ is a function $\sigma$ that maps each vertex $v \in V$ to a color $\sigma(v)\in\colorList{v}$.
A list labeling is called a {\em list coloring} if for all $(u,v) \in E$, $\sigma(u) \neq \sigma(v)$. 
For any positive integer $k$, let $[k]=\{1,\dots,k\}$.
Observe that if $\colorList{v} = [k]$ for all $v \in V$, then a list labeling is a $k$-labeling and a list coloring is a $k$-coloring.

In the remainder of this paper we will work with the general concept of list colorings and list labelings.
Let $\masterList = \bigcup_{v \in V} \colorList{v}$,
let $\Omega$ be the collection of all list labels of $G$, and
let $\Omega^*\subset\Omega$ be the collection of list colorings.

For a pair $X_t,Y_t\in\Omega$, let $H(X_t,Y_t)=|\{v\in V: X_t(v)\neq Y_t(v)\}|$ denote the Hamming distance between $X_t$ and $Y_t$.  
For $v\in V$, let $\Omega^2_{v}\subset\Omega^2$ denote the pairs $(X_t,Y_t)\in\Omega^2$ where $X_t(v) \neq Y_t(v)$ and $X_t(u) = Y_t(u)$ for all $u \neq v$; thus, $\Omega^2_v$ is the set of pairs of labelings that only differ at vertex~$v$.
Note, $\bigcup_{v \in V} \Omega^2_{v}$ is the set of all pairs that differ at exactly one vertex.
We will refer to pairs $(X_t,Y_t)$ that differ at exactly one vertex as neighboring colorings.  
For simplicity, we use the term coloring throughout since the distinction between colorings and labelings is clear from the notation $\Omega$ vs.~$\Omega^*$.

\subsection{Clusters, Flip Dynamics, and Mixing Time}

For a coloring $X_t\in\Omega$, vertex $v\in V$, and color $c\in \masterList$, let $\widehat{S}_{X_t}(v,c)$ denote the set of vertices reachable from $v$ by a $(X_t(v),c)$-alternating path in $X_t$.  Note that $X_t$ may not be a proper coloring, but we still require that the colors alternate along the path.
If for all $w\in \widehat{S}_{X_t}(v,c)$ it holds that $\{c,X_t(v)\}\subset L(w)$ then 
we set $S_{X_t}(v,c)=\widehat{S}_{X_t}(v,c)$, and otherwise we set $S_{X_t}(v,c)=\emptyset$.  Note that if the flip of the cluster $\widehat{S}_{X_t}(v,c)$ is invalid (namely, for some $w\in \widehat{S}_{X_t}(v,c)$ the new color is not in its list $L(w)$) then we have set the cluster to be the empty set.  Hence, all clusters $S_{X_t}(v,c)\neq\emptyset$ can be flipped in $X_t$ (i.e., if $X_t\in\Omega$ then $X_{t+1}\in\Omega$ where $X_{t+1}$ is obtained from $X_t$ by interchanging the colors $X_t(v)$ and $c$ on the set $S_{X_t}(v,c)$).

For all $v\in V$, note that $S_{X_t}(v,X_t(v)) = S_{Y_t}(v,Y_t(v)) = \{v\}$ and if $c \not \in \colorList{v}$ then $S_{X_t}(v,c) = S_{Y_t}(v,c) = \emptyset$.
For a coloring $X_t\in\Omega$ and a cluster $S=S_{X_t}(v,c)$, we refer to flipping cluster $S$ with the operation of 
interchanging colors $X_t(v)$ and $c$ on the set $S$; let $X_{t+1}$ denote the resulting coloring. 
If $X_t$ is a proper coloring, then $X_{t+1}$ is a proper coloring.
Moreover, if $X_t, X_{t+1}$ are proper colorings then by flipping $S'=S_{X_{t+1}}(v,X_t(v))$ in $X_{t+1}$ we obtain $X_t$, and hence the operation is symmetric on $\Omega^*$.

We can now define the flip dynamics for the more general setting of list colorings.  Consider probabilities $(P_i)_{i\geq 1}$.  
For $X_t\in\Omega$ the transitions $X_t\rightarrow X_{t+1}$ of the flip dynamics are defined as follows:
\begin{itemize}
    \item Choose $v\in V$ uniformly at random.
    \item For each $c\in L(v)$, let $S=S_{X_t}(v,c)$ with probability $1/k$. Let $\ell=|S|$ denote the size of the cluster. 
    \item If $\ell \geq 1$ then with probability $P_{\ell} /\ell$, 
    let $X_{t+1}$ denote the coloring obtained by interchanging colors $X_t(v)$ and~$c$ in $S$ (and for $w\notin S$ set $X_{t+1}(w)=X_t(w)$). 
  \item Otherwise, let $X_{t+1}=X_t$.
\end{itemize}
Note, that in the second step, if the flip of the cluster $\widehat{S}_{X_t}(v,c)$ is invalid then the corresponding set $S_{X_t}(v,c)=\emptyset$. 
Moreover, when $P_1>0$ and $k\geq\Delta+2$, then the unique stationary distribution of the flip dynamics is the uniform distribution $\pi$ over $\Omega^*$, which is the set of proper list colorings; note, the states in $\Omega\setminus\Omega^*$ have zero probability in the stationary distribution.

Our interest is the mixing time $\Tmix$, which measures the speed of convergence to the unique stationary distribution $\pi$ from the worst initial state $X_0\in\Omega$.  For a pair of distributions $\mu,\pi$ on $\Omega$, the total variation distance is defined as $\dtv(\mu,\pi)=\frac{1}{2}\sum_{X_t\in\Omega} |\mu(X_t)-\pi(X_t)|$.  For $\eps>0$, define the mixing time as: \[ \Tmix(\eps)=\max_{X_0\in\Omega}\min\{t: \dtv(P^t(X_t,\cdot),\pi)\leq\eps\},
\]
where $\pi$ is the stationary distribution.
We will often refer to $\Tmix=\Tmix(1/4)$ as the mixing time since $\Tmix(\eps)\leq \Tmix\times \lceil\log_2(1/\eps)\rceil$ for any $\eps>0$.

\subsection{Path Coupling}

We will utilize the coupling method to upper-bound the mixing time. For a pair of states $X_t, Y_t\in\Omega$, a coupling for the flip dynamics is a joint evolution $(X_t,Y_t)\rightarrow (X_{t+1},Y_{t+1})$ such that when the individual transitions $(X_t\rightarrow X_{t+1})$ and $(Y_t\rightarrow Y_{t+1})$ are viewed in isolation of each other then they are identical to the flip dynamics, see~\cite{Jerrum:notes} for a more detailed introduction.

We will bound the mixing time using the path coupling framework of Bubley and Dyer~\cite{BubleyDyer}. 

\begin{theorem}\cite{BubleyDyer,DyerGreenhill}
\label{thm:path-coupling}
    Consider a Markov chain with transition matrix~$P$, state space~$\Omega$, and unique stationary distribution~$\pi$. 
    Let $\Sigma\subset\Omega^2$ denote a subset of pairs of states such that 
    the graph $(\Omega,\Sigma)$ is connected.  Consider weights $w(X_t,Y_t)$ defined for all pairs $(X_t,Y_t)\in\Sigma$.  Assume there exists a constant $1/2 \leq C\leq 1$ where $w(X_t,Y_t)\in [C,1]$ for all $(X_t,Y_t)\in\Sigma$.  For arbitrary pairs $(X_{t+1},Y_{t+1})\in\Omega^2$, define $w(X_{t+1},Y_{t+1})$ by the length of the shortest path in the graph $(\Omega,\Sigma)$ where edges $(X_t,Y_t)\in\Sigma$ have weight $w(X_t,Y_t)$.

    If there exists a $\delta>0$ and for all $(X_t,Y_t)\in\Sigma$ there exists a coupling $(X_t,Y_t)\rightarrow (X_{t+1},Y_{t+1})$ where:
    \[  \Exp{w(X_{t+1},Y_{t+1})}\leq (1-\delta)w(X_t,Y_t),
    \]
    then the mixing time is bounded as
    $ \Tmix(\eps) \leq O\left({\log(n/\eps)}/{\delta}\right)$.
\end{theorem}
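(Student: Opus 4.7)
The plan is to use the one-step couplings furnished by the hypothesis on $\Lambda$ to build a contractive coupling on all of $\Omega^2$, iterate it, and then apply the standard coupling lemma to convert the expected-distance contraction into a mixing-time bound.

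First, I would extend the given edge-level couplings to an arbitrary pair $(X_t,Y_t)\in\Omega^2$. Fix a shortest path $X_t = Z_0, Z_1, \dots, Z_r = Y_t$ in $(\Omega,\Lambda)$, so that $w(X_t, Y_t) = \sum_{i=0}^{r-1} w(Z_i, Z_{i+1})$. For each $i$, the hypothesis supplies a joint distribution $\CC_i$ on pairs of next-states whose two marginals are $P(Z_i,\cdot)$ and $P(Z_{i+1},\cdot)$. I would chain these couplings sequentially: sample $Z_0^{\prime}\sim P(Z_0,\cdot)$, then $Z_1^{\prime}$ from the $\CC_0$-conditional given $Z_0^{\prime}$, then $Z_2^{\prime}$ from the $\CC_1$-conditional given $Z_1^{\prime}$, and so on. This construction is well-defined because the $Z_i^{\prime}$-marginal of each of $\CC_{i-1}$ and $\CC_i$ equals $P(Z_i,\cdot)$, and the projection onto $(Z_0^{\prime}, Z_r^{\prime}) =: (X_{t+1}, Y_{t+1})$ is then a valid one-step coupling from $(X_t, Y_t)$.

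Next I would prove the one-step contraction on all of $\Omega^2$. Since the extended $w$ is a metric (being a shortest-path distance on $(\Omega,\Lambda)$), the triangle inequality combined with linearity of expectation and the edge-level hypothesis yield
\[
\Exp{w(X_{t+1}, Y_{t+1})} \;\leq\; \sum_{i=0}^{r-1} \Exp{w(Z_i^{\prime}, Z_{i+1}^{\prime})} \;\leq\; (1-\delta) \sum_{i=0}^{r-1} w(Z_i, Z_{i+1}) \;=\; (1-\delta)\, w(X_t, Y_t).
\]
Iterating with the Markov property and writing $D = \max_{X, Y \in \Omega} w(X, Y)$ for the $w$-diameter (which is $\poly(n)$ in the intended applications since each edge has weight at most $1$ and $\Lambda$ connects the state space in polynomially many hops), we obtain $\Exp{w(X_t, Y_t)} \leq (1-\delta)^t D$.

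Finally, since $w(X, Y) \geq C \geq 1/2$ whenever $X \neq Y$ (any path uses at least one $\Lambda$-edge of weight $\geq C$), Markov's inequality gives $\Prob{X_t \neq Y_t} \leq \Exp{w(X_t, Y_t)}/C$. Running the coupling from arbitrary $X_0$ and $Y_0 \sim \pi$ (so $Y_t \sim \pi$ for all $t$), the standard coupling lemma yields $\dtv(P^t(X_0,\cdot),\pi) \leq \Prob{X_t \neq Y_t} \leq (D/C)(1-\delta)^t \leq (D/C)\exp(-\delta t)$, so taking $t \geq \delta^{-1} \log(D/(C\eps))$ gives $\Tmix(\eps) = O(\log(n/\eps)/\delta)$. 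The main technical obstacle is the chaining step: one must verify that sequentially drawing $Z_i^{\prime}$ from the conditional of $\CC_{i-1}$ preserves the correct marginal at every subsequent stage. The reconciling fact is precisely that both $\CC_{i-1}$ and $\CC_i$ have $Z_i^{\prime}$-marginal equal to $P(Z_i,\cdot)$, after which the triangle inequality for shortest-path metrics mechanically transfers contraction from single edges in $\Lambda$ to all pairs in $\Omega^2$.
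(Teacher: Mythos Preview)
The paper does not prove this theorem: it is stated as a known result with citations to Bubley--Dyer and Dyer--Greenhill, and used as a black box. Your proposal is the standard proof of path coupling (chain the edge-level couplings along a shortest path, use the triangle inequality to transfer contraction to all pairs, iterate, then apply Markov's inequality and the coupling lemma), and it is correct.
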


\section{New Metric Definition}
\label{sec:new-metric}
This section introduces our new metric, which is the heart of the proof of~\cref{thrm:ColoringBound}. 
First, we describe our flip probabilities and identify some of their key properties. Then, we introduce some new notation. Finally, we define the new metric.

\subsection{Cluster and Flip Dynamics}
\label{sec:flip}

To prove~\cref{thrm:ColoringBound}, we use the following setting of flip probabilities, which we will refer to as \cref{flip-setting}:
\begin{equation} 
    \label[probabilities]{flip-setting}
P_1 := 1,  
P_2 := 0.324, 
P_3 := 0.154, 
P_4 := 0.088, 
P_5 := 0.044, 
P_6 := 0.011,
\mbox{ and } P_i := 0 \mbox{ for } i \geq 7. 
\end{equation}
We also include the following variable which we will use when defining our new metric:
\begin{equation}
\label{def:eta}
    \eta := (P_2 - P_3)\frac{\Delta}{2k}.
\end{equation}
We will assume the following properties, which we will refer to as \cref{flip-properties}:
\label[properties]{flip-properties}
\begin{align}
    P_1 &= 1, \ \ P_2\leq 1/3, \mbox{ and } P_7 = 0 \label[property]{fp:p1_p2_bound}\tag{FP0}
    \\  
    P_j &\leq (2/3)P_{j-1} \mbox{ for all } j \geq 3,
    \label[property]{fp:decreasing}\tag{FP1} \\
    (1-P_2) &\geq (P_2 - P_3) \geq 2(P_3 - P_4) \label[property]{fp:second_gap}\tag{FP2} \\
    2(P_3 - P_4) &\geq (j-1)(P_j - P_{j+1}) \mbox{ for } j \geq 4 \label[property]{fp:third_gap}\tag{FP3} \\
    (P_4 - P_5) &\geq (P_5 - P_6) \geq (P_6 - P_7) \label[property]{fp:forth_gap}\tag{FP4} \\
    2P_2 &\leq 1 - 4P_4 \label[property]{fp:P_2_bound}\tag{FP5}\\
    2P_3 &\leq 4P_4 - P_5 \label[property]{fp:P_3_bound} \tag{FP6} \\
    \eta &\leq (6/19)P_2 \label[property]{fp:eta_bound} \tag{FP7}
\end{align}
Note that these properties hold for \cref{flip-setting} when $k \geq 9/5$ and the settings originally considered by Vigoda. 
As we will see later in our analysis, this assumption allows us to quickly identify those initial configurations with the worst expected change for our new metric (introduced in the following sections).
The relevant lemma statements include any further assumptions on the flip probabilities.

Our setting for the flip probabilities also differs from those in previous works~\cite{Vigoda,CDMPP19}.
One of the key differences is that the previous work sets $P_3 \approx 1/6$.
In particular, the setting $P_3 = 1/6$ was critical in \cite{Vigoda} and any setting of $P_3\neq 1/6$ yields a worse bound of $k>C\Delta$ for a constant $C>11/6$ using Vigoda's analysis.  In~\cite{CDMPP19} they also fix $P_3=1/6$ for their analysis with a modified metric; the argument using a variable length coupling sets slightly above $1/6$, namely $P_3=0.166762>1/6$.  In contrast, our setting of $P_3<1/6$ is somewhat counterintuitive at first glance as it increases the expected change in Hamming distance for the extremal configurations in Vigoda's analysis, but the introduction of our new metric offsets this effect. 

\subsection{Our New Metric}

    \begin{figure}[ht]
        \centering
        \includegraphics[scale=.5]{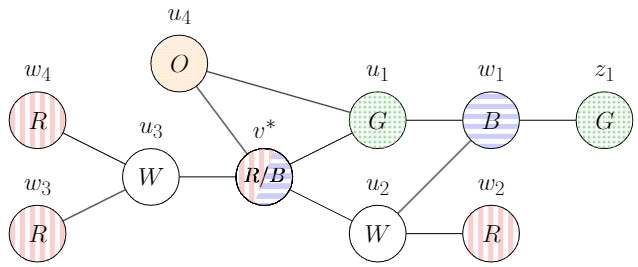}
        \caption{A small graph showing how vertices can be unblocked, singly blocked, and multiblocked. The vertex $u_1$ is singly blocked with respect to $v^*$ since it has a neighbor $w_1$ that is colored $B$. The vertex $u_2$ is multiblocked with respect to $v^*$ since it has a $B$ and $R$ neighbor, $w_1$ and $w_2$ respectively. Likewise, $u_3$ is multiblocked since it has two $R$ neighbors, $w_3$ and $w_4$. Finally, $u_4$ is unblocked since it has no neighbors that are $R$ or $B$.}
        \label{fig:examples}
    \end{figure}

Before we introduce our new metric, we need several new definitions.
For an integer $s\geq 0$, consider a pair of colorings $X_s,Y_s \in \Omega$; note this pair $X_s,Y_s$ may differ at an arbitrary number of vertices.

For a vertex $z\in V$ such that $X_s(z) \neq Y_s(z)$, we partition the neighbors of $z$ based on how many occurrences of the colors $X_s(z),Y_s(z)$ occur in their neighborhood (besides at $z$).
For a vertex $y \in N(z)$ and $(\sigma,\tau) \in \Omega^2_z$, let
\[  
    B^z(y,\sigma,\tau) := \{w \in N(y)\setminus\{z\} : \{\sigma(w),\tau(w)\} \cap \{\sigma(z),\tau(z)\} \neq \emptyset \}
\]
as the blocking neighbors of $y$ with respect to $z$ in $\sigma$ and $\tau$.
{For ease of notation, since we are often considering a pair of chains $(X_t)$ and $(Y_t)$, we simplify the notation as follows. For integer $s \geq 0$,}
\[  
    B^z_s(y) := B^z(y,X_s,Y_s).
\]

We say that a neighbor $y \in N(z)$ is \emph{unblocked} with respect to $z$ if it has no blocking neighbors (i.e. $\lvert B^z_s(y) \rvert = 0$),
\emph{singly blocked} with respect to $z$ if there is exactly one blocking neighbor (i.e. $\lvert B^z_s(y) \rvert = 1$), and \emph{multiblocked} otherwise (i.e. $\lvert B^z_s(y) \rvert \geq 2$).

{For integer $i \geq 0$ and $(\sigma,\tau) \in \Omega^2_z$, let 
\[
F^i(z,\sigma,\tau) := 
    \{y\in N(z): \lvert B^z(y,\sigma,\tau) \rvert = i \} 
    \mbox{ and }
        F^{\geq i}(z,\sigma,\tau) := \bigcup_{j \geq i} F^i(z,\sigma,\tau)
\]
be the neighbors of $z$ with exactly $i$ (or at least $i$) blocking neighbors with respect to $\sigma$ and $\tau$.
Moreover, if $\sigma(z) = \tau(z)$ then let $F^i(z,\sigma,\tau)=\emptyset$.
Let 
\[
    \degree{i}{}{z,\sigma,\tau} := \lvert F^i(z) \rvert
\mbox{ and }
    \degree{\geq i}{}{z,\sigma,\tau} := \sum_{j \geq i} \degree{j}{}{z,\sigma,\tau}
\]
be the number of neighbors of $z$ with exactly $i$ (or at least $i$) blocking neighbors with respect to $\sigma$ and $\tau$.
For ease of notation, since we are often considering a pair of chains $(X_t)$ and $(Y_t)$ we simplify the notation as follows.
For integers $i\geq 0$ and $s \geq 0$, if $X_s(z)\neq Y_s(z)$ then let 
\[
F^i_s(z) := F^i(z,X_s,Y_s) 
    \mbox{ and }
        F^{\geq i}_s(z) := F^{\geq i}(z,X_s,Y_s).
\]
Moreover, if $X_s(z) = Y_s(z)$ then let $F^i_s(z)=\emptyset$.
Finally, let 
\[
    \degree{i}{s}{z} := \degree{i}{}{z,X_s,Y_s}
\mbox{ and }
    \degree{\geq i}{s}{z} := \degree{\geq i}{s}{z,X_s,Y_s}.
\]
}
Notice that $\unblockedShort{s}{z}$ is the set of unblocked neighbors of $z$ in $(X_s,Y_s)$, $\singleBlockedShort{s}{z}$ is the set of singly blocked neighbors of $z$, and $\multiBlockedShort{s}{z}$ is the set of multiblocked neighbors of $z$. 

We can now formally define our new metric. 
Recall the definition of the constant $\eta$ from \cref{def:eta}.
Consider an arbitrary pair $X_t, Y_t \in\Omega$.  
We first define for a vertex $z\in V$,
\begin{equation}
\label{defn:hamming}
    H_z(X_t,Y_t) := \begin{cases}
        1 & \mbox{if } X_t(z) \neq Y_t(z) \\
        0 & \mbox{otherwise}
    \end{cases}
\end{equation}
and
\begin{equation}\label{defn:new-vertex}
    \nmet_z(X_t,Y_t) := \begin{cases}
        1 - \frac{\eta}{\Delta} \degree{0}{t}{z} & \mbox{if } X_t(z) \neq Y_t(z) \\
        0 & \mbox{otherwise.}
    \end{cases}
\end{equation}
If $(X_t,Y_t)\in\Omega^2_{v^*}$ then we define the distance between these neighboring colorings as follows:
\begin{equation}
\label{def:neighboring-pairs}
    \nmet(X_t, Y_t) := \sum_{z \in V} \nmet_z(X_t,Y_t) = 1 - \frac{\eta}{\Delta}\degree{0}{t}{v^*}.
\end{equation}
Extend $\nmet$ to define a metric $\nmet$ over all pairs in $\Omega^2$ by considering the path metric defined by the shortest path distance in the graph $(\Omega,\Sigma)$ where neighboring colorings have weight defined by~\cref{def:neighboring-pairs}.

The following lemma upper bounds $\nmet$ by the Hamming metric $H$ and the number of unblocked neighbors.
The bound is tight if the two states differ at exactly one vertex, $v$.
Recall that if $X_t(u) = Y_t(u)$ then $\unblockedShort{t}{u} = \emptyset$ and hence 
$\degree{0}{t}{u}=0$.  
This lemma has nothing to do with the coupling; it means that the path metric $\nmet$, which we implicitly defined for pairs $(X_t,Y_t)$ that differ at more than one vertex, can be bounded naturally by the Hamming distance and the number of unblocked neighbors of disagreements in $X_t,Y_t$.

\begin{lemma}
    \label{lem:decomp_exp}
For any $X_t,Y_t\in\Omega$,
\begin{equation*}
      \nmet(X_t, Y_t) \leq \sum_{z \in V} \nmet_{z}(X_t,Y_t).
\end{equation*}
\end{lemma}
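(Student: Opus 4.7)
The plan is to upper bound $\nmet(X_t,Y_t)$ by exhibiting one explicit path from $X_t$ to $Y_t$ in the graph $(\Omega,\Lambda)$ whose total weight is at most $\sum_{z\in V}\nmet_z(X_t,Y_t)$; since $\nmet$ is defined as the shortest-path metric on this graph, such a path immediately yields the required bound on $\nmet(X_t,Y_t)$.

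First I would let $D=\{z\in V:X_t(z)\neq Y_t(z)\}$ and enumerate $D=\{z_1,\ldots,z_m\}$ (possibly in a carefully chosen order). Define a chain of labelings $X_t=Z_0,Z_1,\ldots,Z_m=Y_t$ in which $Z_i$ agrees with $Y_t$ on $\{z_1,\ldots,z_i\}$ and with $X_t$ elsewhere; each consecutive pair $(Z_{i-1},Z_i)$ then differs only at $z_i$ and so lies in $\Lambda_{z_i}$. The triangle inequality on the path metric together with the definition of $\nmet$ on neighboring colorings gives
\[
\nmet(X_t,Y_t)\ \leq\ \sum_{i=1}^{m}\nmet(Z_{i-1},Z_i)\ =\ \sum_{i=1}^{m}\Bigl(1-\tfrac{\eta}{\Delta}\cdot d^0_{(Z_{i-1},Z_i)}(z_i)\Bigr),
\]
where $d^0_{(Z_{i-1},Z_i)}(z_i)$ denotes the count of unblocked neighbors of $z_i$ evaluated on the intermediate pair $(Z_{i-1},Z_i)$. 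Comparing term by term with $\sum_{z\in D}\bigl(1-\tfrac{\eta}{\Delta}\degree{0}{t}{z}\bigr)$, it suffices to establish that $d^0_{(Z_{i-1},Z_i)}(z_i)\geq \degree{0}{t}{z_i}$ for every $1\leq i\leq m$.

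To prove this inequality I would show that every $y\in N(z_i)$ which is unblocked with respect to $z_i$ in $(X_t,Y_t)$ remains unblocked with respect to $z_i$ in $(Z_{i-1},Z_i)$. The argument decomposes by $w\in N(y)\setminus\{z_i\}$: if $w\notin D$ then $Z_{i-1}(w)=Z_i(w)=X_t(w)=Y_t(w)$ and the blocking condition at $w$ is identical in the two pairs, so $w$ stays non-blocking; if $w\in D\setminus\{z_i\}$ then $w$'s color in the intermediate equals either $X_t(w)$ or $Y_t(w)$, and one uses the full unblocked condition on $y$ in $(X_t,Y_t)$---which constrains \emph{both} of the colors $X_t(w),Y_t(w)$ relative to the critical pair $\{X_t(z_i),Y_t(z_i)\}$---to keep $w$ non-blocking in $(Z_{i-1},Z_i)$ regardless of which of its two colors the intermediate has chosen.

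The step I expect to require the most care is this second sub-case: one must verify that swapping $w$'s color along the chain does not introduce a new blocking neighbor of $y$, which means unpacking the definition of $B^{z_i}$ carefully and, if the naive ordering is not already sufficient, choosing the order of $z_1,\ldots,z_m$ to respect the relevant constraints (for example, processing disagreements so that each $Z_i(w)$ at a problematic neighbor lands on a ``safe'' side of its color pair). Once the edge-by-edge bound $\nmet(Z_{i-1},Z_i)\leq \nmet_{z_i}(X_t,Y_t)$ is in hand, summing over $i$ collapses to $\nmet(X_t,Y_t)\leq \sum_{z\in D}\nmet_z(X_t,Y_t)=\sum_{z\in V}\nmet_z(X_t,Y_t)$, completing the proof.
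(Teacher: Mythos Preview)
Your proposal is correct and follows essentially the same approach as the paper: construct a path by flipping the disagreements one at a time, apply the triangle inequality, and establish the edge-by-edge bound by showing the containment $F^0(z_i,X_t,Y_t)\subseteq F^0(z_i,Z_{i-1},Z_i)$ via the observation that the full unblocked condition forces both $X_t(w)$ and $Y_t(w)$ to avoid $\{X_t(z_i),Y_t(z_i)\}$. The paper takes an \emph{arbitrary} ordering of the disagreements, so your hedge about possibly needing a carefully chosen order is unnecessary---the ``full unblocked condition'' you invoke is exactly what makes every ordering work.
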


\begin{proof}
Let $U = \{u \in V  :  {Y_{t}}(u) \not = {X_{t}}(u)\}$ be the set of vertices that ${X_{t}}$ and ${Y_{t}}$ disagree on, and
let $U=\{u_1, \ldots, u_{\lvert U \rvert}\}$ be an arbitrary ordering of $U$.
Let $\rho_0 = {X_{t}}$, $\rho_{\lvert U \rvert} = {Y_{t}}$,
and for $1 \leq i \leq \lvert U \rvert - 1$, let $\rho_{i}(v) = X_t(v)$ for $v \not \in U$, $\rho_i(u_j) = \rho_{i-1}(u_j)$ for $j \not = i$, and $\rho_i(u_i) = {Y_{t}}(u_i)$. 
It follows that 
\begin{equation}
    \label{eqn:path-ineq}
    \nmet({X_{t}},{Y_{t}}) \leq \sum_{i} \nmet(\rho_{i-1},\rho_i)   
\end{equation}
since $\nmet({X_{t}},{Y_{t}})$ is defined to be the length of the shortest path between ${X_{t}}$ and ${Y_{t}}$ and the path $\rho_0,\ldots, \rho_{\lvert U \rvert}$ is a particular path.
We will prove that for all $u_i \in U$,
\begin{equation}
    \label{eqn:set-path-ineq}
    \degree{0}{}{u_i,\rho_{i-1},\rho_{i}}  \geq d^0({u_i},{{X_{t}}},{{Y_{t}}}).
\end{equation}
Assuming \cref{eqn:set-path-ineq} and making use of \cref{eqn:path-ineq} we can conclude the lemma as follows:
\begin{align*}
    \nmet({X_{t}},{Y_{t}}) & \leq \sum_{i} \nmet(\rho_{i-1}, \rho_{i})  &&\mbox{(by \cref{eqn:path-ineq})}\\
    &= \sum_{u_i\in U} \left(1- \frac{\eta} {\Delta} \degree{0}{}{u_i,\rho_{i-1},\rho_{i}} \right) &&\mbox{(by \cref{def:neighboring-pairs})}\\
    &= H({X_{t}},{Y_{t}}) - \frac{\eta} {\Delta}\sum_{u_i\in U} \degree{0}{}{u_i,\rho_{i-1},\rho_{i}}  && \mbox{($\lvert U \rvert = H(X_t,Y_t)$)}\\
    &\leq H({X_{t}},{Y_{t}}) - \frac{\eta}{\Delta}\sum_{u_i\in U} \degree{0}{}{u_i,X_t,Y_t}. &&\mbox{(by \cref{eqn:set-path-ineq})}
\end{align*}

It remains to prove \cref{eqn:set-path-ineq}, for which it suffices to show that, for all $u_i \in U$, 
\begin{align*}
\unblocked{u_i}{\rho_{i-1}}{\rho_{i}} \supseteq  \unblocked{u_i}{X_t}{Y_t}.
\end{align*}
Fix $i$ and suppose $w \in \unblockedShort{}{u_i,X_t,Y_t}$.
We will show that $w \in \unblocked{u_i}{\rho_{i-1}}{\rho_{i}}$.
By the definition of $\unblocked{u_i}{X_t}{Y_t}$, if $w \in \unblocked{u_i}{X_t}{Y_t}$ then $B^{u_i}(w,X_t,Y_t) = \emptyset$.
Consider $z \in N(w) \setminus \{u_i\}$.
By the definition of $B^{u_i}(w,X_t,Y_t)$, since we know that $B^{u_i}(w,X_t,Y_t)= \emptyset$ and $z \in N(w) \setminus \{u_i\}$ then we have that:
\[
\{{X_{t}}(z),{Y_{t}}(z)\} \cap \{{X_{t}}(u_i),{Y_{t}}(u_i)\}~=~\emptyset.
\]
Recall that by construction, we have the following:
\[
\rho_{i-1}(z)\in \{X_t(z),Y_{t}(z)\},\rho_{i}(z) \in \{X_t(z),Y_{t}(z)\},\rho_{i-1}(u_i) = {X_{t}}(u_i), \mbox{  and } \rho_{i}(u_i) = {Y_{t}}(u_i).
\]
Thus, for all $z \in N(w)\setminus \{u_i\}$, 
\[ \{\rho_{i-1}(z),\rho_{i}(z)\} \cap \{\rho_{i-1}(u_i),\rho_{i}(u_i)\} \subset \{X_t(z),Y_t(z)\} \cap \{X_t(u_i),Y_t(u_i)\} = \emptyset.
\] 
Since this holds for all $z\in N(w)\setminus \{u_i\}$, we have shown that $w \in \unblocked{u_i}{\rho_{i-1}}{\rho_i}$ as desired, which completes the proof of the lemma.
\end{proof}

\section{Coupling Analysis}
\label{sec:coloring_bound_proof}
We start our analysis by giving a brief description of the greedy coupling. The coupling is referred to as the greedy coupling as it minimizes the expected (unweighted) Hamming distance after the coupled move for initial pairs of configurations that differ at a single vertex.

To define the greedy coupling for the flip dynamics, let us first observe an alternative formulation of the dynamics.  
For a state $X_t\in\Omega$, every cluster $S$ in $X_t$ has an associated flip probability $P_{|S|}$.  
To simulate the flip dynamics we choose a cluster $S$ with probability $P_{|S|}/(nk)$
and then flip it to obtain the new state $X_{t+1}$, and with the remaining probability, we stay in the same state $X_{t+1}=X_t$.

If a cluster $S$ exists
in both chains, the greedy coupling will flip $S$ in both chains or in neither
chain; this is referred to as the identity coupling.  The only non-identity coupling is for clusters that potentially differ in the two chains.  These potential disagreeing clusters involve $v^*$ or neighbors of~$v^*$ and can be partitioned according to the current colors of the neighbors of $v^*$.

We now partition the neighbors of a vertex $z \in V$ on a per color basis.
For any coloring $X_s\in\Omega$, color~$c \in \masterList{}$,
and vertex $z\in V$, let
\[
N_c(z,X_s) := \{ w \in N(z)  :  X_s(w) = c \}
\]
be the neighbors of $z$ that are color $c$ in $X_s$. 
We extend this notation for an arbitrary pair $X_s,Y_s \in \Omega$ by letting
\[
N_{s,c}(z) := N_c(z,X_s) \cup N_c(z,Y_s).
\]
Note that if $(X_t, Y_t) \in \Omega^2_{v^*}$ then $N_{t,c}(v^*)=N_{c}(v^*,X_t)=N_{c}(v^*,Y_t)$.
Let 
\[
\degree{}{t,c}{v^*} := \lvert N_{t,c}(v^*)\rvert
\]
be the number of neighbors of $v^*$ that are colored~$c$ in $X_t$ or $Y_t$.

Now we will give a brief overview of the greedy coupling.
Fix a pair $(X_t,Y_t)\in\Omega^2_{v^*}$.
The clusters involving color $c \in \masterList$ that we need to couple (using the greedy coupling) are the following:
\begin{equation}
\label{def:Coupling-set}
\mathcal{D}_{t,c} :=  \left\{ S_{X_t}(v^*,c) \right\} \cup \left \{ S_{Y_t}(v^*,c) \right\} \cup \bigcup_{u \in N_{t,c}(v^*)} \{S_{X_t}(u,Y_t(v^*)),S_{Y_t}(u,X_t(v^*))\}.
\end{equation}

Let us digest this collection of clusters $\mathcal{D}_{t,c}$.  Consider the case when $\degree{}{t,c}{v^*}=0$, then $\mathcal{D}_{t,c}$ consists of two clusters $S_{X_t}(v^*,c)$ and $S_{Y_t}(v^*,c)$, and both of these clusters are the same: $S_{X_t}(v^*,c)=S_{Y_t}(v^*,c)=\{v^*\}$ since color $c$ does not appear in the neighborhood of $v^*$.  
These clusters are coupled with the identity coupling, which means that we flip the cluster in both chains or neither chain.  Note that if we flip these clusters $S_{X_t}(v^*,c)=S_{Y_t}(v^*,c)$ when $\degree{}{t,c}{v^*}=0$ then the resulting colorings are the same $X_{t+1}=Y_{t+1}$ (as $X_{t+1}(v^*)=Y_{t+1}(v^*)=c$); these are the ``good'' moves which decrease the Hamming distance by one.  

The non-trivial case is where $\degree{}{t,c}{v^*}\geq 1$.  The clusters of $\mathcal{D}_{t,c}$ which occur in $X_t$, are the $S_{X_t}(w,Y_{t}(v^*))$ cluster for every $w\in N_{t,c}(v^*)$ and the $S_{X_t}(v^*,c)$ cluster; and in $Y_t$ we have the $S_{Y_t}(w,X_t(v^*))$ for $w\in N_{t,c}(v^*)$ and the $S_{Y_t}(v^*,c)$ cluster.  
Notice that these two $(v^*,c)$ clusters are large clusters that consist of the union of the other small clusters plus $v^*$, namely, 
\[  S_{X_t}(v^*,c) = \{v^*\}\cup\bigcup_{w\in N_{t,c}(v^*)} S_{Y_t}(w,X_t(v^*)) \ \ \mbox{ and } \ \ 
S_{Y_t}(v^*,c) = \{v^*\}\cup\bigcup_{w\in N_{t,c}(v^*)} S_{X_t}(w,Y_t(v^*)).
\]

The only non-identity coupling involves clusters in $\mathcal{D}_{t,c}$ for some $c$ where $\degree{}{t,c}{v^*}\geq 1$.  The clusters in $\mathcal{D}_{t,c}$ are coupled with each other (or with nothing corresponding to a self-loop in the other chain); the greedy coupling in these cases is detailed in~\cref{sec:coupling-details}.

We define the set of vertices besides $v^*$ that are contained in a cluster of $\mathcal{D}_{t,c}$ as
\[ \cset := \{u \in V\setminus\{v^*\}: \mbox{ there exists } S\in\mathcal{D}_{c}(X_t,Y_t) \mbox{ such that } u \in S\}.
\]
That is, $\cset$ is every vertex besides $v^*$ that can be reached from $v^*$ with a $(c,X_t(v^*))$ or $(c,Y_t(v^*))$ alternating path. 
These are precisely the vertices where new disagreements can form after a single step of the greedy coupling. Observe that $N_{t,c}(v^*) \subseteq \cset$.

\subsection{Relating $\nmet$ to $H$}
\label{sub:coupling-analysis}
Fix a pair $(X_t, Y_t) \in \Omega^2_{v^*}$. Let
\[
    W_t := \Exp{H(X_{t+1},Y_{t+1})-H(X_t,Y_t)} 
\]
and
\begin{equation}
    \label{def:wt}
    \widetilde{W}_t :=  \Exp{\nmet(X_{t+1},Y_{t+1})-\nmet(X_t,Y_t)}
\end{equation}
denote the expected change over one step of the greedy coupling, $(X_t,Y_t) \rightarrow (X_{t+1},Y_{t+1})$, for $H$ and $\nmet$ respectively. 
Note, the quantities $W_t$ and $\widetilde{W}_t$ are the expected change for the Hamming and weighted Hamming distance, respectively, between the chains $(X_t)$ and $(Y_t)$ for the update at time~$t$; both of these quantities $W_t$ and $\widetilde{W}_t$ are with respect to the same pair of chains $(X_t)$ and $(Y_t)$.  Moreover, the chains $(X_t)$ and $(Y_t)$ are coupled using the greedy coupling, which minimizes the expectation of the unweighted Hamming distance $H(X_{t+1},Y_{t+1})$; the greedy coupling is not necessarily optimal for the weighted Hamming distance $\nmet(X_{t+1},Y_{t+1})$.
The following subsections aim to bound $\widetilde{W}_t$ by decomposing it with respect to each color. 

\subsection{Analysis by Color}
\label{subsec:analysis_by_color}
We want to decompose the expected change in the Hamming distance $W_t$ and our metric~$\widetilde{W}_t$ with respect to each color $c$.

Recall $H_z$ is the Hamming distance at a vertex $z$, see \cref{defn:hamming}. 
Fix colorings $X_s,Y_s \in \Omega$.
We define the Hamming distance with respect to an arbitrary subset $R \subseteq V$ as 
\begin{equation*}
H_{R}(X_s,Y_s) := \sum_{z \in R} H_z(X_s,Y_s).
\end{equation*}
Similarly, we define the new metric distance with respect to an arbitrary subset $R \subseteq V$ and vertex $z \not\in R$ as 
\begin{equation}
\nmet_{z,R}(X_s,Y_s) := H_{R}(X_s,Y_s) - \frac{\eta}{\Delta} \left(\lvert \unblockedShort{s}{z} \cap R \rvert + \sum_{y \in R} \rvert \unblockedShort{s}{y}\lvert
\right).
\label{def:nmet_c}
\end{equation}
We will give some intuition for the definition of $\nmet_{z,R}(X_s,Y_s)$ in \cref{def:nmet_c} after the statement of \Cref{lem:total-per-color} below.

We can now define for all pairs $(X_t,Y_t)\in \Omega^2_{v^*}$ and for all $c$ such that $\degree{}{t,c}{v^*} \geq 1$,
\begin{align*}
    W^c_t := \Exp{ H_{\cset}(X_{t+1},Y_{t+1}) - H_{\cset}(X_t,Y_t)}
\end{align*}
and
\begin{equation}
    \label{def:wct}
    \widetilde{W}^c_t := \Exp{\nmet_{v^*,\cset}(X_{t+1},Y_{t+1}) - \nmet_{v^*,\cset}(X_t,Y_t)}
\end{equation}
as the expected change concerning $\cset$, which are the vertices related to color $c$.
Notice that both terms for $W^c_t$ (and also $\widetilde{W}^c_t$) the set $R=\cset$ which is defined for time $t$.

We will show that if we bound these new functions $W^c_t$ and $\widetilde{W}^c_t$ for every color $c$ such that $\degree{}{t,c}{v^*} \geq 1$, then we obtain an upper bound on the total change as follows:
\begin{lemma}
\label{lem:total-per-color}
For $(X_t,Y_t) \in \Omega^2_{v^*}$, 
\begin{equation*}
 \widetilde{W}_t \leq \Exp{H_{v^*}(X_{t+1},Y_{t+1}) - H_{v^*}(X_t,Y_t)}  + \sum_{c \in \masterList : \atop \degree{}{t,c}{v^*} \geq 1} \widetilde{W}^c_t.
\end{equation*}

\end{lemma}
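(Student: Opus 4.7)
The plan is to reduce the statement to a deterministic pointwise inequality at time $t+1$, after first establishing an exact matching decomposition at time $t$. Since $X_t,Y_t$ differ only at $v^*$, every $y \neq v^*$ has $\unblockedShort{t}{y} = \emptyset$ and $H_{\cset}(X_t,Y_t) = 0$, so $\nmet_{v^*,\cset}(X_t,Y_t) = -(\eta/\Delta)\,|\unblockedShort{t}{v^*}\cap \cset|$. The key geometric observation is that every neighbor $u \in N(v^*)$ lies in exactly one $\cset$, namely for $c = X_t(u) = Y_t(u)$: any alternating $(c,X_t(v^*))$- or $(c,Y_t(v^*))$-path from $v^*$ of length $1$ forces the second color to equal $X_t(u)$, and by properness $X_t(u) \notin \{X_t(v^*),Y_t(v^*)\}$. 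Summing over $c$ (noting that $\cset = \emptyset$ whenever $\degree{}{t,c}{v^*} = 0$, so restricted and unrestricted sums agree) yields the exact identity
\[
\nmet(X_t,Y_t) = 1 - \frac{\eta}{\Delta}\,\degree{0}{t}{v^*} = H_{v^*}(X_t,Y_t) + \sum_{c:\, \degree{}{t,c}{v^*}\geq 1} \nmet_{v^*,\cset}(X_t,Y_t).
\]

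Subtracting this identity from the claim and using linearity of expectation, it remains to verify the pointwise bound
\[
\nmet(X_{t+1},Y_{t+1}) \leq H_{v^*}(X_{t+1},Y_{t+1}) + \sum_{c:\, \degree{}{t,c}{v^*}\geq 1} \nmet_{v^*,\cset}(X_{t+1},Y_{t+1})
\]
for every outcome of the greedy coupling. I would upper bound the LHS using \cref{lem:decomp_exp} by $\sum_{z \in D}\bigl(1 - (\eta/\Delta)\,\degree{0}{t+1}{z}\bigr)$, where $D$ is the set of disagreements at time $t+1$. Under the greedy coupling a new disagreement at $y \neq v^*$ can arise only inside a non-identity-coupled cluster of some $\mathcal{D}_{t,c}$ with $\degree{}{t,c}{v^*} \geq 1$, hence $D \setminus \{v^*\} \subseteq \bigcup_c \cset$. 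Setting $k_y := |\{c : y \in \cset\}|\geq 1$ for $y \in D\setminus\{v^*\}$, expanding both sides in terms of $k_y$, and using once more that neighbors of $v^*$ lie in exactly one $\cset$, the difference RHS $-$ LHS telescopes to
\[
\sum_{y \in D\setminus\{v^*\}} (k_y - 1)\left(1 - \frac{\eta}{\Delta}\,\degree{0}{t+1}{y}\right),
\]
which is non-negative since $k_y \geq 1$, $\degree{0}{t+1}{y} \leq \Delta$, and $\eta \leq (6/19)\,P_2 < 1$ by \cref{fp:eta_bound}.

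The main obstacle I anticipate is handling the possible overlap of the sets $\cset$ for distinct colors: such an overlap inflates both the positively-signed Hamming contribution on the RHS (a gain of $k_y - 1$ per overlapping vertex $y$) and the subtracted unblocked correction (a loss of $(k_y - 1)\,\degree{0}{t+1}{y}\cdot \eta/\Delta$). The inequality survives precisely because these two effects combine into the strictly positive slack $1 - (\eta/\Delta)\,\degree{0}{t+1}{y}$; this is the only place the smallness of $\eta$ is invoked in the decomposition lemma itself, and it is automatic from the assumed range of $\eta$.
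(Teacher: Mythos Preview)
Your proposal is correct and follows essentially the same approach as the paper: both establish the exact identity at time $t$, then prove the pointwise inequality at time $t+1$ via \cref{lem:decomp_exp} together with the fact that any new disagreement lies in some $\cset$, and combine. Your write-up is somewhat more explicit---you spell out that every $u\in N(v^*)$ lies in exactly one $\cset$ (a fact the paper uses without comment) and you handle the possible overlap of the $\cset$ by tracking multiplicities $k_y$ rather than simply invoking $\nmet_u\geq 0$---but these are organizational rather than substantive differences.
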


In \cref{lem:total-per-color}, the $\Exp{H_{v^*}(X_{t+1},Y_{t+1})- H_{v^*}(X_t,Y_t)}$ term is capturing the change in Hamming distance at~$v^*$.  For the change in the new metric at $v^*$ we also need to capture the change in the number of unblocked neighbors of $v^*$; these terms are considered based on the colors of the neighbors of $v^*$ and are captured in the second term in \cref{def:nmet_c}, namely $\lvert \unblockedShort{t}{v^*} \cap \cset \rvert $.
Finally, the change in the new metric for all other vertices (besides $v^*$) are also considered on a per color basis and captured by the last summation in \cref{def:nmet_c}.

\begin{proof}[Proof of \cref{lem:total-per-color}]
Since $(X_t,Y_t) \in \Omega^2_{v^*}$ and $\eta \leq 1/2$ we have that:
\begin{align}
\nonumber      
\nmet(X_t, Y_t) & 
= \nmet_{v^*}(X_t, Y_t) 
\\ & = H_{v^*}(X_t,Y_t) + \sum_{c \in \masterList : \atop \degree{}{t,c}{v^*} \geq 1} \nmet_{v^*,\cset}(X_t,Y_t).
      \label{eqn:tp1_bound}
\end{align}

For a vertex $u\in V$, observe that if $u \neq v^*$ then $X_t(u) = Y_t(u)$, and if for all colors $c\in \colorList(u)$ we have $u \not\in \cset$ then every cluster that contains $u$ is in both $X_t$ and $Y_t$
and hence $X_{t+1}(u) = Y_{t+1}(u)$, since the greedy coupling either flips a cluster containing $u$ in both chains or neither (see \cref{sec:coupling-details}). 
Therefore, $H_u(X_{t+1},Y_{t+1})  = 0$ and $\nmet_u(X_{t+1},Y_{t+1}) = 0$ by definition.
In summary, we have the following:
\begin{equation}
\label{eqn:not_in_cset_cont}
\sum_{u \notin \bigcup_c \cset: \atop u\neq v^*} \nmet_u(X_{t+1},Y_{t+1}) = 0. 
\end{equation}
This yields the following decomposition of the new metric at time $t+1$:
\begin{align}
      \nmet(X_{t+1}, Y_{t+1}) &\leq \sum_{u \in V} \nmet_{u}(X_{t+1},Y_{t+1}) &\mbox{(by \cref{lem:decomp_exp})} \nonumber\\
      &\leq \nmet_{v^*}(X_{t+1}, Y_{t+1}) + \sum_{c \in \masterList : \atop \degree{}{t,c}{v^*} \geq 1}\sum_{u\in\cset} \nmet_{u}(X_{t+1},Y_{t+1})  &\mbox{(by \cref{eqn:not_in_cset_cont})}
      \nonumber\\
      &= H_{v^*}(X_{t+1}, Y_{t+1}) + \sum_{c \in \masterList: \atop \degree{}{t,c}{v^*} \geq 1} \nmet_{v^*, \cset}(X_{t+1},Y_{t+1}) & \mbox{(by Definition \ref{def:nmet_c})}
      \label{eqn:tp1_bound_2}
\end{align}
where the second inequality is in fact an equality for a graph with sufficiently high girth.

We can now use \cref{eqn:tp1_bound,eqn:tp1_bound_2} 
 to finish the proof:
\begin{align*}
 \widetilde{W}_t &=   \Exp{\nmet(X_{t+1},Y_{t+1})-\nmet(X_t,Y_t)} 
 \hspace{2.5in} \mbox{(by Definition \ref{def:wt})}\\
 &\leq \Exp{H_{v^*}(X_{t+1},Y_{t+1}) - H_{v^*}(X_t,Y_t)} + \sum_{c \in \masterList : \atop \degree{}{t,c}{v^*} \geq 1} \Exp{\nmet_{v^*,t,c}(X_{t+1},Y_{t+1}) - \nmet_{v^*,t,c}(X_t,Y_t)} 
   \\
 & = \Exp{H_{v^*}(X_{t+1},Y_{t+1}) - H_{v^*}(X_t,Y_t)}  + \sum_{c \in \masterList : \atop \degree{}{t,c}{v^*} \geq 1} \widetilde{{W}}^c_t
 \hspace{1.35in} \mbox{(by Definition \ref{def:wct}),}
\end{align*}
where the second line follows from \cref{eqn:tp1_bound,eqn:tp1_bound_2}.  This completes the proof of the lemma.
\end{proof}

\subsection{Vertices Changing Between Blocked and Unblocked}
Fix $(X_t,Y_t) \in \Omega^2_{v^*}$.
Our goal now is to bound $\widetilde{W}^c_t$ for all $c$.
Recall $\degree{}{t,c}{v^*} = \lvert N_{t,c}(v^*)\rvert$. 
Now let
\[
d^i_{t,c}(v^*) := \lvert N_{t,c}(v^*) \cap F^{i}_t(v^*) \rvert
\]
be the number of neighbors of $v^*$ that are colored~$c$ in $X_t$ or $Y_t$ and have exactly $i$ blocking neighbors with respect to $v^*$.
For $i \geq 1$, let
\[
\Delta_i = \Delta_{i,t}(v^*) := \lvert \{ u \in N(v^*): \degree{}{t,X_{t}(u)}{v^*}= i \mbox{ and } X_t(u) \in \colorList{v^*} \}\rvert
\]
denote the number of neighbors whose color is in $\colorList{v^*}$ and appears exactly $i$ times in the neighborhood of $v^*$.
Finally, let $\mathcal{A}_t(v^*) = \{ c \in \colorList{v^*}: \degree{}{t,c}{v^*} = 0\}$
be the set of ``available'' colors for $v^*$ in $X_t$ and $Y_t$.

The following function will serve as an upper bound on $\widetilde{W}^c_t$ and is a function of $W^c_t$, $\degree{0}{t,c}{v^*}$, and $\degree{1}{t,c}{v^*}$.  The quantities $\alpha$ and $\beta$ will appear in later lemmas.
\begin{definition}
\label{def:boundWnmet}
        Let $\alpha = \alpha(v^*,t) := (k -\Delta - 2)$, $\beta = \beta(v^*,t) := \lvert \colorList{v^*} \rvert - P_2 \left(\Delta_{1} + \Delta_{2} \right) +(1+P_2) d(v^*)$, and $c \in \colorList{v^*}$.
        Then for $(X_t,Y_t) \in \Omega^2_{v^*}$, let
        \begin{align*}
            nk \widetilde{Z}^c_t &:= nk W^c_t + \degree{0}{t,c}{v^*}\frac{\eta \beta }{\Delta} - \degree{1}{t,c}{v^*}\frac{ \eta \alpha}{\Delta}.
        \end{align*}
\end{definition}

Observe that if $\colorList{u} = [k]$ for all $u \in V$ ($k$-coloring case) then $\beta \leq k - P_2(\Delta_1 + \Delta_2) + (1+P_2)\Delta$.

The following lemma shows that it suffices to bound $\widetilde{Z}^c_t$ instead of $\widetilde{W}^c_t$ directly. 
We will later be able to show that $\widetilde{Z}^c_t$ is maximized in just a few cases, which we can analyze individually.

\begin{lemma}
    \label{lem:wnmet_bound}
    For $(X_t,Y_t) \in \Omega^2_{v^*}$ and $c \in \colorList{v^*}$, 
    \begin{align*}
        \widetilde{W}^c_t &\leq \widetilde{Z}^c_t.
    \end{align*}
\end{lemma}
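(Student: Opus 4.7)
The plan is to reduce the claim to a lower bound on the expected number of unblocked neighbors after one step. Expanding $\widetilde{W}^c_t - W^c_t$ via \cref{def:wct} and \cref{def:nmet_c}, and using that $v^*$ is the unique disagreement at time $t$ with $v^* \notin \cset$ (so $F^0_t(y) = \emptyset$ for every $y \in \cset$ and $|F^0_t(v^*) \cap \cset| = \degree{0}{t,c}{v^*}$), the inequality $\widetilde{W}^c_t \leq \widetilde{Z}^c_t$ is algebraically equivalent to
\[
\Exp{|F^0_{t+1}(v^*) \cap \cset|} + \Exp{\sum_{y \in \cset}|F^0_{t+1}(y)|} \;\geq\; \degree{0}{t,c}{v^*}\!\left(1 - \frac{\beta}{nk}\right) + \degree{1}{t,c}{v^*}\,\frac{\alpha}{nk}.
\]

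Next, I would restrict attention to cluster flips that can change the right-hand quantities: only clusters in $\mathcal{D}_{t,c}$, together with identity-coupled clusters of colors involving $X_t(v^*)$ or $Y_t(v^*)$ whose flips recolor a neighbor of some $w \in F^0_t(v^*) \cap \cset$, are relevant. In the baseline event that no such flip fires, $|F^0_{t+1}(v^*) \cap \cset| = \degree{0}{t,c}{v^*}$ and the second expectation is $0$, giving the leading $\degree{0}{t,c}{v^*}$ term for free. I would then bound the two corrections separately. For the $\beta$ discount, I would enumerate the flips that either (a) erase $v^*$ as a disagreement (flipping $S_{X_t}(v^*,c)$ or $S_{Y_t}(v^*,c)$), or (b) add a blocking neighbor to some $w \in F^0_t(v^*) \cap \cset$ by recoloring one of $w$'s neighbors to $X_t(v^*)$ or $Y_t(v^*)$. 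Each such flip contributes at rate $P_\ell/(nk\ell)$; summing the flips that can attack a single $w$ produces the per-vertex loss bound $\beta/(nk)$, and the form $\beta = k - P_2(\Delta_1 + \Delta_2) + (1+P_2)\Delta$ is precisely the arithmetic output of this enumeration, with the $P_2(\Delta_1 + \Delta_2)$ reduction accounting for the fact that flips of clusters of size $\geq 2$ attached to singly and doubly blocked neighbors of $v^*$ are charged at rate $P_2$ rather than $P_1$.

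For the $\alpha$ gain, I would use the greedy coupling to show that for every singly blocked $w \in F^1_t(v^*) \cap N_{t,c}(v^*)$, there is a size-$2$ cluster flip (involving $w$ and its unique blocking neighbor) which, with probability $P_2/(nk)$, creates a new disagreement in $\cset$ whose $F^0_{t+1}$ has size at least $\alpha = k - \Delta - 2$ (the colors not forced by the local neighborhood or by the two disagreeing colors). Summing over the $\degree{1}{t,c}{v^*}$ such neighbors yields the lower bound $\degree{1}{t,c}{v^*}\alpha/(nk)$ on $\Exp{\sum_{y \in \cset}|F^0_{t+1}(y)|}$, completing the reformulated inequality.

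The main obstacle is the simultaneous accounting of losses and gains: a single cluster flip can affect both of the two quantities on the left of the reformulated bound, so a naive separation risks either double counting or losing tightness. The definition of $\beta$ is engineered precisely to absorb such cross-contributions through the $P_2(\Delta_1 + \Delta_2)$ correction; verifying that the case analysis across cluster sizes $\ell \in \{2,\dots,6\}$ uniformly yields the claimed bound will draw on the full set of \cref{flip-properties}, particularly the gap relations \cref{fp:second_gap}, \cref{fp:third_gap}, and \cref{fp:forth_gap}. This detailed per-cluster enumeration is where the bulk of the technical work lies.
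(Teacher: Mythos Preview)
Your algebraic reformulation is correct, and the overall shape—split the correction into a ``loss'' bounded by $\beta$ and a ``gain'' bounded by $\alpha$—matches the paper. But you have misassigned which of the two left-hand terms supplies each correction, and as a result both of your sketched arguments fail quantitatively.

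For the $\alpha$ gain: in the paper the term $\degree{1}{t,c}{v^*}\,\alpha/(nk)$ comes entirely from the \emph{first} expectation $\Exp{|F^0_{t+1}(v^*)\cap\cset|}$. For each singly blocked $u\in N_{t,c}(v^*)$ with unique blocking neighbor $z$, there are at least $\alpha=k-\Delta-2$ size-$1$ clusters at $z$ (identity-coupled, probability $1/(nk)$ each) whose flip recolors $z$ away from $\{X_t(v^*),Y_t(v^*)\}$ and thereby makes $u$ unblocked for $v^*$. Your proposed mechanism—a size-$2$ flip at $u$ that creates a new disagreement with $\geq\alpha$ unblocked neighbors—only fires with probability $P_2/(nk)$, so even granting the (unjustified) claim about $|F^0_{t+1}(u)|$ it yields $P_2\alpha/(nk)$ per vertex, which is short by a factor $P_2<1$. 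Moreover there is no reason the new disagreement at $u$ should have $\geq k-\Delta-2$ unblocked neighbors: that count depends on the second neighborhood of $u$, not on available colors.

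For the $\beta$ loss: the paper uses the \emph{second} expectation $\sum_{y\in\cset}\Exp{|F^0_{t+1}(y)|}$ here, not for $\alpha$. The $(1+P_2)\Delta$ piece of $\beta$ is obtained via the tradeoff in \cref{lem:expected_to_unblocked}: when a neighbor $w$ of an unblocked $u$ has no $\{X_t(v^*),Y_t(v^*)\}$-colored neighbors, $w$ becomes blocking with probability $2/(nk)$, but with probability $(1-P_2)/(nk)$ the greedy coupling makes $u$ a disagreement with $w\in F^0_{t+1}(u)$, so the net charge is $(1+P_2)/(nk)$. If you withhold the second term for the $\alpha$ argument, this case costs $2/(nk)$ per $w$ and your enumeration produces $k - P_2(\Delta_1+\Delta_2) + 2\Delta$, not $\beta$. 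Finally, the gap relations \cref{fp:second_gap}--\cref{fp:forth_gap} play no role in this lemma; only \cref{fp:p1_p2_bound}, \cref{fp:P_2_bound}, and \cref{fp:P_3_bound} are needed, and only to bound $\Prob{\EE_{v^*}}$.
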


To prove \cref{lem:wnmet_bound}, we use the following two lemmas that bound the expected number of vertices colored $c$ in $X_t$ and $Y_t$ that become blocked or unblocked after a single step of the greedy coupling. 

We first give a lower bound on the number of newly unblocked neighbors of $v^*$.  In particular, for a color $c$, we lower bound the probability that for a neighbor $w$ which is colored $c$ in $X_t$ and $Y_t$, that $w$ becomes unblocked after a single step of the greedy coupling.

\begin{lemma}
    \label{lem:expected_from_blocked}
For $(X_t,Y_t) \in \Omega^2_{v^*}$ and color $c \in \colorList{v^*}$ such that $d_{t,c}(v^*) \geq 1$,
    \[
    \Exp{ \lvert (\unblockedShort{t+1}{v^*} \setminus \unblockedShort{t}{v^*}) \cap N_{t,c}(v^*) \rvert}  \geq \degree{1}{t,c}{v^*} \frac{\alpha}{nk}.
    \]
\end{lemma}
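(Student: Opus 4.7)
The plan is to lower bound, for each singly-blocked neighbor $w\in N_{t,c}(v^*)\cap \singleBlockedShort{t}{v^*}$, the probability that $w$ becomes unblocked after one coupled step, and then sum by linearity of expectation. Fix such a $w$; since it is singly blocked with respect to $v^*$, it has a unique blocking neighbor $z_w\in N(w)\setminus\{v^*\}$ with $X_t(z_w)=Y_t(z_w)=c_z$ for some $c_z\in\{X_t(v^*),Y_t(v^*)\}$. I would call a color $c'\in [k]$ \emph{fresh} if $c'\notin \{X_t(y):y\in N(z_w)\}\cup\{X_t(v^*),Y_t(v^*)\}$. The forbidden set has size at most $\Delta+2$ (at most $\Delta$ neighbor colors together with the two $v^*$-colors; note that $c_z$ itself lies in the latter set, so it contributes no extra element), so at least $k-\Delta-2=\alpha$ colors are fresh.

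For every fresh $c'$, the cluster $S_{X_t}(z_w,c')$ equals the singleton $\{z_w\}$ since no neighbor of $z_w$ has color $c'$ in $X_t$; and because $X_t$ and $Y_t$ agree off $v^*$ while $c'\neq Y_t(v^*)$, we also have $S_{Y_t}(z_w,c')=\{z_w\}$. Hence the greedy coupling performs the identity flip of $\{z_w\}$ in both chains, recoloring $z_w$ from $c_z$ to $c'$ simultaneously. Since $c'\notin\{X_{t+1}(v^*),Y_{t+1}(v^*)\}=\{X_t(v^*),Y_t(v^*)\}$ and no other neighbor of $w$ is altered, $z_w$ ceases to be a blocking neighbor; as it was the only one, $w\in\unblockedShort{t+1}{v^*}$. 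Each pair $(z_w,c')$ is selected with probability $1/(nk)$ and the singleton flip is executed with probability $P_1/1=1$; the events for distinct fresh $c'$ are disjoint, giving $\Prob{w\in\unblockedShort{t+1}{v^*}}\geq \alpha/(nk)$. As $w\in \singleBlockedShort{t}{v^*}$ implies $w\notin \unblockedShort{t}{v^*}$, summing by linearity of expectation over the $\degree{1}{t,c}{v^*}$ such vertices $w$ yields the claimed bound.

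The only delicate point --- and the step I would single out --- is verifying that the singleton-cluster flip is realized identically in both chains, especially when $v^*\in N(z_w)$: one must ensure the alternating $(c_z,c')$-path from $z_w$ does not reach $v^*$. This is handled precisely by forbidding $c'\in\{X_t(v^*),Y_t(v^*)\}$, which costs only two extra colors in the count and leaves the $\alpha$ lower bound intact. No property of the flip probabilities beyond $P_1=1$ is used, and no structural assumption on $G$ beyond the maximum-degree bound is required.
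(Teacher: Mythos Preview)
Your argument is correct and follows essentially the same route as the paper: for each singly blocked $w\in N_{t,c}(v^*)$ you identify its unique blocking neighbor $z_w$, count at least $k-\Delta-2$ singleton recolorings of $z_w$ (to colors outside $N(z_w)$'s palette and outside $\{X_t(v^*),Y_t(v^*)\}$) that are identity-coupled, and sum by linearity. Your treatment of the delicate case $v^*\in N(z_w)$ is in fact more explicit than the paper's, which states the same $\alpha$ bound but with terser justification.
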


\begin{proof}
Let $u \in \singleBlockedShort{t}{v^*}\cap N_{t,c}(v^*)$. Since $u$ is a singly blocked neighbor of $v^*$ there must exist a unique neighbor $w \in N(u)\setminus(N(v^*) \cup \{v^*\})$ where $X_t(w) \in \{X_t(v^*),Y_t(v^*)\}$.
There are at least $\lvert \mathcal{A}_t(w) \setminus \{X_t(v^*),Y_t(v^*)\} \rvert \geq (\lvert \colorList{w} \rvert - d(w) - 2) \geq (k - \Delta -2) = \alpha$ colors (other than $X_t(v^*)$ and $Y_t(v^*)$) that do not appear in the neighborhood of $w$ in~$X_t$ and~$Y_t$.
Thus, for each color $c' \in \mathcal{A}_t(w) \setminus \{X_t(v^*),Y_t(v^*)\}$, there is a cluster of size $1$ that contains just $w$ and 
flipping such a cluster results in $X_{t+1}(w) = Y_{t+1}(w)=c'$.
Therefore, for each $u\in N^1_{t,c}(v^*)$
there are at least $\alpha=(k - \Delta - 2)$ clusters of size $1$ (and thus flip with probability $P_1/(nk) = 1/(nk)$) for which flipping one of these clusters results in $u \in \unblockedShort{t+1}{v^*}$.

Summarizing the above calculations we have the following:
\begin{align*}
    \Exp{ \lvert (\unblockedShort{t+1}{v^*} \setminus \unblockedShort{t}{v^*}) \cap N_{t,c}(v^*) \rvert} 
    \geq 
        \Exp{ \lvert (\unblockedShort{t+1}{v^*} \cap \singleBlockedShort{t}{v^*}) \cap N_{t,c}(v^*) \rvert} 
   \geq  
   \degree{1}{t,c}{v^*}\frac{\alpha}{nk}.
\end{align*}
\end{proof}

We now want to bound the probability that an unblocked neighbor $u$ of $v^*$ is no longer unblocked after a single step of the greedy coupling. 
Note that it does not suffice to only consider the probability $u$ becomes blocked because there may no longer be a disagreement at $v^*$ after a step of the greedy coupling, which results in $u$ being neither blocked or unblocked (by definition).

There is a subtle and important trade-off in the following lemma.
There are some unblocked neighbors $u$ for which the probability $u$ becomes blocked is relatively high; in these scenarios we will argue that there is a reasonable probability of having new unblocked vertices, namely, when $u$ becomes a disagreement it will have many unblocked neighbors.

To capture the above trade-off, we need to consider two terms together, namely: 
\begin{align}
\label{eqn:two_terms}
\Exp{ (\unblockedShort{t}{v^*} \setminus \unblockedShort{t+1}{v^*}) \cap N_{t,c}(v^*)) }
\mbox{ and } \sum_{u \in \mathcal{L}_{t,c}}\Exp{ \degree{0}{t+1}{u}}.
\end{align}
Note that $ (\unblockedShort{t}{v^*} \setminus \unblockedShort{t+1}{v^*}) \cap N_{t,c}(v^*)$ includes those neighbors $w \in \unblockedShort{t}{v^*} \cap N_{t,c}(v^*)$ such that: (i) $w \in \singleBlockedShort{t+1}{v^*}$ if $X_{t+1}(v^*) \neq Y_{t+1}(v^*)$, or (ii) $w \in \multiBlockedShort{t+1}{v^*}$ if $X_{t+1}(v^*) = Y_{t+1}(v^*)$.
Hence, the first term of \cref{eqn:two_terms} is the expected number of $c$ colored vertices in the neighborhood of $v^*$ that go from unblocked in $(X_t,Y_t)$ to one of the two scenarios (i) or (ii).
The summand $\Exp{ \degree{0}{t+1}{u}}$ in the second term of \cref{eqn:two_terms} is the expected number of unblocked neighbors of a vertex $u$ in $(X_{t+1}, Y_{t+1})$ assuming $u$ is a disagreement at time $t+1$ since it equals $0$ if $X_{t+1}(u) = Y_{t+1}(u)$ (by definition).
This trade-off is a key idea in our improved bound.

\begin{lemma}
\label{lem:expected_to_unblocked}
    If \cref{fp:P_2_bound} and \cref{fp:P_3_bound} hold then for $(X_t,Y_t) \in \Omega^2_{v^*}$ and color $c \in \colorList{v^*}$ such that $d_{t,c}(v^*) \geq 1$,
    \begin{align*}
           \Exp{ \lvert (\unblockedShort{t}{v^*} \setminus \unblockedShort{t+1}{v^*} ) \cap N_{t,c}(v^*)) \rvert } - \sum_{u \in \cset}\Exp{ \degree{0}{t+1}{u}} \leq \degree{0}{t,c}{v^*}\frac{\beta}{nk}.
    \end{align*}
\end{lemma}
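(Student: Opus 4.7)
The plan is to analyze one coupled step of the greedy coupling and, for each coupled transition $(X_t,Y_t)\to (X_{t+1},Y_{t+1})$, track two quantities: the number of $u \in N^0_{t,c}(v^*)$ that drop out of $\unblockedShort{t+1}{v^*}$, and the number of unblocked neighbors created at any new disagreement in $\cset$. By linearity of expectation,
\[
    \Exp{\bigl|(\unblockedShort{t}{v^*} \setminus \unblockedShort{t+1}{v^*}) \cap N_{t,c}(v^*)\bigr|} = \sum_{u \in N^0_{t,c}(v^*)} \Prob{u \notin \unblockedShort{t+1}{v^*}},
\]
so the goal is to show that per $u \in N^0_{t,c}(v^*)$, the \emph{loss} $\Prob{u \notin \unblockedShort{t+1}{v^*}}$ minus the share of $\Exp{\degree{0}{t+1}{u}}$ attributable to $u$ is at most $\beta/(nk)$.

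The event $u \notin \unblockedShort{t+1}{v^*}$ occurs in two ways: (a) $v^*$ is no longer a disagreement after the step (so $\unblockedShort{t+1}{v^*}=\emptyset$ by definition), or (b) $v^*$ is still a disagreement but some $w \in N(u) \setminus \{v^*\}$ now has $X_{t+1}(w) \in \{X_{t+1}(v^*), Y_{t+1}(v^*)\}$. Case~(a) is driven by singleton flips $\{v^*\}\to c^*$ for $c^* \in \mathcal{A}_t(v^*)$ coupled in both chains. Case~(b) requires a flip of a cluster of size $j \in \{2,\ldots,6\}$ containing either $v^*$ or some $w \in N(u)\setminus\{v^*\}$, each chosen with probability $P_j/(nk)$. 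I would enumerate these clusters using the greedy coupling from \cref{sec:coupling-details}, noting that clusters of size $\geq 2$ containing $v^*$ are precisely the $(v^*,c')$-clusters for colors $c'$ with $\degree{}{t,c'}{v^*} \geq 1$, whose sizes relate to the counts $\Delta_1, \Delta_2, \ldots$ through the local multiplicities of colors in $N(v^*)$.

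The decisive trade-off arises in $\mathcal{D}_{t,c}$-flips of the singleton $\{u\}$, which turn $u$ itself into a disagreement in $(X_{t+1},Y_{t+1})$. Because $u$ was unblocked at time $t$, essentially all $\Delta$ non-$v^*$ neighbors of $u$ survive as unblocked neighbors of $u$ at time $t+1$; the only exceptions are those whose color has appeared once or twice in $N(v^*)$ and can be affected by the simultaneous $(v^*,c')$-flip that pairs with the $\{u\}$-flip in the coupling. This is the source of the $-P_2(\Delta_1+\Delta_2)$ term in $\beta$: it captures exactly the shortfall in $\Exp{\degree{0}{t+1}{u}}$ caused by these colors, weighted by the $2$-cluster flip probability $P_2$. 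Properties \cref{fp:P_2_bound} and \cref{fp:P_3_bound} are then used to compare the size-$2$ and size-$3$ contributions and collapse them into the stated $\beta/(nk)$ bound.

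The main obstacle is the combinatorial bookkeeping: a single flipped cluster can simultaneously affect multiple $u \in N^0_{t,c}(v^*)$ and interact with the $v^*$-disagreement in asymmetric ways under the greedy coupling, so one must carefully match each contribution to the correct side of the inequality without double counting. I expect that, after writing out the per-flip contributions for clusters of each size and color class, the final inequality reduces to verifying a single arithmetic identity that invokes \cref{fp:P_2_bound}, \cref{fp:P_3_bound}, and the bounds on $P_j$ for $j \geq 4$ encoded by \cref{fp:third_gap} and \cref{fp:forth_gap}.
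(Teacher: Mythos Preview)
Your high-level reduction to a per-$u$ bound is right and matches the paper: it suffices to show, for each $u\in F^0_t(v^*)\cap N_{t,c}(v^*)$, that
\[
\Prob{u\notin F^0_{t+1}(v^*)} - \Exp{d^0_{t+1}(u)} \;\leq\; \frac{\beta}{nk}.
\]
However, your account of where the three pieces of $\beta = k - P_2(\Delta_1+\Delta_2) + (1+P_2)\Delta$ come from is misrouted, and following your plan as written would not close.

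The $-P_2(\Delta_1+\Delta_2)$ term does \emph{not} come from the gain $\Exp{d^0_{t+1}(u)}$. It comes entirely from bounding $\Prob{\EE_{v^*}}$, the probability that $v^*$ is recolored in at least one chain. Your case~(a) is not ``driven by singleton flips'': one must count \emph{all} clusters through $v^*$, and for a color $c'$ with $d_{t,c'}(v^*)\in\{1,2\}$ the two clusters $S_{X_t}(v^*,c')$, $S_{Y_t}(v^*,c')$ each have size $\geq d_{t,c'}(v^*)+1$, contributing at most $2P_{d_{t,c'}(v^*)+1}$ rather than $1$. Summing gives $\Prob{\EE_{v^*}}\leq \frac{1}{nk}\bigl(k-(1-2P_2)\Delta_1-\tfrac12(1-2P_3)\Delta_2\bigr)$, and then \cref{fp:p1_p2_bound} together with $P_2+P_3\leq 1/2$ (which is exactly what summing \cref{fp:P_2_bound} and \cref{fp:P_3_bound} gives) collapses this to $\frac{1}{nk}\bigl(k-P_2(\Delta_1+\Delta_2)\bigr)$. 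That is the only place \cref{fp:P_2_bound} and \cref{fp:P_3_bound} are used; \cref{fp:third_gap} and \cref{fp:forth_gap} are not invoked at all.

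The $(1+P_2)\Delta$ term, and the trade-off with $\Exp{d^0_{t+1}(u)}$, are handled by a per-$w$ case split over $w\in N(u)\setminus\{v^*\}$, not by reasoning about ``which neighbors of $u$ survive as unblocked.'' For each such $w$ one shows
\[
\Prob{X_{t+1}(w)\in\{X_t(v^*),Y_t(v^*)\}\text{ and }\overline{\EE_{v^*}}} - \Prob{w\in F^0_{t+1}(u)} \;\leq\; \frac{1+P_2}{nk}.
\]
If $w$ already has a neighbor colored $X_t(v^*)$ or $Y_t(v^*)$, the left-hand probability is at most $(1+P_2)/(nk)$ directly (one side is a cluster of size $\geq 2$). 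If $w$ has no such neighbor, that probability is $2/(nk)$, but now the coupled singleton flip at $u$ (which occurs with probability $\geq (1-P_2)/(nk)$) makes $u$ a disagreement with $w$ unblocked for it, and $2/(nk)-(1-P_2)/(nk)=(1+P_2)/(nk)$. Your narrative ties the gain to colors appearing once or twice in $N(v^*)$; the actual mechanism depends only on the local configuration at $N(w)$, which is unrelated to $\Delta_1,\Delta_2$.
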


\begin{proof}
Recall that $N_{t,c}(v^*) \subseteq \cset$. 
Thus, it will suffice to show that for all $u \in \unblockedShort{t}{v^*}\cap N_{t,c}(v^*)$  the following holds:
\begin{align}
         \ProbCond{ u \not \in \unblockedShort{t+1}{v^*}}{u \in \unblockedShort{t}{v^*} \cap N_{t,c}(v^*)} - \Exp{ \degree{0}{t+1}{u}} \leq \frac{\beta}{nk}.
         \label{eqn:per_u_bound}
\end{align} 

Consider $u \in \unblockedShort{t}{v^*} \cap N_{t,c}(v^*)$.  Hence, $u$ is an unblocked neighbor of $v^*$, and $X_t(u)=Y_t(u)=c$.

We begin by focusing on $\ProbCond{ u \not \in \unblockedShort{t+1}{v^*}}{u \in \unblockedShort{t}{v^*} \cap N_{t,c}(v^*)}$.
If $u$ is not an unblocked neighbor of $v^*$ after a single step of the greedy coupling (i.e., $u \not \in \unblockedShort{t+1}{v^*}$) then from the definition of $\unblockedShort{t+1}{v^*}$ it follows that $v^*$ was recolored in at least one of the chains or a neighbor $w \in N(u)\setminus \{v^*\}$ was recolored to $X_t(v^*)$ or $Y_t(v^*)$ in one of the chains.
Let $\EE_{v^*}$ be the event that $v^*$ is recolored in at least one chain (i.e., $X_{t+1}(v^*) \neq X_t(v^*)$ or $Y_{t+1}(v^*) \neq Y_t(v^*)$) and let $\overline{\EE_{v^*}}$ be the event that $v^*$ is not recolored in either chain (i.e., $X_{t+1}(v^*) = X_t(v^*)$ and $Y_{t+1}(v^*) = Y_t(v^*)$).
Then, we can write
\begin{multline*}
    \ProbCond{ u \not \in \unblockedShort{t+1}{v^*}}{u \in \unblockedShort{t}{v^*} \cap N_{t,c}(v^*)}\nonumber \\
    \qquad \leq \Prob{\EE_{v^*}} + \sum_{w \in N(u) \setminus \{v^*\}} \ProbCond{X_{t+1}(w) \in \{ X_{t}(v^*), Y_{t}(v^*)\}}{ \overline{\EE_{v^*}}}.
\end{multline*}
Therefore, to prove \cref{eqn:per_u_bound} it suffices to show the following:
\begin{align}
    \label{eqn:expected_to_unblocked_goal}
         \Prob{\EE_{v^*}} + \sum_{w \in N(u) \setminus \{v^*\}} \Prob{X_{t+1}(w) \in \{ X_{t}(v^*), Y_{t}(v^*)\}\mbox{ and } \overline{\EE_{v^*}}} - \Exp{ \degree{0}{t+1}{u}} \leq \frac{\beta}{nk}.
\end{align}

We now bound $\Prob{\EE_{v^*}}$ which is the probability that $v^*$ is recolored in at least one of the chains.
If $X_{t+1}(v^*) \neq X_t(v^*)$ or $Y_{t+1}(v^*) \neq Y_t(v^*)$ then a cluster containing~$v^*$ must have flipped in $X_t$ or $Y_t$.
There are $k$ clusters that contain~$v^*$ in each chain, one for each color in $\colorList{v^*}$.
For every color $c$ that does not appear in the neighborhood of $v^*$, there is a cluster in both chains that contains only $v^*$; the number of such colors is $\lvert \mathcal{A}_t(v^*)\rvert = \lvert \colorList{v^*} \rvert - \lvert \{c \in \masterList  :  \degree{}{t,c}{v^*} \geq 1\}\rvert$. Each of these clusters is of size $1$ and thus flips with probability $1/(nk)$. 
For every color $c$ that appears in the neighborhood of $v^*$ (i.e., $\degree{}{t,c}{v^*} \geq 1$), there are at most two clusters containing $v^*$ (and all neighbors of $v^*$ that are colored $c$), $S_{X_t}(v^*,c)$ and $S_{Y_t}(v^*,c)$.
Note that $S_{X_t}(v^*,c)$ and $S_{Y_t}(v^*,c)$ flip with probability $P_{\lvert S_{X_t}(v^*,c)  \rvert}/(nk)$ and $P_{\lvert S_{Y_t}(v^*,c) \rvert}/(nk)$ respectively. 
Also note that $\lvert S_{X_t}(v^*,c) \rvert \geq \degree{}{t,c}{v^*} + 1$ since it contains $v^*$ and $N_{t,c}(v^*)$ and similarly $\lvert S_{Y_t}(v^*,c)  \rvert \geq \degree{}{t,c}{v^*} + 1$.
Thus,
\begin{align}
    \Prob{\EE_{v^*}} &= \frac{1}{nk}\left( \lvert \mathcal{A}_t(v^*) \rvert + \sum_{c \in \masterList : \atop \degree{}{t,c}{v^*} \geq 1} \left(P_{\lvert S_{X_t}(v^*,c)  \rvert} + P_{\lvert S_{Y_t}(v^*,c) \rvert}\right)\right) \nonumber \\
    &=\frac{1}{nk}\left( \lvert \colorList{v^*} \rvert - \sum_{c \in \masterList : \atop \degree{}{t,c}{v^*} \geq 1} \left(1 - P_{\lvert S_{X_t}(v^*,c)  \rvert} - P_{\lvert S_{Y_t}(v^*,c) \rvert}\right)\right) \nonumber \\
    &\leq \frac{1}{nk }\left( \lvert \colorList{v^*} \rvert - \sum_{c \in \masterList : \atop \degree{}{t,c}{v^*} \geq 1} \left(1 - 2P_{\degree{}{t,c}{v^*} + 1}\right)\right) \nonumber \\
    &\leq \frac{1}{nk }\left( \lvert \colorList{v^*} \rvert - \left(1 - 2P_{2}\right) \Delta_1 - \left(1 - 2P_{3}\right) \Delta_2/2\right) \nonumber \\
    &\leq \frac{1}{nk }\left( \lvert \colorList{v^*} \rvert - P_2 (\Delta_1 + \Delta_2)\right). 
    \label{eqn:prob_v_changes}
\end{align}
where the last inequality holds because $P_2 \leq 1/3$ by \cref{fp:p1_p2_bound} and $1/2 \geq P_2 + P_3$ which is obtained by summing \cref{fp:P_2_bound} and \cref{fp:P_3_bound} and dividing by $2$. 

We now want to show for all $w \in N(u)\setminus\{v^*\}$:
\begin{equation}
    \label{eqn:per_w_bound}
    \Prob{X_{t+1}(w) \in \{ X_{t}(v^*), Y_{t}(v^*)\}\mbox{ and }\overline{\EE_{v^*}}} - \Prob{w \in \unblockedShort{t+1}{u}} \leq \frac{1+P_2}{nk}.
\end{equation}
If \cref{eqn:per_w_bound} holds then summing it over all $w\in N(u)\setminus\{v^*\}$ and combining it with \cref{eqn:prob_v_changes} proves \cref{eqn:expected_to_unblocked_goal}, which completes the proof of the lemma.

It remains to prove that \cref{eqn:per_w_bound} holds for all $w \in N(u)\setminus\{v^*\}$.
Let $w \in N(u)\setminus\{v^*\}$ and recall that $u \in \unblockedShort{t}{v^*}\cap N_{t,c}(v^*)$.
We consider two cases: case (i) is that $w$ has at least one neighbor besides $u$ with color $X_{t}(v^*)$ or $Y_{t}(v^*)$ in $X_t$ or in $Y_t$, and case (ii) is that $w$ has no neighbors in $X_t$ with colors $X_{t}(v^*)$ or $Y_{t}(v^*)$ and $w$ has no neighbors in $Y_t$ with colors $X_{t}(v^*)$ or $Y_{t}(v^*)$.

Suppose case (ii) occurs, hence no neighbors of $w$ are colored $X_{t}(v^*)$ or $Y_{t}(v^*)$ in $X_t$ or $Y_t$. 
Then $S_{X_t}(w,X_t(v^*))$ and $S_{Y_t}(w,Y_t(v^*))$ contain only the vertex $w$ and flip in each chain with probability $1/(nk)$.
Thus, the probability of $w$ recoloring to $X_{t}(v^*)$ or $Y_{t}(v^*)$ is $2/(nk)$.
Notice that if $S_{X_t}(w,X_t(v^*))$ and $S_{Y_t}(w,Y_t(v^*))$ flips then $v^*$ does not flip since $w \neq v^*$.
With probability at least
$(1-P_2)/(nk)$ the greedy coupling flips $S_{X_t}(u,Y_t(v^*))$ in $X_t$ and flips $S_{Y_t}(u,X_t(v^*))$ in $Y_t$ (see \cref{sec:coupling-details}) and no other clusters flip at that time (hence no other vertices change colors).
Hence, with probability at least $(1-P_2)/(nk)$, $X_{t+1}(u) = Y_t(v^*)$, $Y_{t+1}(u) = X_t(v^*)$ and for all $z \neq u$, $X_{t+1}(z) = X_t(z)$ and $Y_{t+1}(z) = Y_t(z)$.
Thus, with probability at least $(1-P_2)/(nk)$ we have $w \in \unblockedShort{t+1}{u}$ since $w$ has no neighbor besides $u$ that are colored $X_{t+1}(v^*) = X_{t}(v^*)$ or $Y_{t+1}(v^*)=Y_t(v^*)$.
Therefore, in this case,
\begin{equation*}
    \Prob{X_{t+1}(w) \in \{ X_{t}(v^*), Y_{t}(v^*)\}\mbox{ and }\overline{\EE_{v^*}}} - \Prob{w \in \unblockedShort{t+1}{u}} \leq \frac{2}{nk} - \frac{1 - P_2}{nk} = \frac{1+P_2}{nk}
\end{equation*}
and \cref{eqn:per_w_bound} holds in this case.

Now suppose that case (i) holds so there is at least one neighbor $z \in N(w) \setminus \{u\}$
such that $X_t(z) = X_t(v^*)$ or $Y_t(z) = Y_t(v^*)$.
Suppose without loss of generality that $X_t(z) = X_t(v^*)$.
If $S_{X_t}(w,X_t(z)) = S_{Y_t}(w,X_t(z))$ then it contains $w$ and $z$ and thus flips with probability at most $P_2/(nk)$. 
If $S_{X_t}(w,X_t(z)) \neq S_{Y_t}(w,X_t(z))$, then it must be the case that $S_{X_t}(w,X_t(z)) = S_{Y_t}(w,X_t(z)) \cup \{v^*\}$ since $(X_t,Y_t) \in \Omega^2_{v^*}$.
Thus, flipping $S_{X_t}(w,X_t(z))$ in $X_t$ will recolor $v^*$.
Moreover, $S_{Y_t}(w,X_t(z))$ must contain $w$ and $z$ and thus flip with probability at most $P_2/(nk)$.
Hence, the probability of recoloring $w$ to $X_{t}(v^*)$ conditioned on $v^*$ not being recolored is at most $P_2/(nk)$. 
Likewise, if there exists $z' \in N(w)$ such that $Y_t(z) = Y_t(v^*)$ then the probability of recoloring $w$ to $X_{t}(v^*)$ conditioned on $v^*$ not being recolored is at most $P_2/(nk) \leq 1/(nk)$.
Finally, if there exists no $z' \in N(w)$ such that $Y_t(z') = Y_t(v^*)$ then the probability of recoloring $w$ to $X_{t}(v^*)$ is at most $1/(nk)$ since, similar to the previous case, $S_{Y_t}(w,Y_t(v^*)) = S_{X_t}(w,Y_t(v^*))$ is a size $1$ cluster and flips with probability $1/(nk)$. 
Thus, 
\[
    \Prob{X_{t+1}(w) \in \{ X_{t}(v^*), Y_{t}(v^*)\}\mbox{ and }\overline{\EE_{v^*}}} \leq \frac{1+P_2}{nk}.
\]
Therefore, in this case, \cref{eqn:per_w_bound} holds since $\Prob{w \in \unblockedShort{t+1}{u}} \geq 0$.
\end{proof}

We now have the tools to prove \cref{lem:wnmet_bound}, which states that $\widetilde{W}^c_t \leq \widetilde{Z}^c_t$.

\begin{proof}[Proof of \cref{lem:wnmet_bound}]
We start by observing:
\begin{multline}
\lvert \unblockedShort{t+1}{v^*} \cap N_{t,c}(v^*)\rvert - \degree{0}{t,c}{v^*} = \lvert \unblockedShort{t+1}{v^*} \cap N_{t,c}(v^*)\rvert - \lvert \unblockedShort{t}{v^*} \cap N_{t,c}(v^*)\rvert \\
=\lvert (\unblockedShort{t+1}{v^*} \setminus \unblockedShort{t}{v^*})\cap N_{t,c}(v^*) \rvert - \lvert (\unblockedShort{t}{v^*} \setminus \unblockedShort{t+1}{v^*})\cap N_{t,c}(v^*) \rvert \label{eqn:set_bound},
\end{multline}
where the second line uses the basic fact that for any sets $A$ and $B$ then $|A| - |B| = |A\setminus B| - |B\setminus A|$.
We can now bound $\widetilde{W}^c_t$
as follows:
\begin{align}
\widetilde{W}^c_t 
    &  =  \Exp{ \nmet_{v^*,\cset}(X_{t+1},Y_{t+1}) - \nmet_{v^*,\cset}(X_t,Y_t)} \nonumber \\
    &  = \Exp{H_{v^*,c}(X_{t+1},Y_{t+1})} \nonumber &\mbox{(by definition)} \\
    & - \frac{\eta}{\Delta} \Exp{\sum_{z \in \cset} \rvert \unblockedShort{t+1}{z}\lvert \nonumber + \lvert \unblockedShort{t+1}{v^*} \cap N_{t,c}(v^*) \rvert}  - H_{v^*,c}(X_{t},Y_{t}) + \frac{\eta}{\Delta}  \degree{0}{t,c}{v^*} \nonumber  \\
    &  \leq W_t - \frac{\eta}{\Delta} \sum_{z \in N_{c,t}(v^*)} \Exp{\rvert \unblockedShort{t+1}{z}\lvert}  - \frac{\eta}{\Delta} \left(\Exp{\lvert \unblockedShort{t+1}{v^*} \cap N_{t,c}(v^*)\rvert} +  \degree{0}{t,c}{v^*} \right) &\mbox{($N_{t,c}(v^*) \subseteq \cset$)} \nonumber \\
    &  = W_t - \frac{\eta}{\Delta} \sum_{z \in N_{c,t}(v^*)} \Exp{\rvert \unblockedShort{t+1}{z}\lvert} 
     \nonumber \\
    &  \qquad  -  \frac{\eta}{\Delta}\Exp{\lvert ( \unblockedShort{t+1}{v^*} \setminus \unblockedShort{t}{v^*})\cap N_{t,c}(v^*) \rvert} &\mbox{(by \cref{eqn:set_bound})} \nonumber \\
    &\qquad + \frac{\eta}{\Delta}\Exp{\lvert (\unblockedShort{t}{v^*} \setminus \unblockedShort{t+1}{v^*})\cap N_{t,c}(v^*) \rvert}    \nonumber\\
    &  \leq W_t - \degree{1}{t}{v^*} \frac{\alpha}{nk}   &\mbox{(by \cref{lem:expected_from_blocked})} \nonumber\\
    &  \qquad - \frac{\eta}{\Delta} \sum_{z \in N_{c,t}(v^*)} \Exp{\rvert \unblockedShort{t+1}{z}\lvert} + \Exp{\frac{\eta}{\Delta}\lvert ( \unblockedShort{t}{v^*} \setminus \unblockedShort{t+1}{v^*})\cap N_{t,c}(v^*) \rvert} \nonumber \\
    &  \leq W_t +  \degree{0}{t,c}{v^*}\frac{\beta \eta}{\Delta nk} - \degree{1}{t}{v^*} \frac{\alpha \eta}{\Delta nk} &\mbox{(by \cref{lem:expected_to_unblocked})}\nonumber\\
    &  = \widetilde{Z}^c_t. \nonumber
\end{align}
\end{proof}

\subsection{Bounding $\widetilde{W}^c_t$ and $\widetilde{Z}^c_t$}
Fix a graph $G = (V,E)$, a pair of states $({X_t},{Y_t}) \in \Omega^2_{v^*}$, and a color $c \in \masterList$.
The following two lemmas significantly reduce the initial configurations we have to consider by showing it suffices to consider the case where $v^*$ has at most $2$ neighbors that are color $c$, and all those neighbors are of the same type: unblocked (which is then considered in \Cref{lem:all_unblocked}), singly blocked (\Cref{lem:all_singly_blocked}), or multiblocked (\Cref{lem:all_multiblocked}).  The proofs of these lemmas are deferred to \Cref{sec:proofs} but follow from \cref{lem:wnmet_bound} and \cref{flip-properties}.

This first lemma handles the case where $d_{t,c} \geq 3$. 
In the proof of this lemma, we observe that as $d_{t,c}(v^*)$ grows, the expected change divided by $d_{t,c}(v^*)$ shrinks. This means that the gain from having additional colors in $\mathcal{A}_{c}(v^*)$ (from having more neighbors colored $c$) quickly exceeds the cost of having additional clusters that could flip and cause new disagreements. 

Recall,
\[
\beta = \lvert \colorList{v^*} \rvert - P_2 (\Delta_1 + \Delta_2) +(1+P_2) d(v^*) \quad \quad \mbox{ and } \quad \quad 
\alpha=(k - \Delta - 2)
\]
from \cref{lem:expected_to_unblocked,lem:expected_from_blocked} respectively. 
In the following lemma we will assume that $\eta \beta/\Delta \leq P_2$, we will show that this holds in our parameter region in the upcoming proof of \cref{thrm:ColoringBound}.

\begin{lemma}
\label{lem:dc3}
    If \cref{flip-properties}
    hold, $c \in \colorList{v^*}$, $\eta \beta/\Delta \leq P_2$, and $\degree{}{t,c}{v^*} \geq 3$ then  
    \[
    nk \widetilde{Z}^c_t \leq -1 + \degree{}{t,c}{v^*} \left(\frac{4}{3}  + P_2 + \frac{4}{3} P_4\right).
    \]
\end{lemma}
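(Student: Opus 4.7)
The plan is to first reduce the statement to a bound on $nk W^c_t$ alone, and then run the greedy coupling analysis on the clusters of $\mathcal{D}_{t,c}$.

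I begin by unpacking Definition~\ref{def:boundWnmet} to write
\[
nk\widetilde{Z}^c_t \;=\; nk W^c_t \;+\; d^0_{t,c}(v^*)\,\tfrac{\eta\beta}{\Delta} \;-\; d^1_{t,c}(v^*)\,\tfrac{\eta\alpha}{\Delta}.
\]
In our parameter regime ($k\geq \lambboundsimple\Delta$, $\Delta\geq 125$), the quantity $\alpha = k-\Delta-2$ is nonnegative, so the last term is nonpositive and may be dropped when passing to an upper bound. The hypothesis $\eta\beta/\Delta \leq P_2$, together with the trivial inequality $d^0_{t,c}(v^*) \leq d_{t,c}(v^*)$, bounds the middle term by $d_{t,c}(v^*)P_2$. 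This reduces the lemma to proving
\[
nk W^c_t \;\leq\; -1 + d_{t,c}(v^*)\bigl(\tfrac{4}{3} + \tfrac{4}{3}P_4\bigr).
\]

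For this remaining inequality, I would decompose $nk W^c_t$ over the clusters in $\mathcal{D}_{t,c}$, which are the only clusters whose flips can change $H_{\mathcal{L}_{t,c}}$. The two large $v^*$-clusters $S_{X_t}(v^*,c)$ and $S_{Y_t}(v^*,c)$ have size at least $d_{t,c}(v^*)+1 \geq 4$, so each flips with probability at most $P_4/(nk)$; under the greedy coupling they account for a contribution absorbed into the $(4/3)P_4\, d_{t,c}(v^*)$ term. For each $u \in N_{t,c}(v^*)$, the small clusters $S_{X_t}(u,Y_t(v^*))$ and $S_{Y_t}(u,X_t(v^*))$ are paired by the greedy coupling (\cref{sec:coupling-details}) to minimize the expected increase in $H_{\mathcal{L}_{t,c}}$; using the geometric decay \cref{fp:decreasing} together with the gap inequalities \cref{fp:second_gap,fp:third_gap,fp:forth_gap,fp:P_2_bound,fp:P_3_bound}, the net per-neighbor contribution is at most $4/3$, uniformly across the unblocked, singly blocked, and multiblocked cases. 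The additive $-1$ reflects the amortization noted in the excerpt's setup: as $d_{t,c}(v^*)$ grows, the flip-probability budget of the large $v^*$-clusters is spread across more $c$-colored neighbors, so the cost per neighbor strictly shrinks and a fixed savings of $1$ can be extracted at $d_{t,c}(v^*)\geq 3$.

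The main obstacle is the per-neighbor case analysis. The two small clusters anchored at a neighbor $u$ can each have size anywhere from $1$ up to $6$ (above which $P_j = 0$ by \cref{fp:p1_p2_bound}), and $u$ can be unblocked, singly blocked, or multiblocked, so the number of distinct local situations is sizeable. The flip properties \cref{fp:decreasing}--\cref{fp:P_3_bound} are tuned precisely so that, in every such situation, the appropriate weighted sum of successive $P_j$'s collapses to a quantity bounded by $4/3$; verifying this uniformly across the case table is where the essential work lies, and it is what allows the $-1$ to survive into the final bound.
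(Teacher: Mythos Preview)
Your reduction in the first paragraph is too lossy, and the target inequality you extract for $nk W^c_t$ is in fact false. After replacing $d^0_{t,c}(v^*)\tfrac{\eta\beta}{\Delta}$ by the crude bound $d_{t,c}(v^*)P_2$, you need
\[
nk W^c_t \;\leq\; -1 + d_{t,c}(v^*)\Bigl(\tfrac{4}{3} + \tfrac{4}{3}P_4\Bigr),
\]
but take $d_{t,c}(v^*) = 3$ with every $c$-colored neighbor singly blocked, say $a_i = 1$, $b_i = 2$ for $i=1,2,3$. Then $A = 4$, $B = 7$, and \cref{def:boundWH} gives $nk Z^c_t = 2P_4 + (3 - P_4) + 3P_2 = 3 + 3P_2 + P_4 \approx 4.06$, whereas your target is $3 + 4P_4 \approx 3.35$. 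Since $W^c_t = Z^c_t$ on high-girth graphs, this is a genuine counterexample to the reduced claim, and no amount of case analysis on the small clusters will rescue it.

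What goes wrong is that you severed the trade-off between the Hamming term and the metric correction. A \emph{blocked} neighbor contributes up to $1 + P_2$ to $nk Z^c_t$ but nothing to the $\tfrac{\eta\beta}{\Delta}$ term; an \emph{unblocked} neighbor contributes only $1$ to $nk Z^c_t$ but picks up $\tfrac{\eta\beta}{\Delta} \leq P_2$ from the metric. In either case the \emph{combined} per-neighbor cost is at most $1 + P_2$, but neither piece in isolation satisfies the bound you wrote down. The fix is to retain $d^0_{t,c}(v^*)$ when bounding $Z^c_t$, obtaining
\[
nk Z^c_t \;\leq\; 4P_4 + d_{t,c}(v^*) + \bigl(d_{t,c}(v^*) - d^0_{t,c}(v^*)\bigr)P_2,
\]
and only then invoke $\tfrac{\eta\beta}{\Delta} \leq P_2$ to collapse $(d_{t,c}(v^*) - d^0_{t,c}(v^*))P_2 + d^0_{t,c}(v^*)\tfrac{\eta\beta}{\Delta} \leq d_{t,c}(v^*)P_2$. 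This yields $nk\widetilde{Z}^c_t \leq d_{t,c}(v^*)(1+P_2) + 4P_4$, and the $-1$ then drops out of $d_{t,c}(v^*) \geq 3$ via the elementary inequality $1 + 4P_4 \leq \tfrac{1}{3}d_{t,c}(v^*)(1 + 4P_4)$.
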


The following lemma handles the case where $c \not \in \colorList{v^*}$.
In this case, we prove a weaker bound that will suffice in our proof of \cref{thrm:ColoringBound} because if $c \not \in \colorList{v^*}$ and $\degree{}{t,c}{v^*} \geq 1$, then there exists an extra $c' \in \colorList{v^*}$ such that $\degree{}{t,c}{v^*}=0$.

\begin{lemma}
    \label{lem:not_in_neighbor}
     If \cref{flip-properties} hold and $c \not \in \colorList{v^*}$ then
    \begin{align*}
        nk \widetilde{W}^c_t &\leq \degree{}{t,c}{v^*} (1+P_2).
    \end{align*}
\end{lemma}

This next lemma handles asymmetric cases: those cases where $\degree{i}{t,c}{v^*} >0$ and $ \degree{j}{t,c}{v^*} > 0$ 
for $i \neq j$. 
The proof follows from observing that the definition of $\widetilde{Z}^c_t$ is linear in $\degree{0}{t,c}{v^*}$, $\degree{1}{t,c}{v^*}$, and $\degree{\geq1}{t,c}{v^*}$. 
\begin{lemma}
    \label{lem:extremal_cases}
    If \cref{flip-properties} hold, $c \in \colorList{v^*}$, and $\degree{}{t,c}{v^*} \leq 2$ then the function $\widetilde{Z}^c_t$ is maximized when $\degree{}{t,c}{v^*} = \degree{0}{t,c}{v^*}$, $\degree{}{t,c}{v^*} = \degree{1}{t,c}{v^*}$, or $\degree{}{t,c}{v^*} = \degree{\geq2}{t,c}{v^*}$.
\end{lemma}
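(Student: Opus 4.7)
The plan is to reduce the analysis to a linear optimization problem on a low-dimensional simplex. By \cref{def:boundWnmet}, we have
\[
nk\widetilde{Z}^c_t \;=\; nk\,W^c_t \;+\; \degree{0}{t,c}{v^*}\,\frac{\eta\beta}{\Delta} \;-\; \degree{1}{t,c}{v^*}\,\frac{\eta\alpha}{\Delta},
\]
and since $\beta$ and $\alpha$ depend only on $v^*$ (not on the distribution of blocking levels among $c$-colored neighbors of $v^*$), the last two terms are manifestly linear in $\degree{0}{t,c}{v^*}$ and $\degree{1}{t,c}{v^*}$. Hence the key step is to express $W^c_t$ itself as a linear function of the triple $\bigl(\degree{0}{t,c}{v^*},\degree{1}{t,c}{v^*},\degree{\geq 2}{t,c}{v^*}\bigr)$.

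To do this, I would decompose the expected change $W^c_t$ along the clusters of $\mathcal{D}_{t,c}$ from \cref{def:Coupling-set}. The clusters split naturally into two groups: (i) the two ``spanning'' clusters $S_{X_t}(v^*,c)$ and $S_{Y_t}(v^*,c)$, whose flip probabilities and resulting Hamming changes depend only on $\degree{}{t,c}{v^*}$ itself (which is linear, being $\degree{0}{t,c}{v^*}+\degree{1}{t,c}{v^*}+\degree{\geq 2}{t,c}{v^*}$); and (ii) for each neighbor $u\in N_{t,c}(v^*)$, the pair of clusters $S_{X_t}(u,Y_t(v^*))$, $S_{Y_t}(u,X_t(v^*))$, whose contribution under the greedy coupling depends only on the \emph{local type} of $u$ (unblocked, singly blocked, or multiblocked). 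Summing the per-$u$ contributions by type yields exactly a linear combination of $\degree{0}{t,c}{v^*}$, $\degree{1}{t,c}{v^*}$, and $\degree{\geq 2}{t,c}{v^*}$. Together with the spanning-cluster term, this shows $W^c_t$ — and therefore $\widetilde{Z}^c_t$ — is a linear function of the triple, as claimed in the proof outline.

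Once linearity is in hand, the conclusion is immediate from the vertex principle for linear functions on a simplex. For any fixed value of $\degree{}{t,c}{v^*}\in\{0,1,2\}$, the admissible triples form a simplex face
\[
\Bigl\{(a,b,c)\in\mathbb{Z}_{\geq 0}^3 \;:\; a+b+c = \degree{}{t,c}{v^*}\Bigr\},
\]
and a linear function on this set attains its maximum at a vertex, namely $(\degree{}{t,c}{v^*},0,0)$, $(0,\degree{}{t,c}{v^*},0)$, or $(0,0,\degree{}{t,c}{v^*})$. These three vertices correspond exactly to the three cases in the statement. Maximizing over $\degree{}{t,c}{v^*}\in\{0,1,2\}$ keeps the optimizer at a pure configuration (the $\degree{}{t,c}{v^*}=0$ case is trivially covered since it satisfies all three equalities vacuously).

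The main obstacle is the cluster-by-cluster verification that $W^c_t$ is genuinely linear in the three quantities: one must check that the per-neighbor contributions from clusters of type (ii) are insensitive to the types of the \emph{other} neighbors of $v^*$ colored $c$. This is essentially a bookkeeping exercise using the greedy coupling rules from \cref{sec:coupling-details}, but it is delicate because when $\degree{}{t,c}{v^*}=2$, the two neighbors share the spanning clusters; careful accounting via $\mathcal{A}_t(v^*)$, $\Delta_1$, and $\Delta_2$ (facilitated by \cref{flip-properties}) ensures cross-terms cancel. Given this, the rest of the proof is the short extremality argument above.
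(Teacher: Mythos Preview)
Your high-level outline---reduce to a linear function of the type counts and invoke the vertex principle---is the right endgame, and is what the paper does in the final step. But two of your structural claims are wrong, and they hide the actual work.

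First, your decomposition (i) asserts that the two spanning clusters $S_{X_t}(v^*,c)$ and $S_{Y_t}(v^*,c)$ have flip probabilities depending only on $\degree{}{t,c}{v^*}$. This is false: their sizes are $A=1+\sum_i a_i$ and $B=1+\sum_i b_i$, so $P_A,P_B$ depend on the individual cluster sizes $a_i,b_i$, not just on how many $c$-colored neighbors there are. Second, your decomposition (ii) asserts the per-$u$ contribution depends only on $u$'s type. But the greedy coupling couples the spanning cluster with the \emph{largest} small cluster, so $Q_i=P_{a_i}-P_A\mathbf{1}(i=i_{\max})$ depends both on $A$ and on which index achieves the maximum---both global quantities. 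Moreover, the type of $u$ does not pin down its cluster sizes: ``singly blocked'' only forces $\max\{a_i,b_i\}\geq 2$, and ``multiblocked'' only forces $a_i+b_i\geq 4$; the sizes can be arbitrarily large. So there is a genuine maximization over cluster sizes consistent with the given types, and this is not ``bookkeeping.''

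The paper handles this with a separate lemma (\cref{lem:max_vector}, analogous to \cite[Claim~6]{Vigoda}) showing that under \cref{flip-properties}, when $\degree{}{t,c}{v^*}=2$ the upper bound $Z^c_t$ is maximized at $a_1=a_2=1$ and $b_i\in\{1,2,3\}$. Only after this reduction do the cross-terms (in particular the term $(b_2-b_1)P_{b_1+b_2+1}$, which they discard using $b_1\geq b_2$) disappear, and $Z^c_t$ collapses to $2-P_2+P_3+\sum_i(b_i-1)P_{b_i}$ with $(b_i-1)P_{b_i}\in\{0,P_2,2P_3\}$ according to type. That is where linearity actually comes from; your proposal asserts it without supplying this step.
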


We now consider the case when $c \in \colorList{v^*}$.
Based on the above lemmas we can assume that all neighbors with a specific color $c \in \colorList{v^*}$ are all unblocked (i.e., $\degree{}{t,c}{v^*}=\degree{0}{t,c}{v^*}$), all singly blocked (i.e., $\degree{}{t,c}{v^*}=\degree{1}{t,c}{v^*}$), or all multiblocked (i.e., $\degree{}{t,c}{v^*}=\degree{\geq 2}{t,c}{v^*}$). The following three lemmas will handle each of these three cases.

This first lemma handles the case where all neighbors of color $c$ are unblocked. 
The critical observation is that while these colors will be ``charged'' a lot since unblocked neighbors can become blocked, their expected change in Hamming distance is relatively small because $P_2$ is so big. 

\begin{lemma}
\label{lem:all_unblocked}
     If \cref{flip-properties} hold, $c \in \colorList{v^*}$, and
    $\degree{0}{t,c}{v^*} = \degree{}{t,c}{v^*}$ then 
    \begin{align*}
        nk \widetilde{Z}^c_t &\leq -1 + \degree{}{t,c}{v^*} \left( 2 - P_2 + \frac{\eta \beta}{\Delta} \right).
    \end{align*}
\end{lemma}

The following lemma is similar to the previous one, but handles the case when all neighbors colored $c$ are singly blocked.
Without the new metric (i.e., setting $\eta = 0$) these bounds would not satisfy the desired bound of $ nk \widetilde{Z}^c_t \leq -1 + \degree{}{t,c}{v^*} \cdot \lambboundsimple$. 
However, since $\eta \not = 0$ and each of these cases has at least one singly blocked vertex, we get some benefit from the probability that a blocked neighbor becomes unblocked.

\begin{lemma}
\label{lem:all_singly_blocked}
     If \cref{flip-properties} hold, $c \in \colorList{v^*}$, and $\degree{1}{t,c}{v^*}= \degree{}{t,c}{v^*}$ then
    \begin{align*}
        nk \widetilde{Z}^c_t &\leq -1 + \degree{}{t,c}{v^*} \left( 2-P_3 - \frac{\eta \alpha}{\Delta} \right).
    \end{align*}
\end{lemma}

Finally, the following lemma handles the case when all neighbors of color $c \in \colorList{v^*}$ are multiblocked.
Since none of these cases will be affected by the new metric, it will be enough to bound the expected change in Hamming distance.

\begin{lemma}
    \label{lem:all_multiblocked}
     If \cref{flip-properties} hold, $c \in \colorList{v^*}$, and $\degree{\geq2}{t,c}{v^*}= \degree{}{t,c}{v^*}$ then
    \begin{align*}
        nk \widetilde{Z}^c_t &\leq -1 + \degree{}{t,c}{v^*} (2 -P_2 + 2P_3-2P_4).
    \end{align*}
\end{lemma}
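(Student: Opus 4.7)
The plan is to observe that the multiblocked hypothesis renders the new metric inactive: since $d^0_{t,c}(v^*) = d^1_{t,c}(v^*) = 0$, \Cref{def:boundWnmet} yields $nk\widetilde{Z}^c_t = nk W^c_t$. Thus it suffices to bound the expected change in Hamming distance restricted to $\cset$ under a single step of the greedy coupling; the $\eta$-terms do not come into play.

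The approach is to decompose $nkW^c_t$ as a weighted sum over cluster-flip events in $\mathcal{D}_{t,c}$. For each $u \in N_{t,c}(v^*)$, let $a_u$ (respectively $b_u$) denote the number of neighbors of $u$ in $V \setminus \{v^*\}$ colored $X_t(v^*)$ (respectively $Y_t(v^*)$); since $X_t$ and $Y_t$ agree outside $v^*$, this is well-defined, and the multiblocked assumption becomes $a_u + b_u \geq 2$ for every such $u$. Then $|S_{X_t}(u, Y_t(v^*))| \geq 1 + b_u$ and $|S_{Y_t}(u, X_t(v^*))| \geq 1 + a_u$, while the $v^*$-clusters $S_{X_t}(v^*,c)$ and $S_{Y_t}(v^*,c)$ contain $v^*$, all of $N_{t,c}(v^*)$, and their respective $X_t(v^*)$- or $Y_t(v^*)$-colored extensions (so in particular both have size at least $d_{t,c}(v^*) + 1$).

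Next I invoke Vigoda's greedy coupling from \Cref{sec:coupling-details}, which pairs $S_{X_t}(v^*,c)$ with $S_{Y_t}(v^*,c)$ to the maximum extent, so that the simultaneous flip eliminates the disagreement at $v^*$; summing these events contributes the $-1$ in the claimed bound. The positive contributions come from uncoupled flips of the neighbor-clusters $S_{X_t}(u,Y_t(v^*))$ and $S_{Y_t}(u,X_t(v^*))$, each contributing at most $P_{|S|} \cdot |S|$ new disagreements in $\cset$. Grouping these contributions per neighbor $u$ and using the cluster-size lower bounds above, the per-neighbor contribution is bounded by a quantity involving only $P_2,P_3,P_4$ (the relevant probabilities when $a_u + b_u \geq 2$). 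A short case analysis on $(a_u,b_u)$ subject to $a_u+b_u \geq 2$, with worst cases $(1,1)$ and $(2,0)/(0,2)$, combined with the monotonicity property \Cref{fp:decreasing} and the gap properties \Cref{fp:third_gap} and \Cref{fp:forth_gap}, consolidates the per-neighbor contribution to $2 - P_2 + 2P_3 - 2P_4$.

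The principal obstacle is handling neighbor-clusters that merge: for distinct $u, u' \in N_{t,c}(v^*)$ the alternating paths defining $S_{X_t}(u,Y_t(v^*))$ and $S_{X_t}(u',Y_t(v^*))$ may coincide outside $v^*$, so a single flip affects multiple neighbors simultaneously. The resolution is an amortized accounting: a merged cluster has strictly larger size and hence strictly smaller flip probability by \Cref{fp:decreasing}, and this smaller probability exactly offsets the larger number of affected vertices. Thus the per-neighbor bound persists even after merging, completing the proof.
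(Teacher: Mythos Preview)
Your proposal rests on a mischaracterization of the greedy coupling, and this propagates to an argument that cannot reach the stated bound. In \Cref{sec:coupling-details} the greedy coupling does \emph{not} pair the two large clusters $S_{X_t}(v^*,c)$ and $S_{Y_t}(v^*,c)$ with each other. It pairs each large cluster in one chain with the \emph{largest small cluster} in the other chain (e.g., $S_{X_t}(v^*,c)$ with $S_{Y_t}(u_{i'_{\max}},X_t(v^*))$), and then applies a maximal coupling to the residual probabilities of the matched small clusters $S_{X_t}(u_i,Y_t(v^*))$ and $S_{Y_t}(u_i,X_t(v^*))$. A simultaneous flip of the two large clusters would not even yield a net decrease: at $v^*$ the disagreement is removed, but every $u\in N_{t,c}(v^*)$ becomes a new disagreement (it goes to $X_t(v^*)$ in one chain and $Y_t(v^*)$ in the other), and all deeper vertices of both clusters become disagreements as well. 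So your explanation of the ``$-1$'' is incorrect; in the paper's argument the $-1$ is purely the rewriting $1-P_2+2(P_3-P_4)=-1+1\cdot(2-P_2+2P_3-2P_4)$ after the actual coupling calculation.

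Even granting you a correct coupling, your per-neighbor accounting via $|S|\,P_{|S|}$ on \emph{uncoupled} small-cluster flips is too coarse. Since $jP_j\le 1$ for all $j\ge 1$ under \Cref{fp:decreasing}, this only gives $a_iP_{a_i}+b_iP_{b_i}\le 2$ per neighbor, whereas the target is $2-P_2+2P_3-2P_4\approx 1.808$. The savings come precisely from the $-\min\{Q_i,Q'_i\}$ term produced by the maximal coupling of the paired small clusters, together with the big--small pairing absorbing $P_A,P_B$ mass at low cost. The paper therefore works not with $W^c_t$ but with the explicit upper bound $Z^c_t$ of \Cref{def:boundWH}, invokes \Cref{lem:max_dc1} and \Cref{lem:max_vector} to identify the extremal cluster sizes ($a_1=1$, $b_1=3$ for $d_{t,c}=1$; $a_i=1$, $b_i=3$ for $d_{t,c}=2$), and then evaluates $Z^c_t$ directly in each case. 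Your blocking-count case split never reaches that cluster-level optimization, and the merging concern you raise at the end is already handled by passing to $Z^c_t$ (which is the tree/high-girth worst case).
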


\Cref{lem:all_multiblocked,lem:all_singly_blocked,lem:all_unblocked,lem:extremal_cases} are proved in \Cref{sec:proofs}.  We now prove the main result \Cref{thrm:ColoringBound}.

 \subsection{Proof of \cref{thrm:ColoringBound}}
\label{sub:thrm:ColoringBound}
     
To prove \cref{thrm:ColoringBound}, we will prove the following stronger statement that holds for list colorings.   

\begin{theorem}
\label{thrm:ColoringBoundListLocal}
    For all $\Delta\geq 125$, for all $k \geq \lambboundsimple \Delta$, for all lists such that $k \geq \lvert \colorList{v} \rvert \geq d(v) + (k/\Delta - 1)\Delta$ for all $v \in V$, there exists a setting of the parameters for the flip dynamics with $P_j=0$ for all $j\geq 7$, so that for any graph $G$ on $n$ vertices with maximum degree $\Delta$, 
    \begin{align*}
        nk \widetilde{W}_t < -10^{-5} \Delta.
    \end{align*}
\end{theorem}

We can now prove \cref{thrm:ColoringBound} as a direct corollary of \cref{thrm:ColoringBoundListLocal,thm:path-coupling}. 

\begin{proof}[Proof of \cref{thrm:ColoringBound}]
    We apply \cref{thm:path-coupling} with $\delta = 10^{-5} \Delta /(nk)$
    to get that $T_{mix}(\varepsilon) \leq O(n \log (n/\varepsilon))) $, which completes the proof of the theorem (See \cref{def:wt}).
\end{proof}

We now prove \cref{thrm:ColoringBoundListLocal}.

\begin{proof}[Proof of \cref{thrm:ColoringBoundListLocal}]
We will, in fact, prove that the mixing time is $O(n\log{n})$ when $k>\lambbound\Delta$, which is slightly stronger than the $k\geq \lambboundsimple\Delta$ statement in \cref{thrm:ColoringBoundListLocal}.

We apply the Path Coupling Theorem (\cref{thm:path-coupling}) with the metric $\nmet$.  
We use the flip probabilities defined in \cref{flip-setting}. 
Recall $\eta := \Delta (P_2 - P_3)/(2k)$.
Observe that when $k \geq 9/5$, $P_2 = 0.324$, and $P_3 = 0.154$,
\[
    \eta = \Delta (P_2 - P_3)/(2k) \leq (6/19)P_2.
\]
Let 
    \[        \lambda := 
        \max\{2 - P_2 + {\eta \beta }/{\Delta},2 - P_3 -{\eta \alpha}/{\Delta}, 
            2 - P_2 + 2P_3 - 2P_4\}.
    \] 
    Consider $v^*\in V$ and $({X_t},{Y_t}) \in \Omega^2_{v^*}$.
    Then for all $c\in \colorList{v^*}$ where $1 \leq \degree{}{t,c}{v^*}\leq 2$, it follows from \cref{lem:all_unblocked,lem:all_singly_blocked,lem:all_multiblocked} that 
    \begin{equation}\label{bound-dc12ll}
        nk \widetilde{Z}^c_t \leq -1 + \lambda \degree{}{t,c}{v^*}.
    \end{equation}
    And for all $c \in \colorList{v^*}$ where $\degree{}{t,c}{v^*}\geq 3$, it follows from  \cref{lem:dc3} that
    \begin{equation}\label{bound-dc3ll}
        nk \widetilde{Z}^c_t \leq -1 +  \degree{}{t,c}{v^*} \left( \frac{4}{3} + P_2 + \frac{4}{3} P_4 \right) \leq -1 + 1.775  \degree{}{t,c}{v^*}
    \end{equation}
    where we used that $P_2 = 0.324$ and $P_4 = 0.088$. 
    
   Recall that for $i \geq 1$, $\Delta_i=\lvert \{ u \in N(v^*)  :  d_{{X_t}(u)} = i \mbox{ and } X_t(u) \in \colorList{v^*}\} \rvert$ denotes the number of neighbors of $v^*$ that are colored with a color in $\colorList{v^*}$ and appears exactly $i$ times in the neighborhood of $v^*$. Also, recall that $\Delta_{\geq 3}=  \sum_{i \geq 3} \Delta_{i}$.
    Let $\Delta' = \lvert \{ u \in N(v^*)  :  X_t(u) \not\in \colorList{v^*}\} \rvert$.
    
    When $P_2 = 0.324$, we can combine the above bounds as follows:
    \begin{align*}
        nk \widetilde{W}_t 
        &\leq -\lvert \mathcal{A}_t(v^*)\rvert + nk \sum_{c \in \masterList : \atop \degree{}{t,c}{v^*} \geq 1} \widetilde{W}^c_t 
        & \mbox{(by \cref{lem:total-per-color})} \\
        &\leq -\lvert \mathcal{A}_t(v^*)\rvert + nk \sum_{c \in \colorList{v^*} : \atop \degree{}{t,c}{v^*} \geq 1} \widetilde{Z}^c_t + nk \sum_{c \in \masterList \setminus \colorList{v^*} : \atop \degree{}{t,c}{v^*} \geq 1} \widetilde{W}^c_t &\mbox{(by \cref{lem:wnmet_bound})}
        \\
        &\leq -\lvert \mathcal{A}_t(v^*)\rvert + nk \sum_{c \in \colorList{v^*} : \atop \degree{}{t,c}{v^*} \geq 1} \widetilde{Z}^c_t + 1.324 \Delta' &\mbox{(by \cref{lem:not_in_neighbor} and since $P_2=0.324$)}
        \\
        &\leq -\lvert \mathcal{A}_{t}(v^*) \rvert - \lvert \{ c \in \colorList{v^*} : \degree{}{t,c}{v^*} \geq 1\} \rvert + \lambda (\Delta_1 + \Delta_2) 
        \\ & \ \ \ \ \ \ \ 
              + 1.775 \Delta_{\geq 3} 
        + 1.324 \Delta' 
        & \mbox{(by \cref{bound-dc12ll,bound-dc3ll})}
        \\
        & \leq -k  + \lambda (\Delta_1 + \Delta_2) + 1.775\Delta_{\geq 3} + 1.324 \Delta'.
    \end{align*}
    We will show that
    \begin{align}
        \label{eqn:coefficient_boundll}
        \lambda (\Delta_1 + \Delta_2) + 1.775\Delta_{\geq 3} + 1.324 \Delta' \leq (1-10^{-5}) k
    \end{align}
    when $k \geq \lambbound \Delta$ and hence, this implies 
    \begin{align*}
        nk \widetilde{W}_t < -10^{-5} \Delta
    \end{align*}
    when $k \geq \lambbound \Delta$ as desired.

    All that remains is to establish \cref{eqn:coefficient_boundll}.
    Recall the definition of $\lambda$, and we will consider the three corresponding cases.

    Suppose $\lambda=2 - P_2 + 2P_3 - 2P_4$, then plugging in the settings of $P_2=0.324$, $P_3=0.154$ and $P_4=0.088$ from \cref{flip-setting} we see that $\lambda=1.808$ and hence \cref{eqn:coefficient_boundll} holds in this case.

    Now suppose $\lambda=2 - P_3 -{\eta \alpha}/{\Delta}$.  Plugging in $\alpha=(k - \Delta - 2)$,
    $P_3=0.154$ and $\eta=\Delta(P_2 - P_3)/(2k)$ we have that for $\lvert \colorList{v^*} \rvert = k\geq \lambbound \Delta$:
    \begin{align}
        \lambda  &= 2 - P_3 -{\eta \alpha}/{\Delta}  \\
        &= 1.846 - (P_2 - P_3) \alpha/(2k) \nonumber \\
        &\leq 1.846 - 0.17(0.5 - \Delta/(2k) - 1/k) \nonumber \\
        &\leq 1.846 - 0.17(0.223588 - 1/k) \nonumber \\
        &\leq 1.808 + 0.17/k \nonumber \\
        &\leq 1.8089
    \end{align}
    where the last inequality requires $\Delta\geq 104$.
    This establishes \cref{eqn:coefficient_boundll} for this case of $\lambda$ when $\Delta \geq 104$.

Finally, suppose $\lambda = 2 - P_2 + {\eta \beta }/{\Delta}$. 
Plugging in $\beta = \lvert \colorList{v^*} \rvert - P_2 (\Delta_1 + \Delta_2) + (1+P_2)d(v^*) $, 
$P_2 = 0.324$, $\eta=\Delta(P_2 - P_3)/(2k)$, and $k \geq 1.8089 \Delta$,
\begin{align}
    \lambda &= 2 - P_2 + (P_2 - P_3)\beta/(2k) \nonumber\\
    &\leq 1.676 + 0.17 \left( \colorList{v^*} - P_2 (\Delta_1 + \Delta_2) + 1.324d(v^*) \right)/(2k) \nonumber\\
    &\leq 1.676 + 0.17 \left( k - P_2 (\Delta_1 + \Delta_2) + 1.324 \Delta \right)/(2k) \nonumber\\
    &\leq 1.676 + 0.17 \left(0.5 - 0.324 (\Delta_1 + \Delta_2)/(2k) + 1.324\Delta/(2k) \right)\nonumber\\
    &\leq 1.823212 - 0.015223(\Delta_1 + \Delta_2)/\Delta. 
    \label{eqn:almost-case3ll}
\end{align}
Note that $\Delta -\Delta' \geq d(v^*) - \Delta' \geq \Delta_1 + \Delta_2 + \Delta_{\geq 3}$.
Let $x = (\Delta_1 + \Delta_2)/\Delta$ and $y = \Delta'/\Delta$.
Recall that our goal is to establish \cref{eqn:coefficient_boundll}.  Plugging in \cref{eqn:almost-case3ll} into \cref{eqn:coefficient_boundll} we have the following:
\begin{multline}
    \label{eqn:quadratic_lamb_boundll}
    \lambda (\Delta_1 + \Delta_2) + 1.775 \Delta_{\geq 3} 
+ 1.324 \Delta' \leq \lambda \Delta x + 1.775 \Delta (1-y-x) + 1.324 \Delta y \\
    \leq  1.823212 \Delta x - 0.015223 \Delta x^2 + 1.775 \Delta (1-x-y) + 1.324 \Delta y.
\end{multline}
The maximum of \cref{eqn:quadratic_lamb_boundll} is $1.80799$, which occurs when $x = 1$ and $y = 0$.
Thus, \cref{eqn:coefficient_boundll} holds in all three cases. 
\end{proof}

\subsection{Proof of \cref{cor:main-Glauber}}
\label{sub:corollary}

\cref{thrm:ColoringBound} established a mixing time bound of $O(n\log(n))$ for the flip dynamics.  By comparison of the associated Dirichlet forms, Vigoda~\cite{Vigoda} showed that $O(n\log{n})$ mixing time for the flip dynamics (with constant sized flips) implies $O(n^2\log{n})$ mixing time for the Glauber dynamics.  In fact, since the dependence on $\eps$ in the mixing time is of the form $O(n\log(n/\eps))$ then one obtains $O(n^2)$ mixing time of the Glauber dynamics, see \cite[Corollary 2]{DJV} or \cite[Chapter 14 notes]{LPW}.

To obtain $O(n\log{n})$ mixing time of the Glauber dynamics for constant $\Delta$, we utilize the work of Chen, Liu, and Vigoda~\cite{CLV21}, who showed optimal mixing via spectral independence.  For an introduction to spectral independence, see~\cite{SV:notes}.  
The following proof outlines how \cref{cor:main-Glauber} follows from \cref{thrm:ColoringBoundListLocal} using the method outlined in \cite{BCCPSV22}.

\begin{proof}[Proof of \cref{cor:main-Glauber}]

To prove \cref{cor:main-Glauber}, we first observe that the flip dynamics is contractive under any pinning when $k \geq 1.809 \Delta$ where a pinning is a fixed assignment for an arbitrary subset of vertices.
To this end, we start with the assumption that $L(v) = [k]$ for all $v$. 
Then for any $U \subset V$ and pinning $\tau : U \rightarrow [k]$, observe that 
\[
k \geq \lvert L(w) \setminus \bigcup_{z \in N(w) \cap U} \{\tau(z)\} \rvert \geq d(w) + (k/\Delta - 1)\Delta.
\]
Hence, it follows from \cref{thrm:ColoringBoundListLocal} that flip dynamics are contractive for any pair $X_0,Y_0\in\Omega$, for any pinning and with respect to a distance metric which is $2$-equivalent to the Hamming distance (in the terminology of~\cite{BCCPSV22}), it then follows from~\cite[Theorem 1.11]{BCCPSV22} that for every pinning, spectral independence holds.  
Thus, \cite[Theorem 1.7]{BCCPSV22} yields $O(n\log{n})$ mixing time for the Glauber dynamics for constant~$k$ and~$\Delta$ as desired.
\end{proof}

\begin{remark}
\label{rem:fptas}
    As noted in the proof of \cref{cor:main-Glauber}, we show for list colorings under the conditions in \cref{thrm:ColoringBoundListLocal} that there is a contractive coupling with respect to a metric which is 2-equivalent to the Hamming distance.  Recent work of Chen, Feng, Guo, Zhang, and Zou~\cite{CFGZZ} shows that this implies a deterministic approximate counting algorithm ($\fptas$) when~$k$ and~$\Delta$ are constant.
\end{remark}

\section{Greedy Coupling and Remaining Proofs}
\label{sec:proofs}

To complete the proof of \cref{thrm:ColoringBound} it remains to prove \cref{lem:dc3} through \cref{lem:all_multiblocked}.
Most of these proofs are straightforward but require additional information about the greedy coupling or the consideration of several specific cases. 
We first give additional details on the greedy coupling in \cref{sec:coupling-details}.

\subsection{Greedy Coupling: Detailed Definition}
\label{sec:coupling-details}

In this section, we formally define the details of the greedy coupling and analyze its expected change in Hamming distance.
Recall the definition of the clusters in the set~$\mathcal{D}_{t,c}$ from~\cref{def:Coupling-set}.  As stated earlier, for each color $c\in \masterList$, flips in $\mathcal{D}_{t,c}$ are coupled with other flips within the set (or coupled with no flip corresponding to a self-loop).  Consequently, we can consider each set $\mathcal{D}_{t,c}$ separately.  

Fix a pair $({X_{t}},{Y_{t}})\in\Omega^2_{v^*}$ for some $v^*\in V$.  Fix a color $c\in \masterList$ and we will define the coupling for the flips of clusters in the set $\mathcal{D}_{t,c}$. 

If $c\notin {X_{t}}(N(v^*))$ and $c \in L(v^*)$ (and thus $c$ is an available color for $v^*$) then $\mathcal{D}_{t,c}$ consists of two clusters $S_{X_{t}}(v^*,c)$ and $S_{Y_{t}}(v^*,c)$, and both of these clusters are simply the vertex $v^*$: $S_{X_{t}}(v^*,c) = S_{Y_{t}}(v^*,c)=\{v^*\}$.  The coupling is the identity coupling for these clusters, and hence, we choose the same $(v^*,c)$ in both chains and flip the cluster in both chains or in neither chain. 

Now suppose $c\in{X_{t}}(N(v^*))$ and $c \in L(v^*)$. 
To define the coupling within $\mathcal{D}_{t,c}$ let us begin with the simpler case $\degree{}{t,c}{v^*}=1$; let $N_c=\{u\}$.  In this case, within $\mathcal{D}_{t,c}$ we have 2 clusters for each chain; in ${X_{t}}$ we have
$S_{{X_{t}}}(v^*,c)$ and $S_{{X_{t}}}(u,{Y_{t}}(v^*))$, and in ${Y_{t}}$ we have $S_{{Y_{t}}}(v^*,c)$ and $S_{{Y_{t}}}(u,{X_{t}}(v^*))$.
Let $a=|S_{{X_{t}}}(u,{Y_{t}}(v^*))|$ and $A=|S_{{Y_{t}}}(v^*,c)|$, and similarly $b=|S_{{Y_{t}}}(u,{X_{t}}(v^*))|$ and $B=|S_{{X_{t}}}(v^*,c)|$.
Note that $S_{{Y_{t}}}(v^*,c) = \{v^*\}\cup S_{{X_{t}}}(u,{Y_{t}}(v^*))$,
and hence, $A=1+a_1$; similarly, $B=1+b_1$.  

With probability $P_{B}$, we couple the flip of $S_{{X_{t}}}(v^*,c)$ (of size $B$) in ${X_{t}}$ with $S_{{Y_{t}}}(u,{X_{t}}(v^*))$ (of size $P_{b}$) in ${Y_{t}}$; this does not change the Hamming distance as the new chains ${X_{t+1}},{Y_{t+1}}$ only differ at $v^*$.  Similarly,
we couple the flip of $S_{{Y_{t}}}(v,c)$ (of size $A$) in ${Y_{t}}$ with $S_{{X_{t}}}(u,{Y_{t}}(v^*))$ (of size $P_{a}$) in ${X_{t}}$; again, this does not change the Hamming distance.  There remains probability $P_{a}-P_{A}$ for flipping $S_{{X_{t}}}(u,{Y_{t}}(v^*))$ (of size $a$) in ${X_{t}}$ and probability $P_{b}-P_{B}$ for flipping $S_{{Y_{t}}}(u,{X_{t}}(v^*))$ (of size $b$) in ${Y_{t}}$; we use the maximal coupling for these flips.  Hence, with probability $\min\{P_{a}-P_{A},P_{b}-P_{B}\}$ we flip both $S_{{X_{t}}}(u,{Y_{t}}(v^*))$ in ${X_{t}}$ and $S_{{Y_{t}}}(u,{X_{t}}(v^*))$ in ${Y_{t}}$; this increases the Hamming distance by $|S_{{X_{t}}}(u,{Y_{t}}(v^*))\cup S_{{Y_{t}}}(u,{X_{t}}(v^*))|\leq a+b-1$.
With the remaining probability, the remaining cluster flips by itself (self-loop in the other chain). 

Summarizing, in the case $\degree{}{t,c}{v^*}=1$, the expected change in Hamming distance is at most
\begin{equation*}
    a(P_{a}-P_A) + b(P_{b}-P_B) - \min\{P_A-P_{a},P_B-P_{b}\}.
\end{equation*}

Now consider the general case $\degree{}{t,c}{v^*}\geq 1$.
Let $N_c=\{u_1, \ldots, u_{\degree{}{t,c}{v^*}}\}$.  
Recall, $\mathcal{D}_{t,c}$ consists of the following clusters in ${X_{t}}$:
\[ \{S_{{X_{t}}}(v^*,c)\}\cup\bigcup_i \{S_{X_{t}}(u_i,{Y_{t}}(v^*))\}
\]
and in ${Y_{t}}$ we have:
\[ \{S_{{Y_{t}}}(v^*,c)\}\cup\bigcup_i \{S_{Y_{t}}(u_i,{X_{t}}(v^*))\}
\]
For $1\leq i\leq \degree{}{t,c}{v^*}$, let $a_i=|S_{X_{t}}(u_i,{Y_{t}}(v^*))|$ and $b_i=|S_{Y_{t}}(u_i,{X_{t}}(v^*))|$ denote the sizes of the $(c,{Y_{t}}(v^*))$ and $(c,{X_{t}}(v^*))$, respectively, clusters containing $u_i$.
Moreover, if $u_i \in S_{X_{t}}(u_j,{Y_{t}}(v^*))$ for some $j<i$ then redefine $a_i=0$ (this will avoid double-counting); similarly, if $u_i \in S_{Y_{t}}(u_j,{X_{t}}(v^*))$ for some $j<i$ then $b_i=0$. 

Let $a_{\max}=\max_i a_i$ and $b_{\max}=\max_i b_i$.
Let $i_{\max}$ be an index $i$ where $a_i=a_{\max}$ and similarly let $i'_{\max}$ be an index $i'$ where $b_{i'}=b_{\max}$.
Moreover, if possible it sets $i_{\max} = i'_{\max}$.
Finally, let $A = |S_{{Y_{t}}}(v^*,c)|$ and $B=|S_{{X_{t}}}(v^*,c)|$.
If $c \neq {Y_{t}}(v^*)$ then $A = 1+ \sum_i a_i$ and if $c \neq {X_{t}}(v^*)$ then $B= 1+ \sum_i b_i$.
Note that if $c \not \in \colorList{v^*}$ then $A = 0$ and $B = 0$.
For notational convenience, we will define $P_0 = 0$. 
 
We try to couple the flips of the large clusters of size $A$ and $B$ with the largest of the small clusters of sizes $a_{\max}$ and $b_{\max}$, respectively.  
If $c \neq {X_{t}}(v^*)$ then with probability $P_A$ we couple the flip of $S_{{X_{t}}}(v^*,c)$ in ${X_{t}}$ with $S_{{Y_{t}}}(u_{i'_{\max}},{X_{t}}(v^*))$ in ${Y_{t}}$; this changes the Hamming distance by $A-a_{\max}-1$ if $c \neq {Y_{t}}(v^*)$ and $A - a_{\max} - 2$ if $c = {Y_{t}}(v^*)$ since $X_{t+1}(v^*) = Y_{t+1}(v^*)$ in this case.
Similarly, with probability $P_B$ we couple the flip of $S_{{Y_{t}}}(v^*,c)$ in ${Y_{t}}$ with $S_{{X_{t}}}(u_{i_{\max}},{Y_{t}}(v^*))$ in ${X_{t}}$; this changes the Hamming distance by $B-b_{\max}-1$ if $c \neq {X_{t}}(v^*)$ and changes it by $B - b_{\max} - 2$ if $c = {X_{t}}(v^*)$ since $X_{t+1}(v^*) = Y_{t+1}(v^*)$ in this case.
For $1\leq i \leq \degree{}{t,c}{v^*}$, denote the remaining flip probabilities as $Q_i = P_{a_i} - P_{A} \indicator({i = i_{\max}})$, and $Q'_i = P_{b_i} - P_{B} \indicator({i = i'_{\max}})$.
Then, for each $1\leq i \leq \degree{}{t,c}{v^*}$,
with probability $\min\{Q_i,Q'_i\}$ we flip both $S_{{X_{t}}}(u_i,{Y_{t}}(v^*))$ in ${X_{t}}$ and $S_{{Y_{t}}}(u_i,{X_{t}}(v^*))$ in ${Y_{t}}$; this increases the Hamming distance by $a_i+b_i-m_i$ where $m_i = \lvert S_{{X_{t}}}(u_i,{Y_{t}}(v^*)) \cap S_{{Y_{t}}}(u_i,{X_{t}}(v^*)) \rvert$.
With the remaining probability, the remaining cluster flips by itself (self-loop in the other chain).

Hence, for any $c$ such that $\degree{}{t,c}{v^*} \geq 1$, observe that
\begin{equation*}
    nk W^c_t = (A - a_{\max} - 1)P_A + (B- b_{\max} -1)P_B + \sum_i \left( a_i Q_i + b_i Q'_i - m_i \min \{Q_i,Q'_i\}\right).
\end{equation*}

Note that for all $i$, $m_i \geq 1$ since $u_i \in S_{{X_{t}}}(u_i,{Y_{t}}(v^*)) \cap S_{{Y_{t}}}(u_i,{X_{t}}(v^*))$.
Moreover, larger values of $m_i$ only decrease the expected change in Hamming distance. 
One can view the tree as the worst case since if the graph considered is a tree (or has high girth), then $m_i = 1$ since $ S_{{X_{t}}}(u_i,{Y_{t}}(v^*)) \cap S_{{Y_{t}}}(u_i,{X_{t}}(v^*)) = \{u_i\}$ for all $i$. 
We define the following function $Z^c_t$, which is an upper bound on $W^c_t$ for all graphs.

\begin{definition}
\label{def:boundWH}
For any $c$ such that $\degree{}{t,c}{v^*} \geq 1$, let
\begin{align*}
    nkZ^c_t &:= (A - a_{\max} - 1)P_A + (B- b_{\max} -1)P_B + \sum_i \left( a_i Q_i + b_i Q'_i - \min \{Q_i,Q'_i\}\right).
\end{align*} 
\end{definition}
When $G$ has sufficiently large girth (namely, girth $\geq 15$ suffices) then $W^c_t=Z^c_t$. Moreover, for any $G$ we have $W^c_t\leq Z^c_t$.
Recall, \cref{def:boundWnmet} and thus
\begin{equation}
    \label{eqn:new_wnmet_bound}
    \widetilde{Z}^c_t \leq Z^c_t + \degree{0}{t,c}{v^*}\frac{\eta \beta }{\Delta} - \degree{1}{t,c}{v^*}\frac{ \eta \alpha}{\Delta}.
\end{equation}

The proofs of \cref{lem:dc3} through \cref{lem:all_multiblocked} are provided in the full version of the paper.

\subsection{Proof of \cref{lem:dc3}: Color Appearing At Least 3 Times}
\label{sub:dc3_case}
In this section, we prove \cref{lem:dc3}. 

\begin{proof}[Proof of \cref{lem:dc3}]
    Let $N_{t,c}(v^*)=\{u_1, \ldots, u_{\degree{}{t,c}{v^*}}\}$ be the neighbors of $v^*$ that are colored $c$ in $X_t$ and $Y_t$.
    Since $\degree{}{t,c}{v^*} \geq 3$, it follows that $A \geq 4$ and $B \geq 4$. 
    Hence, $(A- a_{max} - 1) P_A \leq (A-2)P_A$ and then it follows from \cref{fp:decreasing} that $(A-2) P_A  \leq \max_j \{(j-2)P_j\} \leq 2P_4$, and similarly, $(B- b_{max} - 1) P_B  \leq 2P_4$.
    
    Fix $u_i \in N_{t,c}(v^*)$ and suppose without loss of generality that $b_i \geq a_i$.
    If $u_i$ is unblocked then $a_i = 1$, $b_i = 1$, and $a_i Q_i + b_i Q'_i - \min \{ Q_i, Q'_i \} = Q_i + Q'_i - \min \{ Q_i, Q'_i \} = \max\{Q_i,Q'_i\} \leq 1$.
    If $u_i$ is not unblocked, then it must be the case that $a_i \geq 2$ or $b_i \geq 2$;
thus $a_i Q_i + b_i Q'_i - \min \{ Q_i, Q'_i \} = a_iQ_i + (b_i -1)Q'_i \leq 1+P_2$ since $a_iQ_i \leq 1$ by \cref{fp:p1_p2_bound} and \cref{fp:decreasing}, and $(b_i -1)Q'_i \leq P_2$ by \cref{fp:second_gap} and \cref{fp:third_gap}. 
    Putting together these bounds and using \cref{def:boundWH} we have:
    \[
         nk Z^c_t \leq 4P_4 + \degree{}{t,c}{v^*} + (\degree{}{t,c}{v^*}-\degree{0}{t,c}{v^*}) P_2.
    \]
    Thus, using \cref{def:boundWH} and \cref{eqn:new_wnmet_bound} we get
    \begin{align*}
        nk \widetilde{Z}^c_t &\leq nk Z^c_t + \degree{0}{t,c}{v^*}\frac{\eta \beta}{\Delta} \\
        &\leq \degree{}{t,c}{v^*}+\left(\degree{}{t,c}{v^*}-\degree{0}{t,c}{v^*}\right)P_2 + 4P_4+ \degree{0}{t,c}{v^*}\frac{\eta \beta}{\Delta} \\
        &\leq -1 + \degree{}{t,c}{v^*} \left(\frac{4}{3} + P_2 + \frac{4}{3} P_4\right).
    \end{align*}
    where the last inequality holds because $\eta \beta/\Delta \leq P_2$ and $\degree{}{t,c}{v^*} \geq 3$.
\end{proof}

\subsection{Proof of \cref{lem:not_in_neighbor}: Color Not Available}

Now we prove \cref{lem:not_in_neighbor}. 
The proof of this lemma follows almost immediately from our coupling and \cref{flip-properties}.

\begin{proof}[Proof of \cref{lem:not_in_neighbor}] 
    Since $c \not \in \colorList{v^*}$ it follows that $S_{X_t}(v^*,c) = S_{Y_t}(v^*,c) = \emptyset$. 
    Thus, 
    \begin{align*}
        nk \widetilde{Z}^c_t &\leq \sum_{i} \left(a_iP_{a_i} + b_i P_{b_i} - \min\{P_{a_i},P_{b_i}\}\right) \\
        &\leq \degree{}{t,c}{v^*} (1 + P_2) 
    \end{align*}
    where the last inequality follows since $\max\{iP_i\} = 1$ and $\max\{(i-1)P_i\} = P_2$ by \cref{flip-properties}.
\end{proof}

\subsection{Proof of \cref{lem:all_unblocked}: All Unblocked}
\label{sub:all_unblocked}

Now we prove \cref{lem:all_unblocked}. 
The key to this proof is the observation that since $\degree{0}{t,c}{v^*}= \degree{}{t,c}{v^*}$ (by \cref{lem:extremal_cases}), there is a unique configuration to consider for each value of~$\degree{}{t,c}{v^*}$, and hence 
we can directly compute $W^c_t$ for any value of~$\degree{}{t,c}{v^*}$.

\begin{proof}[Proof of \cref{lem:all_unblocked}] 
    We first consider the case when $\degree{}{t,c}{v^*} = 1$.
    Since $\degree{0}{t,c}{v^*}= \degree{}{t,c}{v^*}$, it follows that $a_1 = 1$ and $b_1 = 1$.
    Hence, $A = 2$, $B = 2$, and by \cref{eqn:new_wnmet_bound},
    \begin{align*}
        nk \widetilde{Z}^c_t &= 1 - P_2 + \frac{\eta \beta}{\Delta} = -1 + \degree{}{t,c}{v^*} \left(2 - P_2 + \frac{\eta \beta}{\Delta}\right).
    \end{align*}
    
    We now consider the case when $\degree{}{t,c}{v^*} = 2$.
    Since $\degree{0}{t,c}{v^*}= \degree{}{t,c}{v^*}$, it follows that $a_1 = a_2 = b_1 = b_2 = 1$. Hence, $A = 3$, $B = 3$, by \cref{def:boundWH},
    \[
        Z^c_t = -1 + \degree{}{t,c}{v^*}\left(\frac{3}{2} + P_3\right).
    \]
    Thus, by \cref{eqn:new_wnmet_bound},
    \begin{align*}
        nk \widetilde{Z}^c_t &\leq nk Z^c_t + \degree{}{t,c}{v^*} \frac{\eta \beta}{\Delta} \leq -1 + \degree{}{t,c}{v^*} \left(\frac{3}{2} + P_3 + \frac{\eta \beta}{\Delta} \right) \leq -1 + \degree{}{t,c}{v^*} \left(2 - P_2 + \frac{\eta \beta}{\Delta} \right) 
    \end{align*}
    where the last inequality follows since $P_2 + P_3 \leq 1/2  - P_5/2 \leq 1/2$ follows from summing \cref{fp:P_2_bound} and \cref{fp:P_3_bound} and dividing by $2$. 
\end{proof}

\subsection{Proof of \cref{lem:all_singly_blocked}: All Singly Blocked}
\label{sub:all_singly_blocked}

Next, we prove \cref{lem:all_singly_blocked}.
With the help of \cref{lem:extremal_cases}, the proof of this lemma is very similar to that of \cref{lem:all_unblocked}. 
However, unlike unblocked vertices, singly blocked vertices can be in clusters of arbitrary size.
We have to show that the worst case is when the clusters are relatively small. 

\begin{proof}[Proof of \cref{lem:all_singly_blocked}]
    Using \cref{lem:max_vector} we can assume $a_i = 1$ and $b_i = 2$ for $1 \leq i \leq \degree{}{t,c}{v^*}$.
    
    We first consider the case when $\degree{}{t,c}{v^*} = 1$.
    In this case, $A = 2$ and $B = 3$.
    Hence, by  \cref{eqn:new_wnmet_bound},
    \begin{align*}
        nk \widetilde{Z}^c_t &\leq 1 - P_3  - \frac{\eta \alpha}{\Delta} = -1 + \degree{}{t,c}{v^*} \left(2 - P_3  - \frac{\eta \alpha}{ \Delta}\right).
    \end{align*}

    We now consider the $\degree{}{t,c}{v^*} = 2$ case.
   In this case, $A = 3$, $B = 5$, and by \cref{def:boundWH},
    \[
        nk Z^c_t = -1 + \degree{}{t,c}{v^*} \left(\frac{3}{2} + P_2 +\frac{P_5}{2} \right) \leq -1 + \degree{}{t,c}{v^*} \left(2 - P_3\right)
    \]
    since $P_2 + P_3 \leq 1/2  - P_5/2 \leq 1/2$ follows from summing \cref{fp:P_2_bound} and \cref{fp:P_3_bound} and dividing by $2$.
    Hence, by \cref{eqn:new_wnmet_bound},
    \begin{align*}
        nk \widetilde{Z}^c_t &\leq nk Z^c_t - \degree{}{t,c}{v^*}\frac{\eta \alpha}{\Delta} \leq -1 + \degree{}{t,c}{v^*} \left(2 - P_3  - \frac{\eta \alpha}{\Delta} \right).
    \end{align*} 
\end{proof}

\subsection{Proof of \cref{lem:all_multiblocked}: All Multiblocked}
\label{sub:all_multiblocked}

Finally, we prove \cref{lem:all_multiblocked}.
Again, this proof is similar to the proofs of \cref{lem:all_unblocked,lem:all_singly_blocked}.  

\begin{proof}[Proof of \cref{lem:all_multiblocked}]    
    Since $\degree{0}{t,c}{v^*}= \degree{1}{t}{v^*}= 0$ it follows from \cref{eqn:new_wnmet_bound} that $nk \widetilde{Z}^c_t \leq nk Z^c_t$ and it suffices to show $nk Z^c_t \leq -1 + \degree{}{t,c}{v^*} (2 - P_2 + 2(P_3 - P_4))$.
    
    We first consider the case when $\degree{}{t,c}{v^*} = 1$.
    Using \cref{lem:max_dc1} we can assume $a_1 = 1$ and $b_1 = 3$. 
    In this case, $A = 2$, $B = 4$, and by \cref{def:boundWH}:
    \begin{align*}
        nk \widetilde{Z}^c_t  \leq 1 - P_2 + 2(P_3 - P_4) = -1 + \degree{}{t,c}{v^*} (2 - P_2 + 2(P_3 - P_4)). 
    \end{align*}

    We now consider the case where $\degree{}{t,c}{v^*} = 2$.
    It follows from \cref{lem:max_vector} that we can assume $a_i = 1$ and $b_i = 3$ for all $i$.
    In this case, $A = 3$, $B = 7$, and by \cref{def:boundWH},
    \[
        nk Z^c_t = 2 + 4P_3 = -1 + \degree{}{t,c}{v^*}\left(\frac{3}{2} + 2P_3\right) \leq -1 + \degree{}{t,c}{v^*} (2 - P_2 + 2(P_3 - P_4))
    \]
    where the last inequality follows from the fact that $1/2 \geq P_2 + 2P_4$ by \cref{fp:P_2_bound}.
\end{proof}

\subsection{Proof of \cref{lem:extremal_cases}: Extremal Cases}
\label{sub:extremal_cases}
In this section, we prove \cref{lem:extremal_cases} using the following two lemmas; these two lemmas are analogous to similar claims in previous works, namely \cite[Claim 6]{Vigoda} and \cite[Observation B.1]{CDMPP19}.

\begin{lemma}
    \label{lem:max_dc1}
    Assume \cref{flip-properties} hold. For $c\in \masterList$ where $\degree{}{t,c}{v^*} = 1$, then $Z^c_t$ is maximized when $a_{1} = 1$ and $b_{1} = 2$.
    Moreover, if $\degree{}{t,c}{v^*} = 1$ and $b_{1} \geq 3$, then $Z^c_t$ is maximized when $a_{1} = 1$ and $b_{1} = 3$.
\end{lemma}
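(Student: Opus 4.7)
The plan is to expand $Z^c_t$ explicitly in the single-neighbor case and then optimize by decoupling the contributions of $a_1$ and $b_1$. Since $\degree{}{t,c}{v^*} = 1$, there is exactly one index $i = i_{\max} = i'_{\max} = 1$, so $a_{\max} = a_1$, $b_{\max} = b_1$, and the two ``large-cluster'' terms in \cref{def:boundWH} vanish: $A - a_{\max} - 1 = 0$ and $B - b_{\max} - 1 = 0$. What remains is
\[
nk Z^c_t = a_1 Q_1 + b_1 Q'_1 - \min\{Q_1, Q'_1\},
\]
where $Q_1 = P_{a_1} - P_{a_1+1}$ and $Q'_1 = P_{b_1} - P_{b_1+1}$. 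By the symmetry of the coupling under swapping $X_t$ and $Y_t$, I may assume $a_1 \leq b_1$.

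The next step is to show that the gap sequence $P_j - P_{j+1}$ is monotonically non-increasing in $j$, so that $\min\{Q_1, Q'_1\} = Q'_1$ under $a_1 \leq b_1$. The monotonicity follows by combining \cref{fp:second_gap,fp:third_gap,fp:forth_gap}: FP2 gives $1 - P_2 \geq P_2 - P_3 \geq 2(P_3 - P_4) \geq P_3 - P_4$; FP3 at $j = 4$ gives $P_3 - P_4 \geq \tfrac{3}{2}(P_4 - P_5) \geq P_4 - P_5$; and FP4 handles the remaining gaps. Once this is in hand, I obtain the clean decomposition $nk Z^c_t = \phi(a_1) + \psi(b_1)$, where $\phi(x) := x(P_x - P_{x+1})$ and $\psi(x) := (x-1)(P_x - P_{x+1})$.

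It then suffices to maximize $\phi$ and $\psi$ separately. For $\phi$, I would show the maximum is at $x = 1$: $\phi(1) = 1 - P_2 \geq 2/3$ by FP0, whereas for $x \geq 2$, $\phi(x) \leq x P_x$, and iterating \cref{fp:decreasing} (combined with $P_2 \leq 1/3$) gives $x P_x \leq x(2/3)^{x-1} P_2 \leq 2/3$, so $\phi(x) \leq \phi(1)$. The case $x = 2$ is the binding one and is immediate: $\phi(2) = 2(P_2 - P_3) \leq 2P_2 \leq 2/3$. For $\psi$, FP2 gives $\psi(2) = P_2 - P_3 \geq 2(P_3 - P_4) = \psi(3)$, FP3 gives $\psi(3) \geq \psi(j)$ for $j \geq 4$, and $\psi(1) = 0$, so $\psi$ is maximized at $x = 2$ globally and, restricted to $x \geq 3$, at $x = 3$.

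Assembling these pieces, the maximum of $\phi(a_1) + \psi(b_1)$ under $a_1 \leq b_1$ is attained at $(a_1, b_1) = (1, 2)$, proving the first assertion. Under the additional constraint $b_1 \geq 3$, the maximum of $\psi$ on that domain is at $b_1 = 3$, and $a_1 = 1$ remains feasible, yielding $(1, 3)$ and the second assertion. I do not expect any serious obstacle: the delicate piece is simply verifying that the monotonicity of the gap sequence is genuinely implied by \cref{flip-properties} (so that $\min\{Q_1, Q'_1\}$ can be identified with $Q'_1$), after which the optimization reduces to short term-by-term comparisons.
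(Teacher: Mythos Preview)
Your proposal is correct and follows essentially the same route as the paper: reduce to $nkZ^c_t = a_1(P_{a_1}-P_{a_1+1}) + (b_1-1)(P_{b_1}-P_{b_1+1})$ under the ordering $a_1\le b_1$, then optimize each term using \cref{fp:p1_p2_bound,fp:decreasing,fp:second_gap,fp:third_gap}. Your explicit verification of the gap monotonicity (to identify $\min\{Q_1,Q'_1\}=Q'_1$) and your clean decoupling into $\phi(a_1)+\psi(b_1)$ are a slight streamlining of the paper's more case-by-case treatment of $a_1\ge 2$, but the substance is the same. One small slip: iterating \cref{fp:decreasing} gives $P_x\le (2/3)^{x-2}P_2$ for $x\ge 2$, so the bound should read $xP_x\le x(2/3)^{x-2}P_2$ rather than $x(2/3)^{x-1}P_2$; this still yields $xP_x\le 2/3$ for all $x\ge 2$ (with equality candidates at $x=2,3$), so your conclusion $\phi(x)\le \phi(1)$ stands.
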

\begin{proof}
    Assume without loss of generality that $b_1 \geq a_1$. 
    Then 
    \begin{equation*}
        nkZ^c_t =  a_1 (P_{a_1}- P_{a_1+1}) + (b_1-1) (P_{b_1} - P_{b_1+1}).
    \end{equation*}
    If $a_1 = 1$ then we get 
    \begin{equation}
        nkZ^c_t =  1 - P_2 + (b_1-1) (P_{b_1}-P_{b_1+1}).
        \label{eqn:b1bound}
    \end{equation}
    Observe that $\cref{eqn:b1bound}$ is maximized when $b_1 = 2$ by \cref{fp:second_gap} and \cref{fp:third_gap}.
    Moreover, if $b_1 \geq 3$ then $\cref{eqn:b1bound}$ is maximized when $b_1 = 3$ by \cref{fp:third_gap}.
    
    Now let us consider the case where $a_1\geq 2$.  First, suppose $b_1=2$.
    If $b_1 = 2$ and $a_1 = 1$, then $nkZ^c_t = 1 - P_3$. 
    If $b_1 = 2$ and $a_1 = 2$, then $nkZ^c_t = 2(P_2 - P_3) \leq 2P_2 \leq 2/3 \leq 1-P_2 \leq 1 - P_3$ by \cref{fp:p1_p2_bound} and \cref{fp:decreasing}.
    Thus, if $b_1 = 2$ then $nk Z^c_t$ is maximized when $a_1 = 1$. 

    Now let us suppose $b_1\geq 3$.
    If $b_1 = 3$ and $a_1 = 1$ then $nkZ^c_t = 1 - P_2 + 2(P_3 - P_4)$.
    If $b_1 \geq 3$ and $a_1 \geq 2$ then $(b_1-1) (P_{b_1} - P_{b_1+1})$ is maximized when $b_1 = 3$ by \cref{fp:third_gap}. 
    Observe that $2 (P_{2} - P_{3}) \geq 2(j-1)(P_{j} - P_{j+1}) \geq j(P_{j} - P_{j+1})$ for all $j \geq 2$ by \cref{fp:second_gap} and \cref{fp:third_gap}. 
    Hence, $a_1 (P_{a_1}- P_{a_1+1}) $ is maximized when $a_1 = 2$. 
    Therefore, if $b_1 \geq 3$ and $a_1 \geq 2$, then $nkZ^c_t$ is maximized with $a_1=  2$ and $b_1=3$ which yields $nkZ^c_t \leq 2(P_2 - P_3) + 2(P_3 - P_4) \leq 1 - P_2 + 2(P_3 - P_4)$ since $2(P_2 - P_3) \leq 2/3 \leq 1 - P_2$ by \cref{fp:p1_p2_bound}. 
    Therefore, if $b_1 \geq 3$ then $nkZ^c_t$ is maximized when $b_1 = 3$ and $a_1 = 1$. 
\end{proof}

\begin{lemma}
    \label{lem:max_vector}
Assume \cref{flip-properties} hold. For $c\in \masterList$ where $\degree{}{t,c}{v^*} = 2$, then $Z^c_t$ is maximized when $a_1 = a_2 = 1$ and $b_{\max} \leq 3$.
\end{lemma}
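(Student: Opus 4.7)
The plan is to adapt the single-cluster argument of \cref{lem:max_dc1} to show that one may iteratively shrink both pairs $(a_1,b_1), (a_2,b_2)$ toward the claimed extremal configuration without decreasing $nk Z^c_t$. By relabeling, assume $a_1 \geq a_2$ and $b_1 \geq b_2$, and follow the convention $i_{\max} = i'_{\max} = 1$ in \cref{def:boundWH}, so that $Q_1 = P_{a_1} - P_A$, $Q_2 = P_{a_2}$, $Q'_1 = P_{b_1} - P_B$, $Q'_2 = P_{b_2}$. A short computation collapses \cref{def:boundWH} to the convenient form
\[
    nk Z^c_t = a_1 P_{a_1} + a_2 P_{a_2} + b_1 P_{b_1} + b_2 P_{b_2} - (a_1 - a_2) P_A - (b_1 - b_2) P_B - \min\{Q_1, Q'_1\} - \min\{Q_2, Q'_2\},
\]
where $A = 1 + a_1 + a_2$ and $B = 1 + b_1 + b_2$.

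First, I reduce the $a_i$'s. If $a_1 \geq 2$, I will argue that replacing $a_1$ by $a_1 - 1$ (which also drops $A$ by one) weakly increases $nk Z^c_t$; the net change involves $a_1 P_{a_1} - (a_1-1) P_{a_1-1}$ together with an adjustment of $(a_1-a_2) P_A$ and a change in $\min\{Q_1, Q'_1\}$. Each of these pieces is controlled by \cref{fp:decreasing,fp:second_gap,fp:third_gap}, which together imply that the map $j \mapsto (j-1)(P_j - P_{j+1})$ is non-increasing for $j \geq 2$ and that $P_A$ moves slowly as $A$ shrinks. Iterating drives $a_1$ (and then $a_2$) down to $1$. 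With $a_1 = a_2 = 1$ fixed so that $A = 3$ and $P_A = P_3$, the same strategy applied to $b_1$ (now invoking \cref{fp:third_gap,fp:forth_gap} to bound the gaps $P_j - P_{j+1}$ for $j \geq 4$) drives $b_{\max}$ down to at most $3$.

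The main obstacle is the nonsmooth $-\min\{Q_1, Q'_1\}$ contribution: after a decrement of $a_1$ or $b_1$, the smaller argument of the minimum can switch. I will therefore split each reduction step into the two cases $Q_i \leq Q'_i$ and $Q_i > Q'_i$, and verify nonnegativity of the net change in each, directly from \cref{flip-properties}; the monotonicity of the $P_j$'s guarantees that the switch between cases occurs at most once along the reduction path, so this causes no cascading trouble. A minor bookkeeping concern is that when a reduction equalizes $a_1 = a_2$ (or $b_1 = b_2$), the choice of $i_{\max}$ (resp.\ $i'_{\max}$) becomes ambiguous, but the displayed expression for $nk Z^c_t$ above is symmetric in the indices when the corresponding values coincide, so the choice is immaterial. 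Structurally, this mirrors \cref{lem:max_dc1}, with the bookkeeping of an additional cluster pair and one extra coupling term through $A$ and $B$.
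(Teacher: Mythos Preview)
There is a genuine gap at the very first step. The pairs $(a_i,b_i)$ are tied to the same neighbour $u_i$: relabelling $u_1\leftrightarrow u_2$ swaps $(a_1,b_1)\leftrightarrow(a_2,b_2)$ simultaneously, and swapping the two chains swaps all $a$'s with all $b$'s. You therefore cannot ``by relabeling'' arrange both $a_1\ge a_2$ and $b_1\ge b_2$ at once. A configuration such as $(a_1,b_1)=(1,3)$, $(a_2,b_2)=(2,1)$ admits no such relabelling; there $i_{\max}=2$ while $i'_{\max}=1$, so $Q_1=P_{a_1}$ and $Q_2=P_{a_2}-P_A$, and your displayed identity for $nkZ^c_t$ is the wrong expression for this configuration. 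The iterative shrinking is then being carried out on the wrong objective, and the monotonicity inequalities you invoke from \cref{fp:decreasing,fp:second_gap,fp:third_gap,fp:forth_gap} do not directly address what actually changes when $i_{\max}\neq i'_{\max}$.

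The paper's proof confronts exactly this point. It first takes WLOG $b_1\ge\max\{b_2,a_1,a_2\}$ (one vertex swap plus one chain swap), and then \emph{proves} that among configurations with $a_1>a_2$ and $a_1<a_2$ the latter is at least as large---so the maximising configuration in fact has $i_{\max}\neq i'_{\max}$, the opposite of your assumed convention. Only after this swap step does it write a single closed form for $nkZ^c_t$ and finish by a short case split on the sign of $P_{b_2}-(P_{a_2}-P_A)$, bounding everything by the value $2+4P_3$ attained at $a_1=a_2=1$, $b_1=b_2=3$. Your reduction idea could likely be made to work, but you must first either reproduce this swap argument (to land in a single regime for $i_{\max},i'_{\max}$) or carry two formulas and check the decrement step in both; as written, the plan skips this and so does not cover all configurations.
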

\begin{proof}
    Assume without loss of generality that $b_1 \geq \max\{b_2,a_1,a_2\}$.
    Recall from \cref{def:boundWH}
    \begin{equation*}
        nkZ^c_t = (A - a_{\max} - 1)P_A + (B- b_{\max} -1)P_B + \sum_i \left( a_i Q_i + b_i Q'_i - \min \{Q_i,Q'_i\}\right).
    \end{equation*}
    We first show that we can assume $a_1 \leq a_2$.  Suppose $a_2<a_1$.
    Observe that since $b_1 \geq \max\{a_1,a_2\}$, then we have that $Q'_1 = P_{b_1} - P_{B} \leq \min\{Q_1 = P_{a_1} - P_{A},Q_2  = P_{a_2}\} = P_{a_1} - P_A$ since $ P_{a_1} - P_{A} = \min\{Q_1 = P_{a_1} - P_{A},Q_2  = P_{a_2}\}$ by \cref{fp:decreasing} and $P_{b_1} - P_{B} \leq P_{a_1} - P_{A}$ since $(P_j - P_{j+1}) \leq (P_{j-1} - P_{j})$ which holds by \cref{fp:second_gap,fp:third_gap,fp:forth_gap}.
    Thus, the switching of $a_1$ and $a_2$ can only change the $\min \{Q_2,Q'_2\}$ term in $nk Z^c_t$.
    Moreover, setting $a_2 = \min\{a_1,a_2\}$ can only decrease $\min \{Q_2,Q'_2\}$.
    Therefore, $Z^c_t$ is maximized when $a_2 \geq a_1$. 

    We can assume $b_1 \geq b_2$, $b_1 \geq a_1$, and $a_2 \geq a_1$. 
    Then we can write
    \begin{equation*}
        nkZ^c_t = (A - 2a_2-1)P_{A} + (B-2b_1)P_B + a_1P_{a_1} + a_2 P_{a_2} + (b_1-1)P_{b_1} + b_2 P_{b_2} - \min\{P_{a_2}- P_{A},P_{b_2}).
    \end{equation*}
    Observe that when $a_1 = a_2 = 1$ and $b_1 = b_2 = 3$ then $nkZ^c_t = 2+ 4P_3$.
    We now consider the case where $P_{b_2} \leq P_{a_2} - P_A$ and $P_{b_2} > P_{a_2} - P_A$ and show in each that the maximum is at most $2+ 4P_3$.

    Suppose $P_{b_2} \leq P_{a_2} - P_A$. 
    Then we can write:
    \begin{equation*}
        nkZ^c_t = (A - 2a_2-1)P_{A} + (B-2b_1)P_B + a_1 P_{a_1} + a_2P_{a_2} + (b_1-1)P_{b_1} + (b_2-1) P_{b_2}.
    \end{equation*}
    Observe that $(b_i -1)P_{b_i}$ is maximized when $b_i = 2$ by \cref{fp:decreasing}. 
    Moreover, $(B-2b_1)P_{B}$ is maximized when $b_1 + b_2 \leq 6$ since $P_7 = 0$ by \cref{fp:p1_p2_bound}. 
    Thus, $Z^c_t$ is maximized when $b_i = 3$ for all $i$.
    If $a_1 \geq 2$ then $a_2 \geq 2$, $(A - 2a_2-1)P_{A} \leq P_5$ by \cref{fp:decreasing} and $a_1 P_{a_1} + a_2P_{a_2} \leq 2P_2$ by \cref{fp:decreasing}. 
    The claim holds in this case since $2P_2 + P_5 \leq 2+ 4P_3$ by \cref{fp:p1_p2_bound} and \cref{fp:decreasing}.  

    Now suppose $P_{b_2} > P_{a_2} - P_A$.
    Then we can write:
    \begin{equation*}
        nkZ^c_t = (A - 2a_2)P_{A} + (B-2b_1)P_B + a_1 P_{a_1} + (a_2-1)P_{a_2} + (b_1-1)P_{b_1} + b_2 P_{b_2}.
    \end{equation*}
    Suppose $a_2 \geq 2$.
    Notice that $a_1 P_{a_1} \leq 1$ by \cref{fp:decreasing}
    If $a_1 < a_2$ then $(A - 2a_2)P_{A} + a_1 P_{a_1} + (a_2-1)P_{a_2} \leq 1 -P_{2a_2 +1} + (a_2-1)P_{a_2} \leq 1+ (a_2-1)P_{a_2} \leq 1+ P_2$ by \cref{fp:decreasing}.
    If $a_1 = a_2$ then $(A - 2a_2)P_{A} + a_1 P_{a_1} + (a_2-1)P_{a_2} \leq a_1P_{a_2} + (a_2-1)P_{a_2} \leq 3P_2 \leq 1$ by \cref{fp:decreasing} and \cref{fp:p1_p2_bound}.
    Similarly, we get that $(B-2b_1)P_B + a_1 P_{a_1} +(b_1-1)P_{b_1} + b_2 P_{b_2} \leq 1$.
    The claim then holds since $2 \leq 2 + 4P_3$. 
\end{proof}

We can now prove \cref{lem:extremal_cases}.
Using \cref{lem:max_vector} we will write $\widetilde{Z}^c
_t$ as a linear function in $\degree{0}{t,c}{v^*}$, $\degree{1}{t,c}{v^*}$, and $\degree{\geq 2}{t,c}{v^*}$.

\begin{proof}[Proof of \cref{lem:extremal_cases}]
    Note that the claim is trivially true if $\degree{}{t,c}{v^*} = 1$.
    Suppose $\degree{}{t,c}{v^*} = 2$.
    Let $\{u_1,u_2\} = N_{t,c}(v^*)$.
    Without loss of generality, assume $A \leq B$, and $b_1 \geq b_2$.
    Note that if $u_i \in \unblockedShort{0}{t}$ then $a_i = b_i = 1$ by definition. 
    Likewise, if $u_i \in \singleBlockedShort{1}{t}$ then $\max\{a_i,b_i\} = 2$ and $\min\{a_i,b_i\} = 2$.
    Finally, if $u_i \in \multiBlockedShort{2}{t}$ then $a_i + b_i \geq 4$.
    Recall from \cref{eqn:new_wnmet_bound} that
    \begin{align*}
        nk \widetilde{Z}^c_t &\leq nk\widetilde{Z}^c_t + \degree{0}{t,c}{v^*}\frac{\eta \beta }{\Delta} - \degree{1}{t,c}{v^*}\frac{ \eta \alpha}{\Delta}
    \end{align*}
    and this is an equality when the underlying graph has sufficiently high girth.
    Notice that the only term affected by the value $a_1,a_2,b_1$ and $b_2$ is $nk\widetilde{Z}^c_t$.
    Thus, using \cref{lem:max_vector} it suffices to assume $a_1 = a_2 = 1$ and
    \begin{align}
        b_i = 
        \begin{cases}
            1 & \mbox{if } u_i \in \unblockedShort{t}{v^*} \\
            2 & \mbox{if } u_i \in \singleBlockedShort{t}{v^*} \\
            3 & \mbox{if } u_i \in \multiBlockedShort{t}{v^*} 
        \end{cases}
        \label{eqn:bvalues}
    \end{align}
    for $1 \leq i \leq 2$.  Then by \cref{def:boundWH},
    \begin{align*}
       Z^c_t
        &= (b_2 - 1) P_{b_1 + b_2 + 1} + P_3 + (1-P_2) + (b_1 - 1)(P_{b_1} - P_{b_1 + b_2 + 1}) + 1 + (b_2 - 1)P_{b_2} \\
        &= (b_2 - b_1) P_{b_1 + b_2 + 1}  + 2 - P_2 +P_3 +(b_1 -1)P_{b_1} + (b_2 -1) P_{b_2}.
    \end{align*}
    Thus, 
    \begin{align}
        \label{eqn:WHC_bound_H}
        \widetilde{Z}^c_t &= (b_2 - b_1) P_{b_1 + b_2 + 1}  + 2 - P_2 + P_3 +(b_1 -1)P_{b_1} + (b_2 -1) P_{b_2} + \degree{0}{t,c}{v^*}\frac{\eta \beta}{\Delta} - \degree{1}{t,c}{v^*}\frac{\eta \alpha}{\Delta} \nonumber \\
        &= 2 - P_2 + P_3 +(b_1 -1)P_{b_1} + (b_2 -1) P_{b_2} + \degree{0}{t,c}{v^*}\frac{\eta \beta}{\Delta} - \degree{1}{t,c}{v^*}\frac{\eta \alpha}{\Delta} \quad \qquad \mbox{(since $b_1 \geq b_2$)}\nonumber \\
        &= 2 - P_2 + P_3 + \degree{0}{t,c}{v^*}\frac{\eta \beta}{\Delta} + \degree{1}{t,c}{v^*}\left(P_2 - \frac{\eta \alpha}{\Delta}\right) + 2 \degree{1}{t,c}{v^*}P_3. \qquad \qquad \quad \mbox{(by \cref{eqn:bvalues})} 
    \end{align}
   Note that \cref{eqn:WHC_bound_H} is linear in $\degree{0}{t,c}{v^*}$, $\degree{1}{t,c}{v^*}$, and $\degree{\geq2}{t,c}{v^*}$. 
    It follows that \cref{eqn:WHC_bound_H} is maximized when $\degree{i}{t,c}{v^*} = \degree{}{t,c}{v^*}$ for some $i \in \{1,2,3\}$. 
\end{proof}

\section{Conclusions}

The major open question is to obtain a substantial improvement over \Cref{thrm:ColoringBound} by establishing rapid mixing of the flip dynamics (or any other dynamics) for general graphs when $k>\alpha_1\Delta$ for a constant $\alpha_1<1.8$.  If we restrict attention to triangle-free graphs the best known rapid mixing result holds for $k>\alpha_2\Delta$ where $\alpha_2\approx 1.763...$ is the solution of $\alpha_2=\exp(1/\alpha_2)$ using spectral independence~\cite{FGYZ21,CGSV21} (or assuming girth $\geq 5$ using burn-in and local uniformity properties~\cite{DFHV}).  Can we utilize triangle-freeness to achieve similar bounds using a modified metric as in this paper?  It would be interesting to see if the threshold $\alpha_2$ is only an obstacle for certain proof techniques (which utilize properties of the stationary distribution) or if it corresponds to the onset of worst-case mixing obstacles for local Markov chains on locally dense graphs.

\bibliographystyle{alpha}
\bibliography{main}

\newcommand{\etalchar}[1]{$^{#1}$}
\begin{thebibliography}{CLMM23}

\bibitem[ALO20]{ALO}
Nima Anari, Kuikui Liu, and Shayan {Oveis Gharan}.
\newblock Spectral independence in high-dimensional expanders and applications to the hardcore model.
\newblock In {\em Proceedings of the 61st Annual IEEE Symposium on Foundations of Computer Science (FOCS)}, pages 1319--1330, 2020.

\bibitem[BCC{\etalchar{+}}22]{BCCPSV22}
Antonio Blanca, Pietro Caputo, Zongchen Chen, Daniel Parisi, Daniel {\v{S}}tefankovi\v{c}, and Eric Vigoda.
\newblock On {M}ixing of {M}arkov {C}hains: {C}oupling, {S}pectral {I}ndependence, and {E}ntropy {F}actorization.
\newblock In {\em Proceedings of the 33rd Annual {ACM-SIAM} Symposium on Discrete Algorithms (SODA)}, pages 3670--3692, 2022.

\bibitem[BD97]{BubleyDyer}
Russ Bubley and Martin~E. Dyer.
\newblock Path coupling: a technique for proving rapid mixing in {M}arkov chains.
\newblock In {\em Proceedings of the 38th Annual IEEE Symposium on Foundations of Computer Science (FOCS)}, pages 223--231, 1997.

\bibitem[CDM{\etalchar{+}}19]{CDMPP19}
Sitan Chen, Michelle Delcourt, Ankur Moitra, Guillem Perarnau, and Luke Postle.
\newblock Improved bounds for randomly sampling colorings via linear programming.
\newblock In {\em Proceedings of the 30th Annual ACM-SIAM Symposium on Discrete Algorithms (SODA)}, pages 2216--2234, 2019.

\bibitem[CFG{\etalchar{+}}24]{CFGZZ}
Xiaoyu Chen, Weiming Feng, Heng Guo, Xinyuan Zhang, and Zongrui Zou.
\newblock Deterministic counting from coupling independence.
\newblock {\em Available from arXiv at: \href{https://arxiv.org/abs/2410.23225}{https://arxiv.org/abs/2410.23225}}, 2024.

\bibitem[CG{\v{S}}V21]{CGSV21}
Zongchen Chen, Andreas Galanis, Daniel {\v{S}}tefankovi\v{c}, and Eric Vigoda.
\newblock Rapid mixing for colorings via spectral independence.
\newblock In {\em Proceedings of the 32nd Annual ACM-SIAM Symposium on Discrete Algorithms (SODA)}, pages 1548--1557, 2021.

\bibitem[CLMM23]{CLMM}
Zongchen Chen, Kuikui Liu, Nitya Mani, and Ankur Moitra.
\newblock Strong spatial mixing for colorings on trees and its algorithmic applications.
\newblock In {\em Proceedings of the 64th Annual {IEEE} Symposium on Foundations of Computer Science (FOCS)}, pages 810--845, 2023.

\bibitem[CLV21]{CLV21}
Zongchen {Chen}, Kuikui {Liu}, and Eric {Vigoda}.
\newblock Optimal mixing of {G}lauber dynamics: Entropy factorization via high-dimensional expansion.
\newblock In {\em Proceedings of the 53rd Annual ACM Symposium on Theory of Computing (STOC)}, pages 1537--1550, 2021.

\bibitem[DFHV13]{DFHV}
Martin~E. Dyer, Alan~M. Frieze, Thomas~P. Hayes, and Eric Vigoda.
\newblock Randomly coloring constant degree graphs.
\newblock {\em Random Structures \& Algorithms}, 43:181--200, 2013.

\bibitem[DG99]{DyerGreenhill}
Martin Dyer and Catherine Greenhill.
\newblock Random walks on combinatorial objects.
\newblock {\em Surveys in Combinatorics}, page 101–136, 1999.

\bibitem[DJV02]{DJV}
Martin Dyer, Mark Jerrum, and Eric Vigoda.
\newblock Rapidly mixing {M}arkov chains for dismantleable constraint graphs.
\newblock In {\em Randomization and Approximation Techniques in Computer Science (RANDOM)}, pages 68--77, 2002.

\bibitem[EHSV18]{EHSV}
Charilaos Efthymiou, Thomas~P. Hayes, Daniel Stefankovic, and Eric Vigoda.
\newblock Sampling random colorings of sparse random graphs.
\newblock In {\em Proceedings of the 29th Annual {ACM-SIAM} Symposium on Discrete Algorithms (SODA)}, pages 1759--1771, 2018.

\bibitem[FGYZ21]{FGYZ21}
Weiming Feng, Heng Guo, Yitong Yin, and Chihao Zhang.
\newblock Rapid mixing from spectral independence beyond the {B}oolean domain.
\newblock In {\em Proceedings of the 32nd Annual ACM-SIAM Symposium on Discrete Algorithms (SODA)}, pages 1558--1577, 2021.

\bibitem[HS07]{HayesSinclair}
Thomas~P. Hayes and Alistair Sinclair.
\newblock A general lower bound for mixing of single-site dynamics on graphs.
\newblock {\em The Annals of Applied Probability}, 17(3):931--952, 2007.

\bibitem[Hub15]{Huber}
Mark Huber.
\newblock Approximation algorithms for the normalizing constant of {G}ibbs distributions.
\newblock {\em The Annals of Applied Probability}, 25(2):974--985, 2015.

\bibitem[HVV15]{HVV}
Thomas~P. Hayes, Juan~Carlos Vera, and Eric Vigoda.
\newblock Randomly coloring planar graphs with fewer colors than the maximum degree.
\newblock {\em Random Structures \& Algorithms}, 47(4):731--759, 2015.

\bibitem[Jer95]{Jerrum}
Mark Jerrum.
\newblock A very simple algorithm for estimating the number of $k$-colorings of a low-degree graph.
\newblock {\em Random Structures \& Algorithms}, 7(2):157--165, 1995.

\bibitem[Jer03]{Jerrum:notes}
Mark Jerrum.
\newblock {\em Counting, {S}ampling and {I}ntegrating: {A}lgorithms and {C}omplexity}.
\newblock Lectures in Mathematics ETH Zürich. Birkh\"{a}user Verlag, second edition, 2003.

\bibitem[JPV22]{JPV}
Vishesh Jain, Huy~Tuan Pham, and Thuy-Duong Vuong.
\newblock Spectral independence, coupling, and the spectral gap of the {G}lauber dynamics.
\newblock {\em Information Processing Letters}, 177:106268, 2022.

\bibitem[KF09]{KFbook}
Daphne Koller and Nir Friedman.
\newblock {\em Probabilistic Graphical Models: Principles and Techniques}.
\newblock MIT Press, 2009.

\bibitem[Kol18]{Kolmogorov}
Vladimir Kolmogorov.
\newblock A faster approximation algorithm for the {G}ibbs partition function.
\newblock In {\em Proceedings of the 31st Annual Conference on Learning Theory (COLT)}, volume~75, pages 1--22, 2018.

\bibitem[Liu21]{Liu21}
Kuikui Liu.
\newblock From coupling to spectral independence and blackbox comparison with the down-up walk.
\newblock In {\em Randomization and Approximation Techniques in Computer Science (RANDOM)}, pages 32:1--32:21, 2021.

\bibitem[LP17]{LPW}
David~A. Levin and Yuval Peres.
\newblock {\em Markov chains and mixing times}.
\newblock American Mathematical Society, 2017.

\bibitem[LSS19]{LSS}
Jingcheng Liu, Alistair Sinclair, and Piyush Srivastava.
\newblock Correlation decay and partition function zeros: {A}lgorithms and phase transitions.
\newblock In {\em Proceedings of the 60th Annual IEEE Symposium on Foundations of Computer Science (FOCS)}, pages 1380--1404, 2019.

\bibitem[MSW04]{MSW04}
Fabio Martinelli, Alistair Sinclair, and Dror Weitz.
\newblock Fast mixing for independent sets, colorings and other models on trees.
\newblock In {\em Proceedings of the 15th Annual ACM-SIAM Symposium on Discrete Algorithms (SODA)}, pages 456--465, 2004.

\bibitem[Mur12]{Murphy}
Kevin~P. Murphy.
\newblock {\em Machine Learning: {A} Probabilistic Perspective (Adaptive Computation and Machine Learning)}.
\newblock MIT Press, 2012.

\bibitem[SS97]{SalasSokal}
Jes\'{u}s Salas and Alan~D Sokal.
\newblock Absence of phase transition for antiferromagnetic {P}otts models via the {D}obrushin uniqueness theorem.
\newblock {\em Journal of Statistical Physics}, 86(3-4):551--579, 1997.

\bibitem[{\v{S}}V22]{SV:notes}
Daniel {\v{S}}tefankovi\v{c} and Eric Vigoda.
\newblock Lecture notes on spectral independence and bases of a matroid: Local-to-global and trickle-down from a {M}arkov chain perspective.
\newblock {\em Available from arXiv at: \href{https://arxiv.org/abs/2307.13826}{https://arxiv.org/abs/2307.13826}}, 2022.

\bibitem[{\v{S}}VV09]{SVV:annealing}
Daniel {\v{S}}tefankovi\v{c}, Santosh Vempala, and Eric Vigoda.
\newblock Adaptive simulated annealing: A near-optimal connection between sampling and counting.
\newblock {\em Journal of the ACM}, 56(3):1--36, 2009.

\bibitem[Vig99]{Vigoda}
Eric Vigoda.
\newblock Improved bounds for sampling colorings.
\newblock In {\em Proceedings of the 40th Annual IEEE Symposium on Foundations of Computer Science (FOCS)}, pages 51--59, 1999.

\end{thebibliography}

\end{document}